\newtheorem{theorem}{Theorem} [section]
\newtheorem{proposition}[theorem]{Proposition}	
\newtheorem{corollary}[theorem]{Corollary}	
\newtheorem{lemma}[theorem]{Lemma}		
\newtheorem{assumptions}[theorem]{Assumptions}
\newtheorem{remark}[theorem]{Remark}
\newtheorem{examples}[theorem]{Examples}
\theoremstyle{definition}
\newcommand{\C}{\mathbb{C}}
\newcommand{\R}{\mathbb{R}}
\newcommand{\Ai}{{\rm Ai}}
\def\XXint#1#2#3{{\setbox0=\hbox{$#1{#2#3}{\int}$}
\vcenter{\hbox{$#2#3$}}\kern-.5\wd0}}
\tikzset{->-/.style={decoration={
				markings,
				mark=at position #1 with {\arrow{latex}}},postaction={decorate}}}
	\tikzset{-<-/.style={decoration={
				markings,
				mark=at position #1 with {\arrowreversed{latex}}},postaction={decorate}}}
\tikzset{cross/.style={cross out, draw, 
         minimum size=2*(#1-\pgflinewidth), 
         inner sep=0pt, outer sep=0pt}}
\numberwithin{equation}{section}
\newcommand{\be}{\begin{equation}}
\newcommand{\ee}{\end{equation}}
\newcommand{\wt}{\widetilde}
\newcommand{\s}{\sigma}
\def\Tr{{\rm Tr}}
    \def\e{{\epsilon}}
    \def\Im{{\rm Im \,}}
    \def\Ai{{\rm Ai \,}}
    \def\Res{{\rm Res}}
    \def\P2n{{\rm P}_{{\rm II}}^{(n)}}
    \def\R{\mathbb{R}}
    \def\K{\mathcal{K}}
\def \O {\mathcal{O}}
\def \pa {\partial}
\def\Im{\mathrm{Im}\,}
\def\s{ {\sigma}}
\def\i{ {\mathrm{i}}}
\def\e{{\mathrm{e}}}
\def\1{\mathbf{1}}
\def\C{\mathbb{C}}
\def\R{\mathbb{R}}
\def\g{\gamma} 
\def\d{\mathrm{d}}
\def\z{\zeta}
\def\wt{\widetilde}
\def\G{\Gamma}
\def\Ai{\mathrm{Ai}}
\def\Airy{{\rm Ai}}
\def\Bessel{{\rm Be}}
\def\K{{\mathbb K}}
\def\L{{\mathbb L}}
\def\step{{\sigma_\gamma}}
\begin{document}
\title{Airy kernel determinant solutions to the KdV equation
and integro-differential Painlev\'e equations}
\author[1]{Mattia Cafasso}
\author[2]{Tom Claeys}
\author[3]{Giulio Ruzza}
\renewcommand\Affilfont{\small}
\affil[1]{\textit{LAREMA, UMR 6093, UNIV Angers, CNRS, SFR Math-STIC, France;} \texttt{cafasso@math.univ-angers.fr}}
\affil[2,3]{\textit{Institut de Recherche en Math\'ematique et Physique,  UCLouvain, Chemin du Cyclotron 2, B-1348 Louvain-la-Neuve, Belgium;} \texttt{tom.claeys@uclouvain.be}, \texttt{giulio.ruzza@uclouvain.be}}
\maketitle

\begin{abstract}
We study a family of unbounded solutions to the Korteweg-de Vries equation which can be constructed as log-derivatives of deformed Airy kernel Fredholm determinants, and which are connected to an integro-differential version of the second Painlev\'e equation.
The initial data of the Korteweg-de Vries solutions are well-defined for $x>0$, but not for $x<0$, where the solutions behave like $\frac{x}{2t}$ as $t\to 0$, and hence would be well-defined as solutions of the cylindrical Korteweg-de Vries equation.
We provide uniform asymptotics in $x$ as $t\to 0$; for $x>0$ they involve an integro-differential analogue of the Painlev\'e V equation. 
A special case of our results yields improved estimates for the {tails} of the narrow wedge solution to the Kardar-Parisi-Zhang equation.
\end{abstract}

{
\hypersetup{linkcolor=black}
\tableofcontents
}

\section{Introduction and statement of results}

The Airy kernel is defined as
\begin{equation}\label{Airykernel}
K^{\Ai}\left(u,v\right)=\frac{\Ai\left(u\right)\Ai'\left(v\right)-\Ai'\left(u\right)\Ai\left(v\right)}{u-v},
\end{equation}
where $\Ai$ is the Airy function.
It is the correlation kernel characterizing the Airy point process, which is a determinantal point process on the real line, whose random point configurations almost surely are infinite and have a largest particle.
Denoting 
 $\zeta_1 > \zeta_2 > \ldots > \zeta_k > \ldots$ 
for the points in the configuration, averages of multiplicative statistics are Fredholm determinants: for a large class of functions $\phi$, we have \cite{Johansson,L1,L2,L3,Soshnikov}
\begin{equation}\label{AiryFredholm}
\mathbb E_\Ai\left[\prod_{j=1}^\infty\left(1-\phi\left(\zeta_j\right)\right)\right]=\det\left(1- \K_\phi^{\Ai}\right),
\end{equation}
where the right hand side is the Fredholm determinant of the identity operator minus the integral operator acting on $L^2\left(\mathbb R\right)$ with kernel $\sqrt{\phi\left(u\right)}K^\Ai\left(u,v\right)\sqrt{\phi\left(v\right)}$,
\begin{equation}\label{def:Fredholm}
	\det\left(1-\K_\phi^{\Ai}\right) = \sum_{k = 0}^\infty \frac{\left(-1\right)^k}{k!} \int_{\mathbb R^k} \det\left(\phi\left(u_i\right) K^{\Ai}\left(u_i,u_j\right)\right)_{i,j = 1}^k \d u_1\ldots \d u_k.
\end{equation}
In order to gain some intuition about the behavior of \eqref{AiryFredholm}, it is worth mentioning that the average number of particles larger than $x$ in the Airy point process is equal to $\int_x^{+\infty}K^\Ai(u,u)\d u$, and by the properties of the Airy function this integral behaves like $\frac{2}{3\pi}|x|^{3/2}$ as $x\to -\infty$, see e.g. \cite{Soshnikov}.

In this paper, we will study a specific type of averages of the form \eqref{AiryFredholm}. Given a piecewise smooth weakly increasing function $\sigma : \mathbb R \longrightarrow [0,1]$ satisfying Assumptions \ref{assumptions} below, we define for any $x \in \mathbb R$ and $t > 0$, 
\begin{equation}\label{def:Q}
	Q_\sigma\left(x,t\right) := \mathbb E_\Ai\left[\prod_{j=1}^\infty\left(1-\sigma\left(t^{-2/3}\zeta_j+x/t\right)\right)\right]=\det\left(1- \K^\Ai_{\sigma, x, t}\right),
\end{equation}
where $\K^\Ai_{\sigma,x,t}$ is the integral operator with kernel
\begin{equation}\label{def:kernel}
K_{\sigma, x, t}\left(u,v\right)= \sqrt{\sigma\left(t^{-2/3}u+x/t\right)}K^\Ai\left(u,v\right) \sqrt{\sigma\left(t^{-2/3}v+x/t\right)}.
\end{equation}
We will show that $u(x,t)=\partial_x^2\log Q_\sigma(x,t)+\frac{x}{2t}$  solves the Korteweg-de Vries (KdV) equation
\begin{equation*}
 u_t + 2 u u_x + \frac{1}6 u_{xxx} = 0,
\end{equation*}
and we will study the small $t$ asymptotics of these solutions. Moreover, a Riemann-Hilbert (RH) characterization of this family of KdV solutions will reveal a connection with the integro-differential Painlev\'e II equations introduced by Amir, Corwin, and Quastel in \cite{AmirCorwinQuastel}.

The function $\sigma$ needs to satisfy the following assumptions:

\begin{assumptions}\label{assumptions}\hfill
\begin{enumerate}
\item $\sigma:\mathbb R\to[0,1]$ is non-decreasing and it is $C^\infty$ except possibly at a finite number of points $r_1, \ldots, r_k$; we denote $\gamma=\lim_{r\to +\infty}\s(r)\in[0,1]$,
\item the left and right limits of $\sigma$ and of all of its derivatives exist at the points $r_1,\ldots, r_k$ but are in general not the same,
\item the function $r\mapsto r^2 \sigma(r)$ is in $L^2(\mathbb R^-, \d r)$.
\end{enumerate}
\end{assumptions}

\begin{remark}\label{remark:traceclassandpositivity}
Since $\sigma(r)\leq 1$, the third condition
is nothing more than a decay condition for $\sigma(r)$ as $r\to -\infty$, satisfied if, for some $\epsilon>0$, $\s(r)=\mathcal O\left(|r|^{-5/2-\epsilon}\right)$ as $r\to -\infty$. 
This condition implies in particular that for any
 $x\in\mathbb R$, $t>0$, $\sigma\left(t^{-2/3}u+x/t\right)K^\Ai\left(u,u\right)\in L^1\left(\mathbb R,\d u\right)$ and hence that the (non-negative) operator $\K^\Ai_{\sigma,x,t}$ is trace-class, such that $Q_\sigma\left(x,t\right)$ is defined for any $x\in\mathbb R$ and $t>0$. Indeed, this follows from the inequalities
\begin{align*}\left\|\sigma\left(t^{-2/3}u+x/t\right)K^\Ai\left(u,u\right)\right\|_{L^1(\mathbb R, \d u)}
&\leq c\left(\left\|\sigma\left(t^{-2/3}u+x/t\right)(|u|+1)^{1/2}\right\|_{L^1(\mathbb R^-, \d u)}+1\right) 
\\ &\leq \widehat c(x,t)\left(\left\|r^2\sigma(r)\right\|_{L^2(\mathbb R^-, \d r)}.\left\|(|u|+1)^{-3/2}\right\|_{L^2(\mathbb R^-, \d u)}+1\right),\end{align*}
for large enough $c, \widehat c(x,t)>0$,
where we used first the $u\to \pm\infty$ asymptotics of $K^{\rm Ai}(u,u)$ and then the Cauchy-Schwarz inequality.

Moreover, assumptions \ref{assumptions} imply the strict inequality $Q_\sigma\left(x,t\right)>0$ for any $x\in\mathbb R$, $t>0$.
To see this, we only need to show that $1$ is not an eigenvalue of $\K^\Ai_{\sigma,x,t}$.
Indeed, we claim that if $f\in L^2(\mathbb R)$ satisfies $\K^\Ai_{\sigma,x,t}f=f$ then $f=0$.
To this end, let us denote $\mathbb M$ for the multiplication operator on $L^2(\R)$ given by $\bigl(\mathbb M \phi\bigr)(u)=\sqrt{\sigma(t^{-2/3}u+x/t)}\phi(u)$, such that $\K^\Ai_{\sigma,x,t}=\mathbb M\mathbb K^\Ai\mathbb M$ where $\mathbb K^\Ai$ is the integral operator with kernel $K^\Ai(u,v)$.
Coming back to our claim, let us define $g=\mathbb K^\Ai\mathbb Mf$.
By $\mathbb K^\Ai_{\sigma,x,t}f=f$ and by standard properties of the Airy function, $g$ is (the restriction to the real line of) an entire function and $f=\mathbb M g$.
On the other hand, it is well-known that $\mathbb K^\Ai$ is an orthogonal projector and therefore we get the following chain of inequalities
\be
\|g\|=\|\mathbb K^\Ai\mathbb M f\|\leq \|\mathbb M f\|\leq\|f\|=\|\mathbb M g\|\leq \|g\|
\ee
which collapses to a chain of equalities, were $\|\cdot\|$ is the $L^2(\R)$-norm and we use that $\sigma$ takes values in $[0,1]$.
Since $\sigma$ decays at infinity and $g$ is entire, the equality $\|\mathbb M g\|=\|g\|$ forces $g$ to vanish identically, hence $f=\mathbb Mg=0$ as claimed.
\end{remark}

Our first result is the following relation between the multiplicative statistics $Q_\sigma\left(x,t\right)$ and the KdV equation.
\begin{theorem}\label{theorem:main}
Let $\sigma$ satisfy Assumptions \ref{assumptions}, and let $Q_\sigma$ be as in \eqref{def:Q}.
Then, for any $x\in\mathbb R$ and $t>0$, the function
\begin{equation}\label{def:uQ}
	u_\sigma\left(x,t\right):=\partial_x^2\log Q_\sigma\left(x,t\right)+\frac{x}{2t}
\end{equation}
solves the KdV equation
\begin{equation}
\label{eq:KdV}
\partial_t u_\sigma + 2 u_\sigma \partial_x u_\sigma + \frac{1}6 \partial_x^3 u_\sigma = 0.
\end{equation}
Moreover, it can be expressed as 
	\begin{equation}\label{eq:u2}
		u_\sigma\left(x,t\right)=-\frac{1}{t}\int_{\mathbb R} \phi_\sigma^2\left(r;x,t\right)\d \s(r)+\frac{x}{2t},
	\end{equation}
where $\phi_\sigma$ solves the Schr\"odinger equation with potential $2u_\sigma$,
\begin{equation}\label{eq:Schrodinger}
\partial_x^2\phi_\sigma\left(z;x,t\right)=\left(z - 2u_\sigma\left(x,t\right)\right)\phi_\sigma\left(z;x,t\right),\qquad z\in\R,\ x\in\R,\ t>0,
\end{equation}
the evolution equation
\begin{equation}\label{eq:eveq}
	\partial_t \phi_\sigma\left(z;x,t\right) = -\left(\frac{2}3 z + \frac{2}3 u_\sigma \left(x,t\right) \right) \partial_x \phi_\sigma\left(z;x,t\right) + \frac{1}3 \partial_x u_\sigma(x,t)\phi_\sigma\left(z;x,t\right)
\end{equation}
and has asymptotic behavior 
\begin{equation}\label{eqthm:phiAi}
 \phi_\sigma\left(z;x,t\right)\sim t^{1/6}\Ai\left(t^{2/3}z-xt^{-1/3}\right),\qquad\mbox{as $z\to \infty$ with $|\arg z|<\pi-\delta$, for any $\delta>0$,}
\end{equation}
pointwise in $x\in\R$ and $t>0$.
\end{theorem}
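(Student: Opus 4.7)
The approach is the classical Riemann--Hilbert (RH) formalism for integrable Fredholm determinants of Its--Izergin--Korepin--Slavnov type. The Airy kernel $K^\Ai(u,v)$ has integrable structure of rank two built from $\Ai$ and $\Ai'$, and the multiplicative deformation by $\s$ preserves this structure. I would associate to $\K^\Ai_{\s,x,t}$ a $2\times 2$ RH problem for a matrix $Y(z;x,t)$ with jump on $\R$ prescribed by $\s$ and by the Airy functions evaluated at $t^{2/3}r-xt^{-1/3}$ (after the change of variable $u = t^{2/3}r - xt^{-1/3}$), normalized by $Y(z;x,t)\to I$ as $z\to\infty$. The standard machinery then relates the resolvent of $\K^\Ai_{\s,x,t}$ to $Y$ and provides an identity for $\partial_x^2\log Q_\s(x,t)$ in terms of the first coefficients in the expansion $Y(z;x,t) = I + z^{-1}Y_1(x,t) + z^{-2}Y_2(x,t)+\cdots$ at infinity.

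\textbf{Lax pair.} I would define the vector $\Phi(z;x,t) := t^{1/6}Y(z;x,t)\begin{pmatrix}\Ai(t^{2/3}z-xt^{-1/3})\\ t^{-1/3}\Ai'(t^{2/3}z-xt^{-1/3})\end{pmatrix}$ and take $\phi_\s(z;x,t)$ to be its first entry, so that \eqref{eqthm:phiAi} holds immediately by the normalization of $Y$ at infinity. Because the jumps of $Y$ on $\R$ are independent of $z$, $x$, $t$ once the Airy exponentials are stripped off, the logarithmic derivatives $\partial_zY\cdot Y^{-1}$, $\partial_xY\cdot Y^{-1}$, $\partial_tY\cdot Y^{-1}$ extend to entire functions of $z$, hence to polynomials in $z$ by the behavior at infinity. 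Reading off the coefficients of these polynomials from $Y_1$ and $Y_2$ yields explicit Lax matrices, and therefore both the Schr\"odinger equation \eqref{eq:Schrodinger} and the time evolution \eqref{eq:eveq} for $\phi_\s$, with $u_\s(x,t)$ expressed via an entry of $Y_1$ together with the constant shift $x/(2t)$ coming from the change of variables.

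\textbf{KdV equation and trace formula.} The KdV equation \eqref{eq:KdV} then follows from the compatibility of \eqref{eq:Schrodinger} and \eqref{eq:eveq}, which is the classical Lax-pair computation; one has to check additionally that $x/(2t)$ is itself a solution of \eqref{eq:KdV} and that the cross-terms with $\partial_x^2\log Q_\s$ cancel. The representation \eqref{eq:u2} is a trace identity: starting from $\partial\log\det(1-\K^\Ai_{\s,x,t}) = -\tr\left((1-\K^\Ai_{\s,x,t})^{-1}\partial\K^\Ai_{\s,x,t}\right)$ and using the integrable-kernel form of the resolvent, $\partial_x^2\log Q_\s$ can be written as an integral against $\d\s(r)$ of the resolvent kernel at coinciding arguments, which is in turn identified with $\phi_\s(r;x,t)^2$ via the Airy-dressing in the definition of $\phi_\s$. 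Combining with the shift gives \eqref{eq:u2}.

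\textbf{Main obstacle.} The principal technical point is the rigorous setup of the RH problem when $\s$ is only piecewise smooth: at the jump points $r_1,\ldots,r_k$ the jump matrix has local singularities which must be shown to be compatible with the $L^2$-framework used for existence and uniqueness of $Y$. Assumption \ref{assumptions}(3) is what guarantees that $\K^\Ai_{\s,x,t}$ is trace-class, and it is also crucial for justifying differentiation in $x$ and $t$ of the Fredholm determinant and for controlling the expansion of $Y$ at $z=\infty$ to the orders required by the trace formula and the evolution equation. Once these analytic issues are settled, the algebraic derivations of the Schr\"odinger equation, the time evolution, the KdV equation and the trace formula are all instances of standard integrable-systems computations.
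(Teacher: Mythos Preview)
Your outline follows the same architecture as the paper: IIKS integrable-operator formalism, undressing by the Airy RH problem to get a matrix whose jump is independent of $x,t$, Lax pair from the logarithmic $x$- and $t$-derivatives, Schr\"odinger and evolution equations for the $(2,1)$-type entry, KdV from compatibility, and a trace identity for $\partial_x^2\log Q_\sigma$. Two corrections and one genuine gap:

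\textbf{Minor inaccuracies.} After stripping off the Airy factors the jump matrix becomes $\begin{pmatrix}1&1-\sigma(z)\\0&1\end{pmatrix}$, which is independent of $x,t$ but \emph{does} depend on $z$. Hence $\partial_z Y\cdot Y^{-1}$ is not entire, and there is no isomonodromic $z$-equation here; fortunately you never use one. Also, the sentence about checking that $x/(2t)$ solves KdV and that ``cross-terms cancel'' is misplaced: KdV is nonlinear and there is no superposition argument to be made. Compatibility of \eqref{eq:Schrodinger}--\eqref{eq:eveq} produces KdV for the potential $u$ appearing in the Lax pair, full stop.

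\textbf{The missing link.} The Lax-pair computation produces KdV, the Schr\"odinger equation, and the evolution equation for a function $u$ built from the sub-leading coefficient of the RH solution (in the paper's notation, $u=\partial_x p=-2q-p^2$). Separately, your trace formula gives $\partial_x^2\log Q_\sigma=-t^{-1}\int_\R\phi_\sigma^2\,\d\sigma$. What is \emph{not} yet established by either of these is that $u$ coincides with the theorem's $u_\sigma:=\partial_x^2\log Q_\sigma+x/(2t)$. The paper closes this loop by a second, independent identity (Proposition~\ref{prop:proofu2}): a residue computation on $(\partial_z\Psi)\Psi^{-1}$, using the jump $\Psi_+=\Psi_-\begin{pmatrix}1&1-\sigma\\0&1\end{pmatrix}$ and the local logarithmic singularities at the discontinuities $r_j$, yields
\[
\int_\R\phi_\sigma^2(r;x,t)\,\d\sigma(r)=\tfrac{x}{2}-t\,u(x,t).
\]
Combining this with the trace formula gives $u=u_\sigma$. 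Your proposal omits this step; without it, you have KdV for $u$ and the integral representation for $u_\sigma$, but no proof that they are the same function.

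\textbf{On the technical obstacle.} You correctly flag that the piecewise-smooth, non-analytic $\sigma$ prevents contour deformation. The paper's resolution is to work entirely on $\R$ and to justify $\partial_x,\partial_t$ of the RH solution via the $L^2$ operator formulation: one differentiates the singular integral equation $\widehat Y_-=I+C_{J_{\widehat Y}}(\widehat Y_-)$, using that $J_{\widehat Y}-I$ and $\partial J_{\widehat Y}$ lie in the right $L^p$ spaces (this is where Assumption~\ref{assumptions}(3) enters), and that $1-C_{J_{\widehat Y}}$ is invertible because $Q_\sigma>0$. This is exactly the kind of argument you allude to but do not spell out.
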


The integral with respect to $\d\s(r)$ in \eqref{eq:u2} is to be understood as a Riemann-Stieltjes integral. Equivalently, $\d\s(r)$ denotes the measure $\s'(r)\d r+\sum_{j=1}^km_j\delta_{r_j}(\d r)$, where $r_1,\dots,r_k$ are the only singularities of $\s$, $m_j:=\lim_{\epsilon\to 0_+}\left(\s(r_j+\epsilon)-\s(r_j-\epsilon)\right)$ and $\delta_r$ denotes the Dirac delta measure supported at $r\in\R$.

As an immediate consequence of Theorem \ref{theorem:main}, we obtain the following corollary by combining \eqref{def:uQ}, \eqref{eq:u2}, and \eqref{eq:Schrodinger}; modulo a rescaling of the variable, this corollary is equivalent to equations (323) and (324) in Proposition 40 of \cite{AmirCorwinQuastel}.

\begin{corollary}
Let $\sigma$ satisfy Assumptions \ref{assumptions}, and let $Q_\sigma$ be as in \eqref{def:Q}.
Then
	\begin{equation*}
		\partial_x^2 \log Q_\sigma\left(x,t\right) = -\frac{1}{t}\int_\mathbb R \phi_\sigma^2\left(r;x,t\right)\d\s(r),
	\end{equation*}
where $\phi_\sigma$ satisfies the integro-differential Painlev\'e II equation
	\begin{equation}\label{integrodiffPII}
	\partial_x^2\phi_\sigma\left(z;x,t\right) = \left(z - \frac{x}t + \frac{2}t \int_{\mathbb R} \phi_\sigma^2\left(r;x,t\right) \d\s(r) \right) \phi_\sigma\left(z;x,t\right).
	\end{equation}
\end{corollary}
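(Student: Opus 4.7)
The plan is essentially to chain together the three equations \eqref{def:uQ}, \eqref{eq:u2}, and \eqref{eq:Schrodinger} from Theorem \ref{theorem:main}; no new ingredient is needed, which is precisely why the authors call this an \emph{immediate consequence}.

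First, I would equate the two expressions for $u_\sigma(x,t)$ given by \eqref{def:uQ} and \eqref{eq:u2}, namely
\begin{equation*}
\partial_x^2\log Q_\sigma(x,t) + \frac{x}{2t} = -\frac{1}{t}\int_{\mathbb R}\phi_\sigma^2(r;x,t)\,\d\s(r) + \frac{x}{2t}.
\end{equation*}
The additive terms $\frac{x}{2t}$ cancel, yielding the first identity
\begin{equation*}
\partial_x^2\log Q_\sigma(x,t) = -\frac{1}{t}\int_{\mathbb R}\phi_\sigma^2(r;x,t)\,\d\s(r).
\end{equation*}

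Second, I would use \eqref{eq:u2} to solve for $2u_\sigma(x,t)$ as
\begin{equation*}
2u_\sigma(x,t) = -\frac{2}{t}\int_{\mathbb R}\phi_\sigma^2(r;x,t)\,\d\s(r) + \frac{x}{t},
\end{equation*}
and substitute this directly into the Schr\"odinger equation \eqref{eq:Schrodinger}. The coefficient $z - 2u_\sigma(x,t)$ then becomes $z - \frac{x}{t} + \frac{2}{t}\int_{\mathbb R}\phi_\sigma^2(r;x,t)\,\d\s(r)$, producing the integro-differential Painlev\'e II equation \eqref{integrodiffPII}.

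There is no real obstacle in this argument; the substantive work has already been done in Theorem \ref{theorem:main}, where the existence of $\phi_\sigma$ with the stated properties and the identity \eqref{eq:u2} are established. The only care needed is notational: one must verify that the same function $\phi_\sigma(z;x,t)$ appearing in \eqref{eq:u2} (evaluated at $z=r$ and integrated against $\d\s$) and in \eqref{eq:Schrodinger} (as a function of the spectral parameter $z$) is indeed the one object produced by Theorem \ref{theorem:main}, which it is by construction.
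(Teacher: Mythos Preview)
Your proposal is correct and matches the paper's own approach exactly: the authors state that the corollary is obtained ``by combining \eqref{def:uQ}, \eqref{eq:u2}, and \eqref{eq:Schrodinger}'', which is precisely the two-step substitution you describe. No additional ideas are needed.
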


\begin{remark}
In addition to the integro-differential Painlev\'e II equation above, one can also prove that the rescaled wave function $\hat\phi_\sigma(z;t_1,t_3)$ defined by
\begin{equation}\label{eq:mKdVchangevariables}
	\phi_\sigma(z;x,t) = t^{-1/2}\hat\phi_\sigma(z;t_1,t_3), \quad \text{with} \quad t_1 = -xt^{-1} \;\; \text{and} \;\; t_3 = \frac{1}3t^{-2}
\end{equation}
satisfies the integro-differential modified KdV equation\footnote{For simplicity, in the following equations of this remark we omit the dependence of $\hat\phi_\sigma(z;t_1,t_3)$ on the parameters $t_1$ and $t_3$.}
\begin{equation}
\label{eq:IDmKdV}
\partial_{t_3}\hat\phi_\sigma(z) = -\partial_{t_1}^3\hat\phi_\sigma(z) + 3 \partial_{t_1}\hat\phi_\sigma(z)\int_{\mathbb R}\hat\phi_\sigma^2(r)\mathrm{d}\sigma(r) + 3\hat\phi_\sigma(z)\int_{\mathbb R}\hat\phi_\sigma(r)\partial_{t_1}\hat\phi_\sigma(r)\mathrm{d}\sigma(r). 
\end{equation}
This equation has been recently found in \cite{BothnerCafassoTarricone} (see equation (25) in loc. cit.) as the first member of a whole hierarchy associated to a class of operator valued Riemann-Hilbert problems. 
In order to prove \eqref{eq:IDmKdV} within the setting of this paper, it is enough to start with equation \eqref{eq:eveq}, use the $x$-derivative of the equation \eqref{eq:Schrodinger} to eliminate the term in $z$ and finally use \eqref{eq:u2} (and its $x$-derivative) to write everything in terms of $\phi_\sigma$:
$$
	-\partial_t\phi_\sigma(z) = \frac{2}3\partial_x^3 \phi_{\sigma}(z) + \frac{\phi_\s(z)}{2t} + \frac{x \partial_x\phi_\sigma(z)}t - \frac{2}t \phi_\sigma(z) \int_{\mathbb R} \phi_\sigma(r) \partial_x\phi_\sigma(r)\d\sigma(r) -\frac{2}t \partial_x\phi_\sigma(z)\int_{\mathbb R} \phi_\sigma^2(r) \mathrm d \sigma (r).
$$
Equation \eqref{eq:IDmKdV} is finally obtained from the equation above after the change of variables \eqref{eq:mKdVchangevariables}.
\end{remark}

\begin{examples}\label{examples}
\upshape
{\it 1.} When $\sigma=\sigma_\gamma:= \gamma \chi_{\left[0,\infty\right)}$ with $\gamma\in\left(0,1\right]$, $Q_{\sigma_\gamma}\left(x,t\right) = F_{\rm TW}\left(-xt^{-1/3};\gamma\right)$ is the  Tracy-Widom distribution, which describes the limit law of the largest eigenvalue of a wide variety of random matrices for $\gamma=1$ and arises also in combinatorial problems such as the length of the longest increasing subsequence of a random permutation \cite{BaikDeiftJohansson}, as well as in several statistical physics models. 
The celebrated result of Tracy and Widom \cite{TW} implies that
\begin{equation}\label{eq:diffidPII}\partial_x^2\log Q_{\sigma_\gamma}\left(x,t\right)=-t^{-2/3}y_\gamma^2\left(-xt^{-1/3}\right),\end{equation} where $y_\gamma$ is the unique solution of the Painlev\'e II equation 
\begin{equation}\label{eq:PII}y_\gamma''\left(x\right)=xy_\gamma\left(x\right)+2y_\gamma^3\left(x\right)\ \mbox{ with asymptotic behavior $y_\gamma\left(x\right)\sim \sqrt{\gamma}\,{\rm Ai}\left(x\right)$ as $x\to +\infty$.}\end{equation} This solution is known as the Hastings-McLeod solution \cite{HastingsMcLeod} if $\gamma=1$ and as an Ablowitz-Segur solution \cite{AblowitzSegur} otherwise.
This is indeed consistent with Theorem \ref{theorem:main}:
\eqref{eq:u2} then becomes
\[u_{\sigma_\gamma}\left(x,t\right)=-\frac{\gamma}{t}\phi_{\sigma_\gamma}^2\left(0;x,t\right)+\frac{x}{2t},\]
and the Schr\"odinger equation \eqref{eq:Schrodinger} is
\[\partial_x^2\phi_{\sigma_\gamma}\left(0;x,t\right)=\left(-\frac{x}{t}+\frac{2\gamma}{t}\phi_{\sigma_\gamma}^2\left(0;x,t\right)\right)\phi_{\sigma_\gamma}\left(0;x,t\right),\] which turns into the Painlev\'e II equation after setting $y\left(x\right)=\g^{1/2} t^{-1/6}\phi_{\sigma_\gamma}\left(0;-xt^{1/3},t\right)$.
Conversely, it is straightforward to verify from the Painlev\'e II equation that the function 
\[u_\sigma\left(x,t\right)=\frac{x}{2t}-t^{-2/3}y_\gamma^2\left(-xt^{-1/3}\right)\] solves the KdV equation \eqref{eq:KdV}.

{\noindent \it 2.} In the case where $\sigma$ is a piecewise constant function with $m$ discontinuities, $Q_\sigma$ has been studied in \cite{CharlierClaeys,ClaeysDoeraene}. In particular, the integro-differential Painlev\'e II equation \eqref{integrodiffPII} then turns into a system of coupled Painlev\'e II equations, see also \cite{LyuChen, WuXu, XuDai}.

{\noindent \it 3.} Denoting
\begin{equation}
\label{sigmaKPZ}
\sigma_{\rm KPZ}\left(r\right):= \frac 1{1 + \exp\left(-r\right)},
\end{equation} 
we have that $Q_{\sigma_{\rm KPZ}}\left(sT^{-1/6},T^{-1/2}\right)$ characterizes the distribution of the solution of the Kardar-Parisi-Zhang (KPZ) equation with narrow wedge initial condition \cite{AmirCorwinQuastel, BorodinGorin} (see also \cite{CLDR, Dotsenko, SasamotoSpohn}); it appears also in a model of finite temperature free fermions (which is equivalent to the MNS matrix model) \cite{Johansson, LiechtyWang,MNS} and in the edge scaling limit of the periodic Schur process \cite{BeteaBouttier}.  
{More precisely, write $\mathcal H\left(T,X\right)$ for the solution of the KPZ equation	
\begin{equation}\label{eq:KPZ}
\partial_T\mathcal H\left(T,X\right)=\frac{1}{2}\partial_X^2\mathcal H\left(T,X\right)+\frac{1}{2}\left(\partial_X\mathcal H\left(T,X\right)\right)^2+\xi\left(T,X\right),
\end{equation}
where $\xi\left(T,X\right)$ denotes space-time white noise,
with narrow wedge initial condition $\mathcal H\left(0,X\right)=\log Z\left(0,X\right)$ with $Z\left(0,X\right)=\delta_{X=0}$ the Dirac distribution (see e.g.\ \cite{Corwin} for a rigorous definition of this solution), then for any $s\in\mathbb R$, $T>0$, we have the identity 
\begin{equation}
\label{identityKPZ}
\mathbb E_{\rm KPZ}\left[{\rm e}^{-{\rm e}^{\mathcal H\left(2T,X\right)-\frac{X^2}{4T}+\frac{T}{12}+sT^{1/3}}}\right]=Q_{\sigma_{\rm KPZ}}\left(sT^{-1/6},T^{-1/2}\right),
\end{equation}
where $\mathbb E_{\rm KPZ}$ denotes the expectation with respect to the KPZ equation.}
	Multiplicative statistics of {a similar nature for} more general point processes appeared recently in relation with the O'Connell-Yor semi-discrete Brownian directed polymer \cite{ImamuraSasamoto}, as well as in the stochastic six vertex model \cite{BorodinSixVertex}, see \cite{BorodinGorin} and references therein for more details. 

We note	that $\s_{\rm KPZ}$ can be extended to a meromorphic function in the complex plane, which has the periodicity property $\s_{\rm KPZ}(z)=\s_{\rm KPZ}(z+2\pi\i)$. This implies that we can also define $Q_{\s_{\rm KPZ}}(x+2\pi\i t,t)$ and $u_{\sigma_{\rm KPZ}}(x+2\pi\i t,t)$ for any $x\in\mathbb R$, $t>0$, and moreover we have
\[Q_{\s_{\rm KPZ}}(x+2\pi\i t,t)=Q_{\s_{\rm KPZ}}(x,t),\qquad u_{\s_{\rm KPZ}}(x+2\pi\i t,t)=u_{\s_{\rm KPZ}}(x,t)+\i\pi.\]
Setting $t=0$ in this identity yields a contradiction, and this is a first indication that the $t\to 0$ limit for the KdV solution $u_\sigma(x,t)$ is delicate. We confirm this below for general $\sigma$.
\end{examples}

\begin{remark}\label{remark:cylKdV}
Consider the change of variables $T = t^{-2},\ \rho = -{x}/t$ between $x\in\R,\ t>0$ and $\rho\in\R,\ T>0$ in \eqref{eq:u2}.
It is straightforward to verify that, since $u_\sigma\left(x,t\right) := \partial_x^2 \log Q_\sigma\left(x,t\right) + \frac{x}{2t}$ solves the KdV equation \eqref{eq:KdV}, the function {$$U_\sigma\left(\rho,T\right) := \partial_\rho^2 \log Q_\sigma\left(-\rho T^{-1/2},T^{-1/2}\right) = T^{-1} u_\sigma\left(-\rho T^{-1/2},T^{-1/2}\right) + \frac {\rho T^{-1}} 2  $$}
solves the cylindrical KdV equation
\begin{equation}\label{cylindricalKdV}
	\partial_T U_\sigma + \frac{1}{12}\partial_\rho^3U_\sigma + U_\sigma\partial_\rho U_\sigma + \frac{U_\sigma}{2T} =  0.
\end{equation}
This is the same as equation (13) in \cite{KPZKPDoussal}, which was also previously derived (in a slightly different form) for the case $\sigma=\sigma_{\rm KPZ}$ in \cite{QuastelRemenik}.
The equation \eqref{cylindricalKdV} is deduced in \cite{KPZKPDoussal} using the general results of P\"oppe and Sattinger \cite{PoppeSattinger} relating the KP hierarchy with a particular class of Fredholm determinants which seem to play a pivotal role in the field of integrable probability.
As observed by Quastel and Remenik in \cite{QuastelRemenik}, one can use, indeed, the arguments developed in \cite{PoppeSattinger} to prove a particular case (the scalar one) of the general relationship connecting $n$-space point distribution functions of the KPZ fixed point with the matrix KP equation.
A similar situation appears in the recent preprint \cite{BaikLiuSilva} where the KP equation is related to the one-point distribution of periodic TASEP, together with other integrable equations, such as a system of coupled modified KdV equations and non-linear heat equations.
All these equations are deduced in \cite{BaikLiuSilva} using RH techniques, but in the case of KP, the theory developed in \cite{PoppeSattinger} applies in a straightforward way.
The results of \cite{PoppeSattinger} are also used in \cite{LNR} to relate the distribution of the supremum of the Airy process with $m$ wanderers to the KdV equation.
For another approach relating Fredholm determinants and the KP equation, see \cite{AdlerShiotavanMoerbeke}.
\end{remark}

Let us now look at the KdV solutions $u_\sigma\left(x,t\right)$ in the $t\to 0$ limit.
It turns out that the initial data are not well-defined for $x<0$, more precisely $u_\sigma\left(x,t\right)$ will behave like $\frac{x}{2t}$ as $x<0$, $t\to 0$. 
For $x>0$, the initial data are well-defined.
In our next result, we present uniform in $x\in\left(-\infty,K\right]$ asymptotics for $u_\sigma\left(x,t\right)$ as $t\to 0$, for any $K>0$.
The function $v_\s$ describing the initial data for $x>0$ and appearing in {\it (iii)} of the theorem below is uniquely defined in terms of the solution to a model RH problem, introduced in Section \ref{section:RH3} and proven to be solvable by a vanishing lemma in Appendix \ref{appvanishinglemma}.

\begin{theorem}
\label{thminitialvalue}
Suppose that $\sigma$ satisfies Assumptions \ref{assumptions} with $\gamma\in \left(0,1\right]$ and that there exist $c_1, c_2,c_3>0$ and $C>0$ such that
\begin{align}
\label{eq:sigmaexpdecay}
&\left|\s(z)-\gamma \chi_{\left(0,+\infty\right)}(z)\right|\leq c_1{\rm e}^{-c_2|z|} &&\mbox{for all $z\in\mathbb R$,}
\\
\label{eq:sigmaprimedecay}
&\left|\sigma'(z)\right|\leq c_3|z|^{-2}&&\mbox{for $z\in\mathbb R$ such that }|z|>C.
\end{align}
\begin{itemize}
\item[(i)] For any $t_0>0$, there exist $M, c>0$ such that we have uniformly for $x\leq-M t^{1/3}$ and for $0<t<t_0$ that
\be
\label{estimatei}
u_\sigma\left(x,t\right)=\frac{x}{2t}+\O\left({\rm e}^{-c\frac{|x|}{t^{1/3}}}\right).
\ee
\item[(ii)] There exists $\epsilon>0$ such that for any $M>0$, we have uniformly for $|x|\leq M t^{1/3}$ and for $0<t<\epsilon$ that 
\be
u_\sigma\left(x,t\right)=\frac{x}{2t}-t^{-2/3}y_\g^2\left(-xt^{-1/3}\right)+{\O\left(1\right)},
\ee
with $y_\g$ as in \eqref{eq:PII}.
\item[(iii)] If $\g=1$, there exist $\epsilon, M>0$ such that for any $K>0$, we have uniformly for $M t^{1/3}\leq x\leq K$ and for $0<t<\epsilon$ that
\be
\label{eq:thmasymptoticsiii}
u_\sigma\left(x,t\right)=v_\s\left(x\right)\left(1+\O\left(x^{-1}t^{1/3}\right)\right),
\ee
where $v_\s$ is a function of $x>0$, independent of $t$.
\end{itemize}
\end{theorem}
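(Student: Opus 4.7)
The proof proceeds by Deift–Zhou steepest descent applied to a Riemann–Hilbert (RH) characterization of $\phi_\sigma(z;x,t)$, which will have been set up earlier in the paper: a $2\times 2$ matrix problem with a jump on $\R$ determined by $\sigma(t^{-2/3}u+x/t)$, together with Airy-type normalization at infinity inherited from \eqref{eqthm:phiAi}. Via \eqref{eq:u2}, asymptotics for $\phi_\sigma$ translate into asymptotics for $u_\sigma$. Throughout, the natural similarity variable is $\tau=-xt^{-1/3}$, and the three cases correspond to $\tau\geq M$, $|\tau|\leq M$, and $\tau\leq -M$ with $-\tau t^{1/3}\leq K$.

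For part \emph{(i)}, we conjugate by $\mathrm{e}^{\Theta\sigma_3}$ with the standard Airy phase $\Theta(z;\tau)=\tfrac{2}{3}z^{3/2}-\tau z^{1/2}$ and deform onto the steepest descent paths. The hypothesis \eqref{eq:sigmaexpdecay} reduces $\sigma$ to $\gamma\chi_{\R_+}$ up to $\mathcal O(\mathrm{e}^{-c_2|x|/t^{2/3}})$, while the large parameter $\tau\geq M$ makes $\mathrm{e}^{-2\Theta}$ exponentially small on the deformed contour with rate $c\tau^{3/2}=c|x|^{3/2}t^{-1/2}$. Combining these estimates and invoking the small-norm theorem produces an RH problem close to the identity, whose first Laurent coefficient at infinity yields \eqref{estimatei}. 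For part \emph{(ii)}, we rescale $\zeta=t^{2/3}z-xt^{-1/3}$, bringing the Airy phase into $\tfrac{2}{3}\zeta^{3/2}-\tau\zeta^{1/2}$ and localizing the discontinuity of $\sigma$ near $\zeta=0$ on the original scale $t^{2/3}$. Replacing $\sigma$ by $\gamma\chi_{\R_+}$ by \eqref{eq:sigmaexpdecay} reduces the problem exponentially fast to the Flaschka–Newell RH problem for the Ablowitz–Segur/Hastings–McLeod solution $y_\gamma$ with spectral parameter $\tau$, and reading off $\phi_\sigma^2(0;x,t)$ from its residue gives $t^{-2/3}y_\gamma^2(\tau)$ as in Example \ref{examples}.1; the $\mathcal O(1)$ error comes from the scale mismatch between this leading term and $x/(2t)$.

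Part \emph{(iii)} is the main obstacle and requires a genuinely new construction. Because $\gamma=1$, when $\tau\to-\infty$ the Painlevé II function $y_1$ enters its oscillatory regime, so the model from \emph{(ii)} ceases to apply. Instead, we exploit that $x$ itself is bounded by $K$ and rephrase the RH problem in the variable $w=zt^{-2/3}+x/t$ (so that $\sigma$ appears without rescaling) while tracking the Airy normalization. The contour deformation adapted to this scaling produces, after an intermediate $g$-function transformation absorbing the dominant Airy growth, a model RH problem that depends only on $x$ and on $\sigma$, formulated in Section \ref{section:RH3}. Its solvability is established abstractly via the vanishing lemma of Appendix \ref{appvanishinglemma}, and $v_\sigma(x)$ is defined by reading $\phi_\sigma^2$ off the $1/z$-coefficient of the solution, in the spirit of \eqref{eq:u2}. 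The remainder in the reduction from the true RH problem to the model is controlled by how fast the Airy parametrix at $w=0$ merges with the global structure, which gives the rate $\mathcal O(x^{-1}t^{1/3})$ after careful tracking: the $t^{1/3}$ reflects the residual $t$-dependence in the local parametrix, while the $x^{-1}$ comes from the distance $\sim x$ at which the local and global contributions are matched.

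The hardest step is the construction and solvability of the model RH problem of part \emph{(iii)}: unlike the classical Airy and Painlevé II models used in \emph{(i)}–\emph{(ii)}, its jump retains the full function $\sigma$, so no explicit solution is available and one must rely on the vanishing lemma. The second delicate point is the uniformity of the error estimate across the sliding regime $Mt^{1/3}\leq x\leq K$, which forces one to prove the error bound with the correct $x$-dependence rather than the pointwise rate $\mathcal O(t^{1/3})$.
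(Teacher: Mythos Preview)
Your outline captures the three-regime structure and correctly identifies the model RH problem plus vanishing lemma as the core of part \emph{(iii)}. However, there is a recurring gap that undermines parts \emph{(i)} and \emph{(ii)} as written: you repeatedly invoke contour deformation (``deform onto the steepest descent paths''), but $\sigma$ is only piecewise smooth, with no analyticity hypothesis, so the jump contour cannot be moved off the real line. The paper treats this as a central obstacle (see the Methodology paragraph) and its proofs keep the jump on $\R$ throughout. For part \emph{(i)} the paper works with the $Y$-RH problem directly and shows (Lemma~\ref{lemma:small}) that $J_Y-I$ is already exponentially small in $L^1,L^2,L^\infty(\R)$ by combining Airy decay at $+\infty$ with the exponential decay \eqref{eq:sigmaexpdecay} of $\sigma$ at $-\infty$; no deformation or $g$-function is needed, and $u_\sigma$ is then read off from the relation $u_\sigma=-p_\sigma^2-2q_\sigma$ via \eqref{eq:pqalphabeta}, not from \eqref{eq:u2}.

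For part \emph{(ii)} your claim that \eqref{eq:sigmaexpdecay} reduces the problem to the step case ``exponentially fast'' is incorrect: after rescaling to $w=t^{2/3}z$, the jump involves $\sigma(wt^{-2/3})$, which near $w=0$ is \emph{not} close to $\gamma\chi_{(0,\infty)}$ uniformly in $t$. The paper resolves this by building a local parametrix inside $|w|<1$ of the form $\wt\Psi_{\sigma_\gamma}(w)\begin{pmatrix}1&a(w)\\0&1\end{pmatrix}$, where $a(w)$ is the Cauchy transform \eqref{eq:defa} absorbing the difference $\sigma-\gamma\chi_{(0,\infty)}$; the resulting error is only $\mathcal O(t^{2/3})$, polynomially small, and requires \eqref{eq:sigmaprimedecay} to control the boundary values of $a$ near $w=\pm 1$. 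For part \emph{(iii)} your heuristic is broadly right, but the scaling is the same $w=t^{2/3}z$ as in \emph{(ii)} (not your $w=zt^{-2/3}+x/t$), and the local parametrix is $t^{\sigma_3/6}\Phi_\sigma(t^{-2/3}w;\,x-\tfrac{2}{3}wt^{1/3})$; the $x^{-1}$ in the error comes from a further small-$\xi$ analysis of $\Phi_\sigma$ (comparison to the Bessel model), needed because $\xi\to 0$ at the lower edge $x\sim Mt^{1/3}$ of the range.
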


\begin{figure}[htbp]
\centering
\hspace{1cm}
\begin{tikzpicture}[scale=1.3]

\draw[->] (-4,0) -- (5.5,0) node[right] {$x$};
\draw[-] (0,0) -- (0,0.8);
\draw[->](0,1.5) -- (0,4) node[above] {$t$};
\draw[domain=0:2.7, smooth, variable=\x, dashed] plot ({\x}, {.2*\x*\x*\x}) node[above] {${\scriptstyle x=Mt^{1/3}}$};
\draw[domain=0:2.7, smooth, variable=\x, dashed] plot ({-\x}, {.2*\x*\x*\x}) node[above] {${\scriptstyle x=-Mt^{1/3}}$};
\draw[-, dashed] (-2.54,3.3) -- (-4,3.3) node[left] {${\scriptstyle t=t_0}$};
\draw[-, dashed] (5,0) -- (5,3.7) node[above] {${\scriptstyle x=K}$};
\draw[-, dashed] (-1.98,1.55) -- (5,1.55) node[right] {${\scriptstyle t=\epsilon}$};

\node at (-3.4,1.7) {${ \frac{x}{2t}+\O\left({\rm e}^{-c\frac{|x|}{t^{1/3}}}\right)}$};
\node at (-3.4,2.2) {singular};
\node at (0,1) {${\frac{x}{2t}-\frac 1{t^{2/3}}y_\g^2\left(-\frac x{t^{1/3}}\right)+\O(1)}$};
\node at (0,1.35) {Painlev\'e II};
\node at (3.3,.7) {${v_\s\left(x\right)\left(1+\O\left(x^{-1}t^{1/3}\right)\right)}$};
\node at (3.3,1.2) {$\begin{array}{c}\mbox{integro-differential}\\\mbox{Painlev\'e V}\end{array}$};

\end{tikzpicture}
\caption{Phase diagram showing the different types of small $t$ asymptotic behavior for $u_\sigma(x,t)$.}
\label{figasymptotics}
\end{figure}
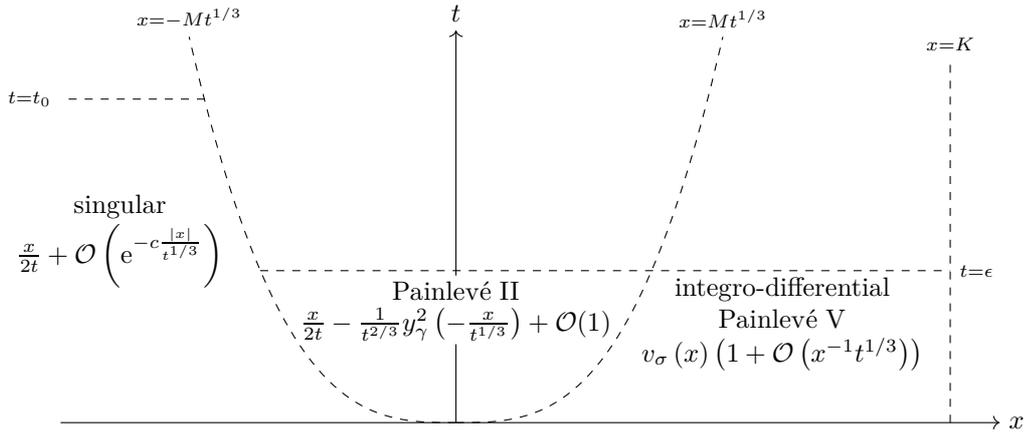

\begin{remark}\label{remark:gamma1}
The first two parts of the above result hold for any $\gamma\in(0,1]$, but the third part does not. We have no simple explanation for this, but it will turn out that our asymptotic analysis of the associated RH problem works only for $\gamma=1$.
\end{remark}

\begin{remark}
In view of Remark \ref{remark:cylKdV}, we observe that for $x<0$, although the initial data of $u_\sigma\left(x,t\right)$ are not well-defined, the initial data of $U_\sigma\left(\rho,\tau\right)$, i.e.\ the solution to the cylindrical KdV equation, are. On the other hand, for $x>0$, the initial data for KdV are well-defined but the initial data for cylindrical KdV are not. 

Scattering and inverse scattering theory for KdV solutions decaying at $\pm\infty$ is classical; a similar theory also exists for certain unbounded KdV solutions \cite{DubrovinMinakov} and for decaying cylindrical KdV solutions \cite{ItsSukhanov}. However, the solutions under consideration in the present work do not fit in these classes of solutions.
\end{remark}

\begin{remark}
Thanks to the estimate \eqref{estimatei} we can integrate twice the relation $\pa_x^2\log Q_\s(x,t)=u_\s(x,t)-\frac x{2t}$ to write
\begin{equation}
Q_\s(x,t)=\exp\left[\int_{-\infty}^x(x-\xi)\left(u_\s(\xi,t)-\frac \xi{2t}\right)\d\xi\right].
\end{equation}
\end{remark}

In the next result we gather some properties of the function $v_\s$ appearing in {\it (iii)} of Theorem \ref{thminitialvalue}.

\begin{theorem}
\label{thmv}
Suppose that $\sigma$ satisfies Assumptions \ref{assumptions} with $\gamma=1$ and that there exist $c_1, c_2,c_3>0$ and $C>0$ such that \eqref{eq:sigmaexpdecay} and \eqref{eq:sigmaprimedecay} hold true. Then the function $v_\s$ appearing in \eqref{eq:thmasymptoticsiii} of Theorem \ref{thminitialvalue} has the asymptotics
\begin{equation}
\label{as:v}
v_\s\left(x\right)=\frac{1}{8x^2}+\frac 12\int_\R\left(\chi_{(0,+\infty)}\left(r\right)-\s\left(r\right)\right)\d r+\mathcal O\left(x^2\right),\qquad \mbox{as $x\to 0$.}
\end{equation}
Moreover, it can be expressed for all $x>0$ as 
\be\label{eq:v1}
v_\s\left(x\right)=\frac 1x \int_{\mathbb R}\left( r\psi_\sigma^2\left(x;r\right)+\left(\pa_x\psi_\s\left(x;r\right)\right)^2\right) \d\s(r),
\ee
where $\psi_\sigma\left(x;z\right)$ solves the Schr\"odinger equation with potential $2v_\sigma$,
\be\label{eq:psiSchrodinger}
\partial_x^2\psi_\sigma\left(x;z\right)=\left(z - 2v_\sigma\left(x\right)\right)\psi_\sigma\left(x;z\right),\qquad z\in\R,\ x>0,
\ee
satisfies
\be\label{eq:psiintegral}
\int_\R\psi_\sigma^2\left(x;r\right)\d\s(r)=\frac x2,
\ee
and has the asymptotic behavior
\begin{equation}\label{eq:psibessel}
\psi_\sigma\left(x;z\right)\sim\sqrt{\frac x2}\,{\rm I}_0\left(x\sqrt z\right),\qquad\mbox{as $z\to \infty$ with $|\arg z|<\pi-\delta$, for any $\delta>0$,}
\end{equation}
where ${\rm I}_0$ is the modified Bessel function of the first kind.
\end{theorem}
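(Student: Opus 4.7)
All three items are extracted from the model RH problem introduced in Section \ref{section:RH3}: $\psi_\sigma(x;z)$ is read off from a distinguished column of its solution $\Psi(z;x)$, while $v_\sigma(x)$ is identified from the coefficients of the large-$z$ asymptotic expansion of $\Psi$. The Bessel asymptotic \eqref{eq:psibessel} is built into the definition of that RH problem (the normalization at $z=\infty$ is expressed in terms of $I_0(x\sqrt{z})$), so it requires no further argument beyond the existence and uniqueness of the RH solution established in Section \ref{section:RH3} and Appendix \ref{appvanishinglemma}.

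For \eqref{eq:psiSchrodinger} and \eqref{eq:psiintegral} the plan is to derive a Lax pair by exploiting that the jumps of $\Psi$ are independent of $x$: consequently $L(z;x):=(\partial_x\Psi(z;x))\Psi(z;x)^{-1}$ is entire in $z$ and, by the prescribed large-$z$ behavior, polynomial in $z$ of degree at most one. Reading off the coefficients of $L$ from the first terms of the large-$z$ expansion of $\Psi$ identifies the potential as $2v_\sigma(x)$ and yields \eqref{eq:psiSchrodinger}, in complete parallel with the derivation of \eqref{eq:Schrodinger} from the Airy-model RH problem in Theorem \ref{theorem:main}. The normalization \eqref{eq:psiintegral} follows by computing the subleading coefficient of $\Psi$ at $z=\infty$: via the Plemelj integral representation it produces $\int_\R\psi_\sigma^2\,d\sigma$, while by the matching condition against $I_0(x\sqrt{z})$ the same coefficient equals $x/2$. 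Equivalently, \eqref{eq:psiintegral} is the $t\to 0$ limit of $\int_\R\phi_\sigma^2\,d\sigma=\frac{x}{2}-t u_\sigma$ implied by \eqref{eq:u2}, combined with Theorem \ref{thminitialvalue}\textit{(iii)}.

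The integral representation \eqref{eq:v1} is then an algebraic consequence. Differentiating $\int_\R\psi_\sigma^2(x;r)\,d\sigma(r)=x/2$ twice in $x$ and using \eqref{eq:psiSchrodinger} gives
\begin{equation*}
0=\partial_x^2\int_\R\psi_\sigma^2\,d\sigma=2\int_\R(\partial_x\psi_\sigma)^2\,d\sigma+2\int_\R(r-2v_\sigma)\psi_\sigma^2\,d\sigma,
\end{equation*}
which rearranges to \eqref{eq:v1} after invoking $\int_\R\psi_\sigma^2\,d\sigma=x/2$ once more.

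The main obstacle is the small-$x$ asymptotic \eqref{as:v}. The leading singularity $\frac{1}{8x^2}$ comes from the explicitly solvable case $\sigma=\chi_{(0,+\infty)}$: there one has $\psi_\sigma(x;z)=\sqrt{x/2}\,I_0(x\sqrt z)$, and substitution into \eqref{eq:v1} (or direct evaluation of the RH solution in terms of $I_0$ and $K_0$) yields $v_{\chi_{(0,+\infty)}}(x)=\frac{1}{8x^2}+O(x^2)$. For a general admissible $\sigma$ we treat the model RH problem for $\sigma$ as a perturbation of the one for $\chi_{(0,+\infty)}$: the exponential decay \eqref{eq:sigmaexpdecay} of $\sigma-\chi_{(0,+\infty)}$ guarantees that the jump matrix perturbation is small on the relevant contours as $x\to 0$, so that a standard small-norm argument applies. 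Expanding to first order in this perturbation should produce the constant $\frac{1}{2}\int_\R(\chi_{(0,+\infty)}-\sigma)\,dr$ in \eqref{as:v}, with the $O(x^2)$ error absorbing higher-order corrections; isolating this constant term cleanly from the first-order correction to the exact Bessel model is the technical heart of the proof.
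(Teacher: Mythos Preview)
Your outline is essentially the paper's proof: the Lax-pair derivation of \eqref{eq:psiSchrodinger} from $(\partial_x\Phi_\sigma)\Phi_\sigma^{-1}$, the residue computation for \eqref{eq:psiintegral} (parallel to Proposition \ref{prop:proofu2}), the double differentiation yielding \eqref{eq:v1}, and the perturbation-of-Bessel strategy for \eqref{as:v} are exactly what Sections \ref{section:RH3}.2 and \ref{sec:v} carry out. Two small corrections are worth noting. First, the normalization of $\Phi_\sigma$ at infinity in \eqref{eq:Phiasymp} is $z^{\sigma_3/4}A^{-1}e^{\xi z^{1/2}\sigma_3}$, not literally $I_0$; the Bessel asymptotic \eqref{eq:psibessel} then follows by comparing with $I_0(\zeta)\sim e^\zeta/\sqrt{2\pi\zeta}$. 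Second, for \eqref{as:v} your phrasing ``the jump matrix perturbation is small as $x\to 0$'' is slightly misleading: the jump discrepancy $\sigma(z)-\chi_{(0,+\infty)}(z)$ is independent of $x$, so the smallness does not come for free. The paper's mechanism is the rescaling $\eta=\xi^2 z$, after which the jump involves $\sigma(\xi^{-2}\eta)\to\chi_{(0,+\infty)}(\eta)$ pointwise for $\eta\neq 0$; a local parametrix near $\eta=0$ (built via a scalar Cauchy transform, as in \eqref{eq:defb}) then handles the neighborhood of the origin, and the first-order correction to the Bessel model is computed by an explicit residue to give the constant $\tfrac12\int(\chi_{(0,+\infty)}-\sigma)$.
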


\begin{remark}
We have no evidence that the above conditions determine $v_\s$ uniquely.
As already mentioned, we will give a unique characterization of $v_\s$ in terms of a RH problem later.
The equations \eqref{eq:v1}--\eqref{eq:psiintegral} yield the system
\be
\label{idpv}
\begin{cases}
x\partial_x^2\psi_\sigma\left(x;z\right)-xz\psi_\s\left(x,z\right)+ 2\psi_\sigma\left(x;z\right)\displaystyle\int_\R\left( r\psi_\sigma^2\left(x;r\right)+\left(\pa_x\psi_\s\left(x;r\right)\right)^2\right) \d\s(r)=0,
\\
\displaystyle\int_\R\psi_\sigma^2\left(x;r\right)\d\s(r)=\frac x2,
\end{cases}
\ee
which can be regarded as an integro-differential generalization of the Painlev\'e V equation. Indeed, when $\s$ is the piecewise constant function
$\s(r)=\mu\chi_{[-1,0]}+\chi_{\left(0,+\infty\right)}$ with
$0<\mu<1$, it can be shown that \eqref{idpv} reduces to the (special case of the) Painlev\'e V equation
\begin{equation*}
x q\left(x\right) \left(1 - q^2\left(x\right)\right)\left(x q\left(x\right) q'\left(x\right)\right)' + 
 x \left(1 - q^2\left(x\right)\right)^2 \left(\left(x q'\left(x\right)\right)' + \frac {1}4 q\left(x\right) \right)+ 
 x^2 q\left(x\right)\left(q\left(x\right) q'\left(x\right)\right)^2=0,
\end{equation*}
where we denote $'=\pa_x$ for brevity, for the quantity $q\left(x\right)$ defined by $\psi\left(x;-1\right)=\sqrt{\frac x{2\mu}}q\left(x^2\right)$.
When $\s$ is a more general piecewise constant increasing function with $\s\left(r\right)=1$ for all $r>0$, the system \eqref{idpv} turns into the system of coupled Painlev\'e V equations from \cite{CharlierDoeraene}.
\end{remark}

Finally, we also obtain small $t$ asymptotics for the Fredholm determinants $Q_\sigma\left(x,t\right)$.

\begin{theorem}
\label{thmQ}
Suppose that $\sigma$ satisfies Assumptions \ref{assumptions} with $\gamma\in \left(0,1\right]$ and that there exist $c_1, c_2,c_3>0$ and $C>0$ such that \eqref{eq:sigmaexpdecay} and \eqref{eq:sigmaprimedecay} hold true.
\begin{itemize}
\item[(i)] For any $t_0>0$, there exist $M, c>0$ such that we have uniformly for $x\leq -M t^{1/3}$ and for $0<t<t_0$ that
\be
\log Q_\sigma\left(x,t\right)=\O\left(e^{-c\frac{|x|}{t^{1/3}}}\right).
\ee
\item[(ii)] There exists $\epsilon>0$ such that for any $M>0$, we have uniformly for $|x|\leq M t^{1/3}$ and for $0<t<\epsilon$ that
\be
\log Q_\sigma\left(x,t\right)=\log F_{\rm TW}\left(-xt^{-1/3};\g\right)+\mathcal O\left(t^{1/3}\right),
\ee
where $F_{\rm TW}$ is the Tracy-Widom distribution, see Examples \ref{examples}.
\item[(iii)] If $\g=1$, there exist $\epsilon, M>0$ such that for any $K>0$, we have uniformly for $M t^{1/3}\leq x\leq K$ and for $0<t<\epsilon$ that
\be
\log Q_\sigma\left(x,t\right)=-\frac{x^3}{12t}-\frac{1}{8}\log (xt^{-1/3}) +\frac{\log 2}{24} +\log \zeta'(-1)
+\int_0^x (x-\xi) \left(v_\s(\xi)-\frac{1}{8\xi^2}\right)\d\xi+\mathcal O(x^{-1}t^{1/3}).
\ee
\end{itemize}
\end{theorem}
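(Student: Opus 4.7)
The starting point is the identity $\partial_x^2 \log Q_\sigma(x,t) = u_\sigma(x,t) - \frac{x}{2t}$ obtained by rearranging \eqref{def:uQ}, combined with the boundary behaviour $\log Q_\sigma(x,t)\to 0$ and $\partial_x \log Q_\sigma(x,t) \to 0$ as $x\to -\infty$ (readily checked from \eqref{def:Q} since $\mathcal{K}^{\rm Ai}_{\sigma,x,t}$ shrinks to zero in trace norm as $x\to -\infty$ under the decay assumption \eqref{eq:sigmaexpdecay}). Hence $\log Q_\sigma(x,t)$ is recovered by integrating $u_\sigma(\xi,t) - \xi/(2t)$ twice from $-\infty$ to $x$, and the plan is to carry out this integration separately in each of the three regimes of Theorem \ref{thminitialvalue}, matching across the overlap zones $x\sim \pm M t^{1/3}$.

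For Part (i), inserting the exponentially small estimate from Theorem \ref{thminitialvalue}(i) into $\log Q_\sigma(x,t) = \int_{-\infty}^x (x-\xi)[u_\sigma(\xi,t)-\xi/(2t)]\,d\xi$ immediately yields the claimed bound after a change of variables absorbing the polynomial prefactor $(x-\xi)$ into a slightly smaller decay constant. For Part (ii), the key observation is that the leading term $-t^{-2/3} y_\gamma^2(-xt^{-1/3})$ of Theorem \ref{thminitialvalue}(ii) equals $\partial_x^2 \log F_{\rm TW}(-xt^{-1/3};\gamma)$ by the Tracy--Widom identity \eqref{eq:diffidPII}. Thus $\partial_x^2 [\log Q_\sigma - \log F_{\rm TW}(-\cdot\, t^{-1/3};\gamma)] = \mathcal{O}(1)$ uniformly for $|x|\leq M t^{1/3}$, and double integration from the matching point $x_0 = -M t^{1/3}$, at which both the function and its derivative are $\mathcal{O}(e^{-cM})$ by Part (i) (applied both to $\log Q_\sigma$ and to its Tracy--Widom specialisation $\sigma=\sigma_\gamma$), produces an overall error $\mathcal{O}(M^2 t^{2/3}) = \mathcal{O}(t^{1/3})$.

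Part (iii) is the most involved. Let $\Phi(x,t)$ denote the right-hand side of Part (iii) without the error term; a direct differentiation, using the leading behaviour in \eqref{as:v}, gives $\partial_x^2 \Phi = v_\sigma(x) - x/(2t)$, the pole $\frac{1}{8x^2}$ from differentiating $-\frac{1}{8}\log(xt^{-1/3})$ cancelling exactly the $-\frac{1}{8\xi^2}$ subtracted inside the integral. Hence Theorem \ref{thminitialvalue}(iii) reduces to $\partial_x^2[\log Q_\sigma - \Phi] = v_\sigma(x)\cdot \mathcal{O}(x^{-1}t^{1/3})$. The boundary values at $x_0 = M t^{1/3}$ are fixed by Part (ii) together with the classical Deift--Its--Krasovsky asymptotics
\[
\log F_{\rm TW}(-M;1)=-\frac{M^3}{12}-\frac{1}{8}\log M + \frac{\log 2}{24} + \log\zeta'(-1) + \mathcal{O}(M^{-3}),
\]
which is precisely what produces the explicit constants $\frac{\log 2}{24}+\log \zeta'(-1)$ appearing in the statement; the analogous expansion of the first derivative fixes $\partial_x[\log Q_\sigma - \Phi](x_0,t)$.

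The main obstacle lies in the error accounting in this last regime: because $v_\sigma$ has a double pole at $0$, the error bound $\mathcal{O}(v_\sigma(x)\,x^{-1}t^{1/3}) = \mathcal{O}(t^{1/3}/x^3)$ is singular at the matching point, and a naive double integration from $x_0 = M t^{1/3}$ produces a contribution of size $\mathcal{O}(t^{1/3}/x_0^2)$ that does not immediately satisfy the claimed $\mathcal{O}(x^{-1}t^{1/3})$ bound. The remedy is to tune the matching parameter $M$ (possibly as a function of $x$ and $t$, e.g.\ taking $M$ of order $x\,t^{-1/3}$ when this is consistent with $M$ being sufficiently large to enter the regime of Parts (i)--(ii)) so that the Deift--Its--Krasovsky truncation error $\mathcal{O}(M^{-3})$ in the boundary match and the contribution from double integration of the singular error term balance to deliver the uniform bound $\mathcal{O}(x^{-1}t^{1/3})$ stated in the theorem.
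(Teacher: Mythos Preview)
Your strategy---recover $\log Q_\sigma$ by integrating $u_\sigma-\tfrac{x}{2t}$ twice, matching across the three regimes of Theorem~\ref{thminitialvalue}---is close in spirit to what the paper does, but the paper works one level of differentiation lower: it integrates the \emph{first} derivative identity $\partial_x\log Q_\sigma=p_\sigma-\tfrac{x^2}{4t}$ (equation~\eqref{Qp}), where $p_\sigma$ is read off directly from the Riemann--Hilbert expansion~\eqref{eq:Psiasymp} with its own error bounds (e.g.\ $p_\sigma(x,t)=p_\sigma^0(x)+\mathcal O(x^{-2}t^{1/3})$ in regime~(iii)). Those bounds are not simply the antiderivative of the $u_\sigma$-error in Theorem~\ref{thminitialvalue}; they come separately from the RH analysis and are sharper than what one would get by integrating $u_\sigma$ once.

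This matters decisively in Part~(iii). You correctly identify the obstruction: the error $v_\sigma(x)\cdot\mathcal O(x^{-1}t^{1/3})=\mathcal O(t^{1/3}x^{-3})$ is singular at the matching scale. But your proposed remedy of tuning $M$ cannot close the gap. Double integrating $\mathcal O(t^{1/3}\xi^{-3})$ from $x_0=Mt^{1/3}$ to $x$ gives a contribution of order $t^{1/3}x/x_0^{2}=x/(M^2t^{1/3})$, and forcing this to be $\mathcal O(x^{-1}t^{1/3})$ requires $M\gtrsim xt^{-1/3}$, i.e.\ the matching point $Mt^{1/3}$ pushed all the way to $x$---a degenerate limit in which the double integral disappears but the boundary data carry the whole answer, and those data are only controlled by Part~(ii) to $\mathcal O(t^{1/3})$, not $\mathcal O(x^{-1}t^{1/3})$. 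The paper avoids this entirely: single integration of the first-derivative error $\mathcal O(\xi^{-2}t^{1/3})$ from $rt^{1/3}$ to $x$ yields $\mathcal O(r^{-1})+\mathcal O(x^{-1}t^{1/3})$, and then one lets $r\to\infty$ with $rt^{1/3}\to 0$, invoking the Deift--Its--Krasovsky constant exactly as you do.

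A smaller issue in Part~(ii): the boundary contribution $\mathcal O(e^{-cM})$ at $x_0=-Mt^{1/3}$ is a fixed constant for fixed $M$ and is \emph{not} $\mathcal O(t^{1/3})$ as $t\to 0$. This is repairable (let the matching threshold grow like $\log(1/t)$, or integrate from a fixed $-\delta$ as the paper does), but it should be said explicitly.
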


\begin{remark}\label{remark:sym}
If $\s(r)$ satisfies $\s(-r)=1-\s(r)$ (in particular $\g=1$) and if $\s'(r)$ decays sufficiently fast as $r \to \pm \infty$ (it is enough to strengthen \eqref{eq:sigmaprimedecay} to $\left|\sigma'(z)\right|\leq c_3|z|^{-3}$ for $|z|>C$) the accuracy in part {\it (ii)} of Theorems \ref{thminitialvalue} and \ref{thmQ} is improved to
\begin{align*}
u_\sigma\left(x,t\right)&=\frac{x}{2t}-t^{-2/3}y_1^2\left(-xt^{-1/3}\right)+\O\left(t^{2/3}\right),
\\
\log Q_\sigma\left(x,t\right)&=\log F_{\rm TW}\left(-xt^{-1/3};\g\right)+\mathcal O\left(t\right),
\end{align*}
respectively, see the proof of Lemma \ref{lemma:RH2jump} and Remark \ref{remsigmaKPZantisymmetry}.
This is the case for instance when $\s=\s_{\rm KPZ}$ in \eqref{sigmaKPZ}.
\end{remark}

\begin{remark}
Recall that $Q_{\sigma_{\rm KPZ}}(x,t)$ characterizes the narrow wedge solution $\mathcal H(X,T)$ of the KPZ equation \eqref{eq:KPZ}. In particular from \eqref{identityKPZ}, writing $\Upsilon_T=\frac{\mathcal H(2T,0)+\frac{T}{12}}{T^{1/3}}$, we have the identity
\[
Q_{\sigma_{\rm KPZ}}(x=sT^{-1/6},t=T^{-1/2})=	\mathbb E_{\mathrm{KPZ}}\left[{\rm e}^{-{\rm e}^{T^{1/3}(\Upsilon_T+s)}}\right],\]
where $\mathbb E_{\mathrm{KPZ}}$ is the expectation with respect to the random white noise term in the KPZ equation.
As a consequence, one can derive tail asymptotics for $\Upsilon_T$ from asymptotics for $Q_{\sigma_{\rm KPZ}}(sT^{-1/6},T^{-1/2})$.
Our results imply large $T$ asymptotics for $s\leq KT^{1/6}$, for any $K>0$, and refine the upper and lower bounds obtained in \cite{CorwinGhosal}.
We should also mention that asymptotics are known \cite{CafassoClaeys, Tsai} for $s\geq \epsilon T^{2/3}$ and $T$ large for any $\epsilon$, or equivalently $xt\geq \epsilon$ and $t$ small, and read
\[\log Q_{\sigma_{\rm KPZ}}(x,t)=-t^{-4}\phi(xt)-\frac{1}{6}\sqrt{1+\pi^2 xt} 
+\mathcal O(\log^2 x)+\mathcal O(t^{-2/3}),\]
where
\[
\phi(y) := \frac{4}{15\pi^6}\left(1+\pi^2y\right)^{5/2}-\frac{4}{15\pi^6}-\frac{2}{3\pi^4}y-\frac{1}{2\pi^2}y^2.
\]
Expanding $\phi(y)$ for small $y$, we recover as a leading order term $-\frac{x^3}{12 t}$, which matches with the asymptotics we obtain in case {\it (iii)} above. Despite this formal matching of the leading order terms, we should emphasize that
precise asymptotics for $Q_{\sigma_{\rm KPZ}}(x,t)$ are currently not known as $x\to\infty$, $xt\to 0$. The best available rigorous results in this regime are upper and lower bounds obtained in \cite{CorwinGhosal} (see also \cite{KrajenbrinkLeDoussal, SasorovMeersonProlhac} for related results). We plan to come back to the study of the different asymptotic regimes for $\Upsilon_T$ and their relations with Theorem \ref{thmQ} in a subsequent paper.
\end{remark}

\paragraph{Methodology.}
For the proof of Theorem \ref{theorem:main}, we rely on the Its-Izergin-Korepin-Slavnov method \cite{IIKS,BC} to relate the determinants $Q_\sigma$ to a RH problem, and on Lax pair techniques to relate the relevant RH problem with differential equations.
A first unusual feature in our methodology is that we cannot deform the jump contour in the RH problem away from the real line (which was done for instance in \cite{CafassoClaeys} in the case $\sigma=\sigma_{\rm KPZ}$), because we do not require analyticity of the function $\sigma$.
As a consequence, we cannot prove directly that the RH solution is differentiable with respect to various parameters, which we need in order to apply Lax pair techniques. We can resolve this issue by using the operator theory underlying the theory of RH problems, see Section \ref{sec:3}.
A second key element in our approach is the derivation of the identity \eqref{eq:u2}, which is an integral expression for the KdV solution in terms of the Schr\"odinger eigenfunction; the derivation of this identity is a direct residue computation based on the RH problem, see Section \ref{sec:4}. This identity allows us to derive the integro-differential Painlev\'e II equation by direct substitution into the Schr\"odinger equation; in particular, a feature of this derivation is that we do not obtain the integro-differential Painlev\'e II equation from a compatibility condition, like in \cite{Bothner}, but rather from an eigenvalue equation. One more derivation of the integro-differential Painlev\'e II equation, which can be extended to the case of the point processes related to multi-critical fermions, is given in \cite{Krajenbrink}. 

For the proofs of Theorem \ref{thminitialvalue}
and Theorem \ref{thmQ}, we use a Deift-Zhou nonlinear steepest descent \cite{DZ} asymptotic analysis of the RH problem. Whereas this analysis is standard in the case of negative $x$, we encounter several obstacles for a direct application of the method when $x$ is small and even more when $x$ is positive. 
For $x$ small, the most important obstacle is once more that, because $\sigma$ is not necessarily analytic, we cannot deform the jump contours for the RH problems, which is usually an important step in the Deift-Zhou steepest descent method, often called opening of lenses. However, in our setting, we can adapt the RH analysis to avoid this step, see in particular Section \ref{sec:proofii}. 
For $x$ positive, we encounter a similar problem, but in addition to that, for the construction of a local parametrix we need a model RH problem whose solution we cannot construct explicitly. We emphasize once more that we construct this local parametrix without being able to deform jump contours away from the real line, see Section \ref{section:RH3}.

\paragraph{Outline for the rest of the paper.}
In Section \ref{sec:2}, we will relate the determinants  $Q_\sigma$ to a RH problem. 
In Section \ref{sec:3}, we will associate a Lax pair 
to the RH problem, which will allow us to relate it to the KdV equation. In Section \ref{sec:4}, we will further exploit the Lax pair and the RH representation to derive the integro-differential Painlev\'e II equation and to complete the proof of Theorem \ref{theorem:main}.
In Section \ref{sec:5}, we will establish the small time asymptotics of the KdV solution $u_\sigma(x,t)$ for $x\leq -Mt^{1/3}$, $M>0$; in Section \ref{sec:proofii}, we will obtain small time asymptotics for $u_\sigma(x,t)$ for $-Mt^{1/3}\leq x\leq Mt^{1/3}$; in Section \ref{section:RH3} finally, we will derive small time asymptotics for $u_\sigma(x,t)$ when $Mt^{1/3}\leq x\leq K$, and thus complete the proofs of Theorem \ref{thminitialvalue} and Theorem \ref{thmQ}. This will require the use of a model RH problem, whose solvability we will prove in Appendix \ref{appvanishinglemma}, and which encodes information about the function $v_\sigma$ appearing in Theorem \ref{thminitialvalue}. We will also prove the properties of $v_\sigma$ described in Theorem \ref{thmv} in Section \ref{section:RH3}.

\section{RH characterization of $Q_\sigma\left(x,t\right)$}\label{sec:2}

In this section, we will express $\partial_x \log Q_\sigma\left(x,t\right)$, with $Q_\sigma\left(x,t\right)$ defined as in \eqref{def:Q}, in terms of the solution of a $2 \times 2$ RH problem. We will suppose here that $\sigma$ is such that Assumptions \ref{assumptions} hold for some $\gamma\in\left(0,1\right]$.
We follow a procedure developed by Its, Izergin, Korepin, and Slavnov \cite{IIKS}.

We define for any $x\in\mathbb R$, $t>0$, functions $\mathbf{f}\left(u\right)=\mathbf{f}\left(u;x,t\right)$ and $\mathbf{h}\left(v\right)=\mathbf{h}\left(v;x,t\right)$ as follows,
\begin{equation}
\label{def:fgAiry}
	\mathbf{f}\left(u\right) :=\sqrt{\sigma\left(t^{-2/3}u+x/t\right)} \begin{pmatrix} -\i \Ai'\left(u\right) \\ \Ai\left(u\right) \end{pmatrix}, \quad \mathbf{h}\left(v\right) := \sqrt{\sigma\left(t^{-2/3}v+x/t\right)}\begin{pmatrix} -\i\Ai\left(v\right) \\ \Ai'\left(v\right) \end{pmatrix},\qquad u,v\in\mathbb R.
\end{equation}
We can re-write the kernel $K_{\sigma,x,t}$ in \eqref{def:kernel} as
\begin{equation*}
K_{\sigma,x,t}\left(u,v\right)=\frac{\mathbf{f}^\top\left(u\right)\mathbf{h}\left(v\right)}{u-v}
\end{equation*}
which means that the corresponding integral operator $\K^\Ai_{\sigma,x,t}$ is integrable, in the sense of \cite{IIKS}.
Recall from Remark \ref{remark:traceclassandpositivity} that $Q_\sigma\left(x,t\right)=\det\left(1-\K^\Ai_{\sigma, x, t}\right)\in \left(0,1\right]$, so $1 - \mathbb K^\Ai_{\sigma, x, t}$ is invertible and the resolvent operator
$$\L^\Ai_{\sigma, x, t} := \left(1 - \K^\Ai_{\sigma, x, t}\right)^{-1}\K^\Ai_{\sigma, x, t}$$ 
exists.
We now define the $2\times 2$ matrix-valued function
\begin{equation*}
	Y\left(\zeta\right) := I - \int_{\mathbb R} \frac{\mathbf{F}\left(\xi\right)\mathbf{h}^{\top}\left(\xi\right)}{\xi - \zeta}\d\xi, \quad \text{with} \qquad \mathbf{F}\left(\xi\right) = \left[\left(1 - \K^\Ai_{\sigma, x, t}\right)^{-1} \mathbf{f} \right]\left(\xi\right),
\end{equation*}
where the operator $\left(1 - \K^\Ai_{\sigma,x,t}\right)^{-1}$ acts on the vector $\mathbf{f}$, defined in \eqref{def:fgAiry}, component-wise; note that the components of $\mathbf{f}$ are in $L^2$, by the asymptotic behavior of the Airy function and its derivative, and by Assumptions \ref{assumptions}.
We know from \cite{IIKS, DIZ} that the resolvent $\L^\Ai_{\sigma, x, t}$ is also an operator of integrable type, whose kernel is given by
\begin{equation}
\label{eq:kernelresolvent}
L^\Ai_{\sigma, x, t}\left(u,v\right) = \frac{1}{u - v} \mathbf{f}^{\top}\left(u\right)Y_+^{\top}\left(u\right)Y_+^{-\top}\left(v\right)\mathbf{h}\left(v\right),
\qquad u,v\in\R,
\end{equation}
and that $Y$ satisfies the following RH problem, for any $x\in\mathbb R, t>0$.
\subsubsection*{RH problem for $Y$}
\begin{enumerate}[label=(\alph*)]
\item $Y$ is analytic in $\mathbb C\setminus \mathbb R$.
\item $Y$ has boundary values $Y_\pm$ on $\R$ with $Y_\pm-I\in L^2\left(\mathbb R\right)$ and which are continuous except possibly at the points $\zeta_j:=t^{-2/3}r_j + x/t$, $j=1,\ldots, k$, and they are related by
\begin{equation}\label{jumpY1}
	Y_+\left(\zeta\right)=Y_-\left(\zeta\right)J_Y\left(\zeta\right), \quad J_Y\left(\zeta\right)=I-2\pi\i\, \mathbf{f}\left(\zeta\right)\mathbf{h}^{\top}\left(\zeta\right) 
\end{equation}
or more explicitly by
\begin{equation*}
	 Y_+\left(\zeta\right)=Y_-\left(\zeta\right)\begin{pmatrix} 
				1 + 2\pi\i \sigma\left(t^{-2/3}\zeta + x/t\right)\Ai\left(\zeta\right)\Ai'\left(\zeta\right) & - 2\pi  \sigma\left(t^{-2/3}\zeta + x/t\right)\Ai'^2\left(\zeta\right)\\
				-2\pi \sigma\left(t^{-2/3}\zeta + x/t\right)\Ai^2\left(\zeta\right) & 1 - 2\pi\i \sigma\left(t^{-2/3}\zeta + x/t\right)\Ai\left(\zeta\right)\Ai'\left(\zeta\right) 
				\end{pmatrix}.
\end{equation*}
\item As $\zeta\to\infty$, we have, for some $\alpha=\alpha\left(x,t\right)$, $\beta=\beta\left(x,t\right)$, $\eta=\eta\left(x,t\right)$ that
\begin{equation}
\label{asY}
Y\left(z\right)=I+\frac{Y_1}{\z}+\mathcal O\left(\z^{-2}\right),\qquad Y_1 := \begin{pmatrix}\beta&-\i\eta\\
\i\alpha&-\beta
\end{pmatrix}.
\end{equation}
\end{enumerate}

We now transform the RH problem for $Y$ in such a way that the Airy functions disappear from the jump matrices. For this, we use a well-known undressing procedure, which consists of defining 
\begin{equation}\label{eq:defPhi}
\Phi_{\Airy}\left(\zeta\right) =\begin{cases} -\sqrt{2 \pi }\begin{pmatrix} \Ai'\left(\zeta\right) &-\omega\Ai'\left(\omega^2 \zeta\right) \\ \i\Ai\left(\zeta\right) & -\i\omega^2\Ai\left(\omega^2 \zeta\right)  \end{pmatrix} &\mbox{for } \Im \zeta>0,\\ 
	  -\sqrt{2 \pi }\begin{pmatrix} \Ai'\left(\zeta\right) &\omega^2\Ai'\left(\omega \zeta\right) \\ \i\Ai\left(\zeta\right) & \i\omega\Ai\left(\omega \zeta\right)  \end{pmatrix}&\mbox{for } \Im \zeta<0,\end{cases}
\end{equation}
where $\omega=\exp\left(2\pi\i/3\right)$.
Using the asymptotic behavior of the Airy function and its derivative (see, for instance, \cite{DLMF}) and the algebraic relation
$$
	\Ai\left(\zeta\right) + \omega\Ai\left(\omega\zeta\right) + \omega^2\Ai\left(\omega^2\zeta\right) = 0,
$$
one can show that $\Phi_{\Airy}$ satisfies the following RH problem.

\subsubsection*{RH problem for $\Phi_{\Airy}$}
\begin{enumerate}[label=(\alph*)]
\item $\Phi_{\Airy}$ is analytic in $\mathbb C\setminus \mathbb R$.
\item $\Phi_{\Airy}$ has continuous boundary values $\Phi_{\sf\Airy,\pm}$ satisfying the jump condition
\begin{equation}
	\Phi_{\Airy,+}\left(\zeta\right)=\Phi_{\Airy,-}\left(\zeta\right)\begin{pmatrix}
								1 & 1 \\
								0 & 1
							\end{pmatrix}, \quad \zeta \in \mathbb R.
\end{equation}
\item As $\zeta\to\infty$, $\Phi_\Airy$ has the asymptotic behavior
\begin{multline*}
\Phi_\Airy\left(\zeta\right) = \left( I +\frac{1}{\zeta}\begin{pmatrix}0&\frac {7\i}{48}\\
0&0\end{pmatrix} +\mathcal O\left(\frac{1}{\zeta^2}\right) \right) \zeta^{\frac{1}{4} \sigma_3} A^{-1} {\rm e}^{-\frac{2}{3}\zeta^{3/2}\sigma_3}\times\begin{cases}I,&|\arg\zeta|<\pi-\delta,\\
\begin{pmatrix}1&0\\\mp 1&1\end{pmatrix},&\pi-\delta<\pm \arg \zeta<\pi,\end{cases}
\end{multline*}
for any $0<\delta<\pi/2$; here the principal branches of $\zeta^{\frac{1}4\sigma_3}$ and $\zeta^{1/2}$ are taken, analytic in $\mathbb C \backslash \left(-\infty,0\right]$ and positive for $\zeta>0$, and
 \[A = \left(I + \i \sigma_1\right) / \sqrt{2},\qquad \sigma_1=\begin{pmatrix} 0&1 \\ 1 & 0 \end{pmatrix},\qquad \sigma_3 = \begin{pmatrix} 1 & 0 \\ 0 & -1 \end{pmatrix}.\]
\end{enumerate}

Next, we define the affine transformation
\begin{equation}\label{zetaparametrization}
	\zeta\left(z\right) = t^{2/3}z - t^{-1/3}x,
\end{equation}
with inverse 
\begin{equation*}
	z\left(\zeta\right) = t^{-2/3}\zeta + x/t.
\end{equation*}
Finally, we define
\begin{equation}\label{eq:defPsi}
	\Psi\left(z\right) := \begin{pmatrix}1& \frac{\i x^2}{4t}\\0&1
	\end{pmatrix}t^{-\frac{1}{6} \sigma_3}Y\left(\zeta\left(z\right)\right)\Phi_{\Airy}\left(\zeta\left(z\right)\right).
\end{equation}
We now show, using the RH conditions for $\Phi_\Airy$ and $Y$, that $\Psi$ satisfies the following RH problem.

\subsubsection*{RH problem for $\Psi$}
\begin{enumerate}[label=(\alph*)]
\item $\Psi$ is analytic in $\mathbb C\setminus \mathbb R$.
\item $\Psi$ has boundary values $\Psi_\pm$ on $\R$ with $\Psi_\pm\in L^2_{\rm loc}(\R)$ and which are continuous on the real line except at the points $r_1, \ldots ,r_k$, and they are related by
\begin{equation}\label{jumpPsi}
\Psi_+\left(z\right)=\Psi_-\left(z\right)\begin{pmatrix}1&1-\s(z)\\0&1\end{pmatrix}.
\end{equation}
\item As $z\to\infty$, there exist functions $p= p_{\sigma}\left(x,t\right), q= q_{\sigma}\left(x,t\right)$ and $r= r_{\sigma}\left(x,t\right)$ such that $\Psi$ has the asymptotic behavior
\begin{multline}
\label{eq:Psiasymp}
\Psi\left(z\right) = \left(I + \frac{1}{z} \begin{pmatrix} q & -\i r \\ \i p & -  q \end{pmatrix} + \mathcal O\left(\frac{1}{z^2}\right) \right) z^{\frac{1}{4} \sigma_3} A^{-1} {\rm e}^{\left(-\frac{2}{3}tz^{3/2} + x z^{1/2}\right)\sigma_3}
\\		
\times\begin{cases}I,&|\arg z|<\pi-\delta,\\
\begin{pmatrix}1&0\\\mp 1&1\end{pmatrix},&\pi-\delta<\pm \arg z<\pi,\end{cases}
\end{multline}
for any $0<\delta<\pi/2$; here the principal branches of $z^{\frac{1}4\sigma_3}$ and $z^{1/2}$ are taken, analytic in $\mathbb C \setminus \left(-\infty,0\right]$ and positive for $z>0$.
\end{enumerate}

\begin{remark}
A similar procedure has been used in \cite{CafassoClaeys} in the case $\sigma = \sigma_{\mathrm{KPZ}}$, but with a different set of contours for $\Psi$ (consisting of four different rays) and uniform asymptotics at infinity. Here, we are forced to make the unusual choice of formulating the RH problem for $\Psi$ with a jump on the real axis only and with slightly more involved asymptotics at infinity, because we do not assume $\s$ to have an analytic extension to a neighborhood of the real line.
\end{remark}

Condition (a) above follows from the fact that both $Y$ and $\Phi_{\Airy}$ are analytic function on $\mathbb C\setminus \mathbb R$. The continuity and $L^2$ conditions in (b) follow from the analogous conditions for $Y$, and the jump relation follows from the relations
\begin{equation}
\label{usefulfhequations}
	\mathbf{f}\left(\zeta\left(z\right)\right) = \i\sqrt{\frac{\s(z)}{2 \pi}}\Phi_{\Airy,-}\left(\zeta\left(z\right)\right)\begin{pmatrix} 1 \\ 0 \end{pmatrix},
\quad 
\mathbf{h}^\top\left(\zeta\left(z\right)\right) = \sqrt{\frac{\s(z)}{2 \pi}} \begin{pmatrix} 0 & -1 \end{pmatrix}\Phi_{\Airy,+}^{-1}\left(\zeta\left(z\right)\right)
\end{equation}
that are to be used in the last step of the following chain of equalities,
\begin{align}
\nonumber
	\Psi_-^{-1}\left(z\right)\Psi_+\left(z\right)&=
	\Phi_{\Airy,-}^{-1}\left(\zeta\left(z\right)\right)Y_-^{-1}\left(\zeta\left(z\right)\right)Y_+\left(\zeta\left(z\right)\right)\Phi_{\Airy,+}\left(\zeta\left(z\right)\right) \\ 	
	&=  \Phi_{\Airy,-}^{-1}\left(\zeta\left(z\right)\right)\left(I - 2\pi\i\,\mathbf{f}\left(\zeta\left(z\right)\right)\mathbf{h}^\top\left(\zeta\left(z\right)\right)\right) \Phi_{\Airy,+}\left(\zeta\left(z\right)\right)=
	\begin{pmatrix}
	1 & 1-\sigma(z) \\ 0 & 1
	\end{pmatrix}, \label{proofjump}
\end{align}
valid for all $z \in \mathbb R$.
Condition (c) for $\Psi$ follows from the analogous condition for $\Phi_{\Airy}$, together with the asymptotics $Y\left(\zeta\right) =I+\mathcal O(1/\zeta)$ as $\zeta \to \infty$ and the change of variables \eqref{zetaparametrization}.
\qed \\

In what follows, we will need the following equations, relating the asymptotic expansions of $Y$ and $\Psi$ at infinity.

\begin{proposition}
	Let $p,q,r$ defined as in \eqref{eq:Psiasymp}, and $\alpha,\beta,\eta$ as in \eqref{asY}. Then 
\begin{align}
\nonumber
p\left(x,t\right)&=\frac{\alpha\left(x,t\right)}{t^{1/3}}+\frac{x^2}{4t},\qquad q\left(x,t\right)=\frac{t^{2/3}\beta\left(x,t\right)-x^2\alpha\left(x,t\right)}{t^{4/3}}-\frac{x^4+8tx}{32t^2},
\\
\label{eq:pqalphabeta}
r\left(x,t\right)&=-\frac{\alpha(x,t) x^4}{16t^{8/3}}+\frac{\beta(x,t) x^2(1+t^{1/3})}{4t^2}+\frac{\eta(x,t)} t-\frac{28 t^2+16 t x^3+x^6}{192 t^3}.
\end{align}
\end{proposition}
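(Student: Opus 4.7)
The plan is to substitute the large-$z$ asymptotics of $Y$ from \eqref{asY} and of $\Phi_{\Airy}$ from its RH problem into the definition \eqref{eq:defPsi}, and to match the resulting expansion against \eqref{eq:Psiasymp}; reading off the $z^{-1}$-coefficient then yields \eqref{eq:pqalphabeta}.

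First, expand $\zeta(z)=t^{2/3}z-t^{-1/3}x$ binomially; the delicate term is the phase
\[-\tfrac{2}{3}\zeta^{3/2}=-\tfrac{2}{3}tz^{3/2}+xz^{1/2}+g(z),\qquad g(z)=-\tfrac{x^{2}}{4t}z^{-1/2}-\tfrac{x^{3}}{24t^{2}}z^{-3/2}+\mathcal O(z^{-5/2}),\]
so that $\exp(-\tfrac{2}{3}\zeta^{3/2}\sigma_{3})=E(z)\exp(g(z)\sigma_{3})$, with $E(z)=\exp((-\tfrac{2}{3}tz^{3/2}+xz^{1/2})\sigma_{3})$ the exponential factor of \eqref{eq:Psiasymp}. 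The residual factor $\exp(g\sigma_{3})$ carries half-integer powers of $z$, which must be absorbed into the prefactor $(I+M_{1}/z+\cdots)$ of the target expansion.

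The key algebraic identity is $A^{-1}\sigma_{3}A=-\sigma_{2}$ (a one-line check using $\sigma_{1}\sigma_{3}\sigma_{1}=-\sigma_{3}$), which turns $A^{-1}\exp(g\sigma_{3})A$ into $\exp(-g\sigma_{2})$. A further conjugation by $z^{\sigma_{3}/4}$, using
\[z^{\sigma_{3}/4}\sigma_{2}z^{-\sigma_{3}/4}=\begin{pmatrix}0 & -iz^{1/2}\\ iz^{-1/2} & 0\end{pmatrix},\]
converts each factor $g(z)\,z^{\pm 1/2}$ into a genuine Laurent series in $z^{-1}$. Expanding $\exp(-g\sigma_{2})=\cosh(g)I-\sinh(g)\sigma_{2}$ to order $z^{-1}$ produces a constant-in-$z$ off-diagonal term $\bigl(\begin{smallmatrix}0 & -ix^{2}/(4t)\\ 0 & 0\end{smallmatrix}\bigr)$, which is exactly cancelled by the off-diagonal entry $ix^{2}/(4t)$ of the prefactor $B(x,t)$ in \eqref{eq:defPsi}: this explains the role of that prefactor and ensures that the leading order of the product reduces to the identity.

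The coefficient of $z^{-1}$ in $B\cdot Y(\zeta)\,\Phi_{\Airy}(\zeta)\cdot[z^{\sigma_{3}/4}A^{-1}E(z)]^{-1}$ then assembles from five contributions: the $1/\zeta$ term $Y_{1}/(t^{2/3}z)$ of $Y(\zeta)$; the $1/\zeta$ term $(7i/48)\zeta^{-1}\bigl(\begin{smallmatrix}0&1\\0&0\end{smallmatrix}\bigr)$ of the Airy parametrix prefactor; the correction $-\tfrac{x}{4tz}\sigma_{3}$ from the $1/z$ term of $\zeta^{\sigma_{3}/4}z^{-\sigma_{3}/4}=t^{\sigma_{3}/6}(I-\tfrac{x}{4tz}\sigma_{3}+\mathcal O(z^{-2}))$; the $z^{-1}$ piece of $\sinh(g)$, which receives a contribution from the leading $z^{-1/2}$ term of $g$ (via multiplication by $z^{-1/2}$) \emph{and} from the subleading $z^{-3/2}$ term (via multiplication by $z^{1/2}$); and the $g^{2}/2$ correction in $\cosh(g)$. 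Direct matrix multiplication then produces $M_{1}=\begin{pmatrix}q & -ir\\ ip & -q\end{pmatrix}$, and entry-by-entry comparison yields \eqref{eq:pqalphabeta}. The statement is purely algebraic with no analytic content; the only real obstacle is the bookkeeping of the several contributions above and the tracking of signs in $A^{-1}\sigma_{3}A=-\sigma_{2}$ and in the $\cosh(g)$, $\sinh(g)$ expansions.
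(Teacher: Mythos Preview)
Your proposal is correct and takes essentially the same approach as the paper: both substitute the large-$z$ expansions of $Y$ and $\Phi_{\Airy}$ into \eqref{eq:defPsi}, rewrite $\zeta^{\sigma_3/4}A^{-1}\e^{-\frac{2}{3}\zeta^{3/2}\sigma_3}$ in the form $(\text{series in }z^{-1})\cdot z^{\sigma_3/4}A^{-1}E(z)$, and read off the $z^{-1}$ coefficient. The paper simply states the resulting intermediate expansion and leaves the rest as ``straightforward computations''; your use of the identity $A^{-1}\sigma_3 A=-\sigma_2$ together with the $\cosh/\sinh$ decomposition is a clean way to explain \emph{why} the half-integer powers of $z$ recombine into an integer-power series, which the paper does not spell out.
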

\proof

We observe that $\z^{\s_3/4}\left(z\right)A^{-1}\e^{-\frac 23\z^{3/2}\left(z\right)\s_3}$ behaves, for large $z$, as
\begin{equation*}
t^{\s_3/6}
\begin{pmatrix}1 & -\frac {\i x^2}{4t} \\ 0 & 1 \end{pmatrix}
\left(
	I+
	\frac 1z \renewcommand*{\arraystretch}{1.3}\begin{pmatrix}-\frac{x^4+8tx}{32t^2} & \frac{\i x^3 \left(16 t+x^3\right)}{192 t^3} \\ \frac{\i x^2}{4t} & \frac{x^4+8tx}{32t^2} \end{pmatrix}
	+\O\left(z^{-2}\right)
\right)
z^{\s_3/4}A^{-1}\e^{\left(-\frac 23tz^{3/2}+xz^{1/2}\right)\s_3}.
\end{equation*}
Hence
\begin{equation*}
\Psi\left(z\right)\sim
\left(I+\frac {\wt Y_1+\wt\Phi_{\Airy,1}}{t^{2/3}z}+\O\left(z^{-2}\right)\right)
\left(
	I+
	\frac 1z \renewcommand*{\arraystretch}{1.3}\begin{pmatrix}-\frac{x^4+8tx}{32t^2} & \frac{\i x^3 \left(16 t+x^3\right)}{192 t^3} \\ \frac{\i x^2}{4t} & \frac{x^4+8tx}{32t^2} \end{pmatrix}
	+\O\left(z^{-2}\right)
\right)
z^{\s_3/4}A^{-1}\e^{\left(-\frac 23tz^{3/2}+xz^{1/2}\right)\s_3}
\end{equation*}
where
\begin{equation*}
\wt Y_1=\begin{pmatrix}1 & \frac {\i x^2}{4t} \\ 0 & 1 \end{pmatrix}t^{-\s_3/6}Y_1t^{\s_3/6}\begin{pmatrix}1 & -\frac {\i x^2}{4t} \\ 0 & 1 \end{pmatrix},\qquad 
\wt\Phi_{\Airy,1}=\begin{pmatrix}1 & \frac {\i x^2}{4t} \\ 0 & 1 \end{pmatrix}t^{-\s_3/6}\begin{pmatrix} 0 & \frac {7\i}{48} \\ 0 & 0\end{pmatrix}t^{\s_3/6}\begin{pmatrix}1 & -\frac {\i x^2}{4t} \\ 0 & 1 \end{pmatrix},
\end{equation*}
and the proof is completed through straightforward computations.
\qed\\

We conclude the section with the derivation of the following relation between the Fredholm determinant $Q_{\sigma}$ and $\Psi$.

\begin{proposition}\label{prop:PsiQ}
	The RH problem for $\Psi$ has a unique solution $\Psi = \Psi\left(z;x,t\right)$ for any $x \in \R$ and $t > 0$, and we have the identity
	\begin{equation}\label{eq:PsiQ}
		\partial_x \log Q_\sigma\left(x,t\right) = -\frac{1}{2\pi\i  t}\int_{\mathbb R} \left( \Psi_+^{-1}\left(r;x,t\right)\Psi'_+\left(r;x,t\right)
		\right)_{21}\d\s(r),
	\end{equation}
denoting $(')=\frac{\d}{\d r}$.
\end{proposition}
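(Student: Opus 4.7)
Existence of $\Psi$ is immediate from \eqref{eq:defPsi}: $Y$ exists by the IIKS construction \cite{IIKS} since the resolvent $(1-\K^\Ai_{\sigma,x,t})^{-1}$ exists (Remark \ref{remark:traceclassandpositivity}), $\Phi_\Airy$ is explicit, and the jump and asymptotic properties of $\Psi$ have been verified in the text above. For uniqueness, I apply the standard Liouville argument to $M=\Psi_1\Psi_2^{-1}$: the common jump $J_\Psi$ drops out so $M$ extends analytically across $\R\setminus\{r_1,\ldots,r_k\}$; the $L^2_{\rm loc}$ hypothesis forces removable singularities at each $r_j$, making $M$ entire; the common asymptotic prefactor in \eqref{eq:Psiasymp} cancels in the ratio, leaving $M=I+\mathcal O(z^{-1})$ at infinity; hence $M\equiv I$.

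\textbf{The identity.} My plan is to split
\[
\partial_x\log Q_\sigma = -\tr(\partial_x\K^\Ai_{\sigma,x,t}) - \tr\bigl(\L^\Ai_{\sigma,x,t}\partial_x\K^\Ai_{\sigma,x,t}\bigr)
\]
and match the two contributions against the target integral separately. The naive trace is direct: $\partial_x K_{\sigma,x,t}(u,u) = t^{-1}\sigma'(t^{-2/3}u+x/t)K^\Ai(u,u)$, and the change of variables $u=\zeta(z)$ gives
\[
-\tr(\partial_x\K^\Ai_{\sigma,x,t}) = -t^{-1/3}\int_\R K^\Ai(\zeta(z),\zeta(z))\sigma'(z)\,\d z.
\]
For the resolvent trace, I would use \eqref{eq:kernelresolvent} together with \eqref{usefulfhequations} to rewrite $\L^\Ai_{\sigma,x,t}\partial_x\K^\Ai_{\sigma,x,t}$ in terms of $Y_+$ and $\Phi_\Airy$, and then translate to $\Psi$ via \eqref{eq:defPsi}.

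\textbf{Recombination and main obstacle.} The key algebraic identity is
\[
(\Phi_{\Airy,\pm}^{-1}\Phi'_{\Airy,\pm})_{21}(\zeta) = 2\pi\i\,K^\Ai(\zeta,\zeta),
\]
a direct cofactor calculation from \eqref{eq:defPhi} using $\det\Phi_\Airy\equiv 1$ (read off from the asymptotic prefactor, each of whose factors has unit determinant). Combined with the decomposition
\[
(\Psi_+^{-1}\Psi_+')_{21} = t^{2/3}\bigl[(\Phi_{\Airy,+}^{-1}Y_+^{-1}Y_+'\Phi_{\Airy,+})_{21}+(\Phi_{\Airy,+}^{-1}\Phi'_{\Airy,+})_{21}\bigr]
\]
coming from \eqref{eq:defPsi}, the purely-Airy piece of $(\Psi_+^{-1}\Psi_+')_{21}$ accounts precisely for $-\tr(\partial_x\K^\Ai_{\sigma,x,t})$ once integrated against $-\frac{1}{2\pi\i t}\d\sigma$, while the $Y$-piece is to account for the resolvent trace, reconstructing \eqref{eq:PsiQ}. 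The main obstacle is the delicate bookkeeping: carefully tracking all numerical factors through the change of variables $\zeta\mapsto z$ and the undressing, and handling the Riemann--Stieltjes interpretation of the point masses at the discontinuities $r_j$ of $\sigma$ (handled by smooth regularization or direct local analysis near each $r_j$).
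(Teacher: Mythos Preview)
Your proposal is correct and follows essentially the same route as the paper: Jacobi variational formula, IIKS resolvent expression \eqref{eq:kernelresolvent}, and undressing via $\Phi_\Airy$ through \eqref{eq:defPsi} and \eqref{usefulfhequations}. The organization differs only cosmetically: you split $(1-\K)^{-1}=1+\L$ upfront and match the two traces to the $\Phi_\Airy$-piece and $Y$-piece of $(\Psi_+^{-1}\Psi_+')_{21}$, whereas the paper passes directly to $-t^{-1/3}\int L_{\sigma,x,t}(\zeta(r),\zeta(r))\sigma(r)^{-1}\d\sigma(r)$ and splits via the product rule on $\partial_\zeta(\mathbf f^\top Y_+^\top)$. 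These decompositions coincide term by term (your naive trace equals the paper's $(\mathbf f')^\top\mathbf h$ term, since the $\sigma'$-contribution to $\mathbf f'$ annihilates against $\mathbf h$), and your identity $(\Phi_{\Airy,\pm}^{-1}\Phi_{\Airy,\pm}')_{21}=2\pi\i K^\Ai(\zeta,\zeta)$ is a clean repackaging of the Airy cancellation the paper carries out explicitly in \eqref{eq1prfPsiQ}--\eqref{eq3prfPsiQ}. The Dirac masses at the $r_j$ are handled in the paper by observing that $L_{\sigma,x,t}(\zeta(r),\zeta(r))/\sigma(r)$ is continuous there, which is exactly the content of Remark \ref{remark:continuityPsirk} read back through $Y$.
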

\begin{remark}\label{remark:continuityPsirk}
Although $\Psi\left(z;x,t\right)$ is not continuous at the points $z=r_k$, it is not hard to see (using the RH conditions for $\Psi$) that $\left( \Psi_+^{-1}\left(z;x,t\right)\Psi'_+\left(z;x,t\right)\right)_{21}$ exists and is continuous at $z=r_k$.
\end{remark}
\proof
The uniqueness of $\Psi$ follows from standard arguments, see e.g. \cite[Theorem 7.18]{Deift}.
Indeed, let $\Psi_1,\Psi_2$ be two solutions to the RH problem for $\Psi$ and set $T(z)=\Psi_1(z)\Psi_2^{-1}(z)$.
Along $\R$ we have $T_+=T_-$ and therefore $T(z)$ is a meromorphic function of $z$ with isolated singularities at $z=r_1,...,r_k$; however since the boundary values of $\Psi_{1,2}$ at $\R$ are locally $L^2$, it follows that each $r_j$ is a removable singularity, and that $T$ is entire.
Finally, condition (c) of the RH problem implies that $T(z)=I+\O\left(z^{-1}\right)$ as $z\to\infty$ so by Liouville's theorem $T=I$ identically.
(One can reason similarly for the uniqueness of $Y$.)
As for \eqref{eq:PsiQ}, we start by applying Jacobi variational formula
\begin{equation}
\label{eq:derlog}
	\partial_x \log Q_\s\left(x,t\right)=-\Tr\left[\left(1-\K^\Ai_{\sigma,x,t}\right)^{-1}\partial_x \K^\Ai_{\sigma,x,t}\right]=-t^{-1/3}\int_{\mathbb R}\frac{L_{\sigma,x,t}\left(\zeta\left(r\right),\zeta\left(r\right)\right)}{\s(r)}\d\s(r),
\end{equation}
where the trace is expressed by standard manipulations (see e.g. \cite[Exercise 9.3.1]{Forrester} and \cite{ClaeysDoeraene}) in terms of the resolvent kernel $L^\Ai_{\sigma,x,t}\left(u,v\right)$, given in \eqref{eq:kernelresolvent}.
We remark that in this formula, $\d\s(r)$ contains Dirac delta measures at the singularities of $\s$ (see the discussion in Section 1, below Theorem \ref{theorem:main}); however, as a direct consequence of \eqref{eq:kernelresolvent}, bearing in mind the definition \eqref{def:fgAiry} and the RH conditions satisfied by $Y$, the ratio of $L_{\sigma,x,t}\left(\zeta\left(r\right),\zeta\left(r\right)\right)$ by $\s(r)$ is continuous on the real line and so the integral with respect to $\d\s$ in \eqref{eq:derlog} is well defined.
Using now the expression 
\be
\nonumber
L^\Ai_{\sigma,x,t}\left(\zeta,\zeta\right) = \frac{\d}{\d \zeta} \left( \mathbf{f}^\top\left(\zeta\right) Y_+^\top\left(\zeta\right) \right) Y_+^{-\top}\left(\zeta\right) \mathbf{h}\left(\zeta\right),
\ee
stemming from \eqref{eq:kernelresolvent}, we obtain
\begin{multline}\label{eq1prfPsiQ}
	\partial_x \log Q_{\sigma}\left(x,t\right) = -t^{-1/3}\ \int_{\mathbb R}  \left(\Ai'^2\left(\zeta\left(r\right)\right) - \zeta\left(r\right) \Ai^2\left(\zeta\left(r\right)\right)\right)\d\s(r) \\-t^{-1/3}\ \int_\mathbb R \mathbf{h}^{\top}\left(\zeta\left(r\right)\right)Y_+^{-1}\left(\zeta\left(r\right)\right)Y_+'\left(\zeta\left(r\right)\right)\mathbf{f}\left(\zeta\left(r\right)\right) \frac{\d\s(r)}{\s(r)}.
\end{multline}
We now analyze the second term on the right hand side. Using \eqref{eq:defPsi} we obtain
\begin{multline*}
\int_\mathbb R \mathbf{h}^\top\left(\zeta\left(r\right)\right)Y_+^{-1}\left(\zeta\left(r\right)\right)Y_+'\left(\zeta\left(r\right)\right)\mathbf{f}\left(\zeta\left(r\right)\right) \frac{\d\s(r)}{\s(r)} \\
= \int_\mathbb R \mathbf{h}^\top\left(\zeta\left(r\right)\right) \Phi_{\Airy,+}\left(\zeta\left(r\right)\right)\Psi^{-1}\left(r\right) \frac{\d}{\d r}\left(\Psi\left(r\right)\Phi_{\Airy,+}^{-1}\left(\zeta\left(r\right)\right)\right)\mathbf{f}\left(\zeta\left(r\right)\right)\frac{\d\s(r)}{\s(r)},
\end{multline*}
which is equal to
\begin{multline}
 -\int_{\mathbb R} \mathbf{h}^\top\left(\zeta\left(r\right)\right) \Phi_{\Airy,+}'\left(\zeta\left(r\right)\right)\Phi_{\Airy,+}^{-1}\left(\zeta\left(r\right)\right) \mathbf{f}\left(\zeta\left(r\right)\right) \frac{\d\s(r)}{\s(r)}\\  + \int_{\mathbb R} \mathbf{h}^\top\left(\zeta\left(r\right)\right)\Phi_{\Airy,+}\left(\zeta\left(r\right)\right)\Psi^{-1}\left(r\right) \Psi'\left(r\right) \Phi_{\Airy,+}^{-1}\left(\zeta\left(r\right)\right) \mathbf{f}\left(\zeta\left(r\right)\right)\frac{\d\s(r)}{\s(r)}. \label{eq2prfPsiQ}  
\end{multline}
On the other hand, using \eqref{usefulfhequations} and the additional equalities
\begin{align*}
	\mathbf{f}\left(\zeta\left(z\right)\right) &= \sqrt{\frac{\s(z)}{2 \pi}} \Phi_{\Airy,+}\left(\zeta\left(z\right)\right) \begin{pmatrix} 1 \\ 0 \end{pmatrix}, \\
	 \mathbf{h}^\top\left(\zeta\left(z\right)\right) \Phi'_{\Airy,+}\left(\zeta\left(z\right)\right) &= -\sqrt{2\pi \sigma(z)} \begin{pmatrix} -\i \zeta\left(z\right)\Ai^2\left(\zeta\left(z\right)\right) + \i\Ai'^2\left(\zeta\left(z\right)\right) \\ \star \end{pmatrix}, 
\end{align*}
(where $\star$ is some value which does not play any role in the sequel) we can re-write \eqref{eq2prfPsiQ} as 
\begin{multline}
	\int_\mathbb R \mathbf{h}^\top\left(\zeta\left(r\right)\right)Y_+^{-1}\left(\zeta\left(r\right)\right)Y_+'\left(\zeta\left(r\right)\right)\mathbf{f}\left(\zeta\left(r\right)\right) \frac{\d\s(r)}{\s(r)}  \\ 
	=	\int_\mathbb R \left(\zeta
	{\left(r\right)}
\Ai^2\left(\zeta\left(r\right)\right) - \Ai'^2\left(\zeta\left(r\right)\right) \right)\d\s(r) + \frac{t^{-2/3}}{2\pi\i } \int_{\mathbb R} \left(\Psi^{-1}\left(r\right) \Psi'\left(r\right) \right)_{21} \d\s(r). \label{eq3prfPsiQ}  
\end{multline}
Note that, in the expression above, the prime denotes the derivative with respect to $\zeta$ in the first term and with respect to $z=r$ in the second.
Combining equation \eqref{eq1prfPsiQ} with  \eqref{eq3prfPsiQ} we finally obtain the proof of equation \eqref{eq:PsiQ}. \qed

\section{Derivation of the Lax pair for KdV}\label{sec:3}

In this section we want to associate a Lax pair to the RH problem for $\Psi$.
 More precisely, we want to find  linear matrix differential equations for $\Psi$ with respect to the variables $x$ and $t$. In fact, it will turn out convenient to slightly modify $\Psi$, as in \eqref{eq:defTheta} below, in such a way that the differential equations will become simpler.
Since the jump matrix for $\Psi$ does not depend on $x$ or $t$, we can use standard arguments to deduce such differential equations. For this,  we first need RH conditions for $\partial_x\Psi$ and $\partial_t\Psi$. We should note that it is easy to deduce RH conditions for the derivatives formally, but that it is less obvious how to justify the existence of the $x$- and $t$-derivatives of $\Psi$ (note that $\sigma$ is not necessarily analytic), and how to justify that the asymptotics for $\partial_x\Psi$ or $\partial_t\Psi$ can be obtained by formally differentiating \eqref{eq:Psiasymp}. 
{Using the operator theory and integral equations associated with RH problems, we will first show how to differentiate $Y$ with respect to $x$ and $t$, and then we will carry this over to $\Psi$.} Let us denote with $C$ the Cauchy operator defined by
\begin{equation}
	\left(Cg\right)\left(\zeta\right) = \frac{1}{2\pi\i } \int_\mathbb R g\left(\xi\right) \frac{\d \xi}{\xi - \zeta}, \quad \zeta \in \mathbb C\backslash \mathbb R,
\end{equation}
and with $C_{\pm}$ its non-tangential boundary values, as $\zeta$ approaches the real line from above ($+$) or from below ($-$).
Given $\zeta(z)$ as in \eqref{zetaparametrization} and the matrix-valued function $J_Y$ defined on $\mathbb R$ by \eqref{jumpY1}, we define
\begin{equation}\label{def:Yhat}
\widehat Y(z)=Y(\zeta(z)),\qquad J_{\widehat Y}(z)=J_Y(\zeta(z)),
\end{equation} 
such that $\widehat Y$ satisfies the following RH problem (equivalent to the one for $Y$).
\subsubsection*{RH problem for $\widehat Y$}
\begin{enumerate}[label=(\alph*)]
\item $\widehat Y$ is analytic in $\mathbb C\setminus \mathbb R$,
\item $\widehat Y$ has boundary values $\widehat Y_\pm$ on the real line with $\widehat Y_\pm -I \in L^2\left(\mathbb R\right)$ and which are continuous  except possibly at the points $r_1,\ldots, r_k$, and they are related by
\begin{equation}
	\widehat Y_+(z)=\widehat Y_-(z){J_{\widehat Y}(z)},\qquad z\in\mathbb R\setminus\left\lbrace r_1,\dots,r_k\right\rbrace,
\end{equation}
\item $ \widehat Y(z) = I+\dfrac{Y_1}{t^{2/3}z} + \mathcal O(z^{-2})$ as $z \to \infty$, with $Y_1$ as in \eqref{asY}.
\end{enumerate}
Next, we set
\begin{equation}
\label{def:Cauchyop}
C_{J_{\widehat Y}}\left(g\right) := C_-\left(g\left(J_{\widehat Y} - I\right)\right).
\end{equation}

Since $C_\pm$ are bounded operators on $L^2\left(\mathbb R\right)$, $C_{J_{\widehat Y}}$ is a bounded operator on $L^2\left(\mathbb R\right)$, because $J_{\widehat Y}-I\in L^\infty\left(\mathbb R\right)$ (this follows from the third part of Assumptions \ref{assumptions}). 
\begin{lemma}\label{lemma:YCauchy}
	Let $J_{\widehat Y}$ be as in \eqref{jumpY1}, \eqref{def:Yhat}. The operator $1 - C_{J_{\widehat Y}}$ is invertible on $L^2(\mathbb R)$ and the (unique) solution of the RH problem for $\widehat Y$ is given by 	\begin{equation}\label{eq:Cauchy1}
		\widehat Y = I + C\left(\widehat Y_-\left(J_{\widehat Y} - I\right)\right),
	\end{equation}
	where $\widehat Y_-$ is uniquely determined by the equation
	\begin{equation}\label{eq:Cauchy2}
		\widehat Y_- = I + C_{J_{\widehat Y}}\left(\widehat Y_-\right)
	\end{equation}
	or, equivalently,
	\begin{equation}\label{eq:Cauchy3}
		 \widehat Y_- - I = \left(1 - C_{J_{\widehat Y}}\right)^{-1}C_-\left(J_{\widehat Y} - I\right).
	\end{equation}
\end{lemma}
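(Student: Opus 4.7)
The plan is to reduce this lemma to an application of the Its--Izergin--Korepin--Slavnov (IIKS) correspondence of \cite{IIKS, DIZ}, which ties invertibility of $1-C_{J_{\widehat Y}}$ on $L^2(\R)$ directly to that of $1-\K^\Ai_{\sigma,x,t}$. The argument naturally splits into three pieces: derivation of the integral representation from the RH conditions, invertibility of the singular integral operator, and the resulting existence/uniqueness of $\widehat Y$.

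First I would derive the Cauchy representation \eqref{eq:Cauchy1} and the fixed-point equation \eqref{eq:Cauchy2} from the RH data. Starting from any solution $\widehat Y$ of the RH problem, $\widehat Y-I$ is analytic on $\C\setminus\R$, has $L^2$ boundary values with additive jump $\widehat Y_+ - \widehat Y_- = \widehat Y_-(J_{\widehat Y}-I)$ across $\R$, and decays like $\O(z^{-1})$ at infinity. Since $J_{\widehat Y}-I\in L^\infty(\R)\cap L^2(\R)$ (after the affine change of variable, by the third part of Assumptions \ref{assumptions}) and $\widehat Y_--I\in L^2(\R)$, the product $\widehat Y_-(J_{\widehat Y}-I)$ lies in $L^2(\R)$, so the Cauchy integral formula applied to $\widehat Y - I$ yields \eqref{eq:Cauchy1}. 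Taking the $-$ boundary value and invoking the Plemelj--Sokhotski identity $C_+-C_-=\mathrm{id}$ rewrites this as \eqref{eq:Cauchy2}.

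The heart of the argument is proving that $1-C_{J_{\widehat Y}}$ is invertible on $L^2(\R)$; this is the main obstacle. For this I would invoke the IIKS theorem: whenever a jump matrix has the integrable form $J = I - 2\pi\i\,\mathbf f\mathbf h^\top$ with $\mathbf f^\top \mathbf h\equiv 0$, the Cauchy operator $1 - C_J$ on $L^2$ is invertible if and only if $1-\mathcal K$ is invertible on $L^2$, where $\mathcal K$ is the associated integrable operator with kernel $\mathbf f^\top(u)\mathbf h(v)/(u-v)$. Our $J_{\widehat Y}$ is precisely $J_Y$ pulled back through the affine substitution $\zeta(z) = t^{2/3}z - t^{-1/3}x$, and the corresponding integrable operator is unitarily equivalent (via this substitution on $L^2$) to $\K^\Ai_{\sigma,x,t}$. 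By Remark \ref{remark:traceclassandpositivity} the latter is trace class and $\det(1-\K^\Ai_{\sigma,x,t}) = Q_\sigma(x,t)>0$, so $1-\K^\Ai_{\sigma,x,t}$ is invertible; transferring this through IIKS gives invertibility of $1-C_{J_{\widehat Y}}$.

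With invertibility in hand, \eqref{eq:Cauchy2} is equivalent to \eqref{eq:Cauchy3}, determining $\widehat Y_-$ uniquely in $I+L^2(\R)$. Substituting back into \eqref{eq:Cauchy1} produces $\widehat Y$, and a direct Plemelj--Sokhotski calculation confirms it satisfies all the RH conditions, giving existence. Uniqueness of the RH solution itself follows by the same Liouville-type argument used in Proposition \ref{prop:PsiQ}: the ratio $T=\widehat Y_1 \widehat Y_2^{-1}$ of two solutions has no jump across $\R$, its $L^2_{\mathrm{loc}}$ boundary behavior forces the points $r_j$ to be removable singularities, so $T$ is entire, and the normalization at infinity forces $T\equiv I$.
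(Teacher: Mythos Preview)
Your proposal is correct and follows essentially the same route as the paper: both reduce the invertibility of $1-C_{J_{\widehat Y}}$ to that of $1-\K^\Ai_{\sigma,x,t}$ via the IIKS/DIZ correspondence (the paper simply cites \cite[Lemma~2.12]{DIZ}), and both obtain \eqref{eq:Cauchy1}--\eqref{eq:Cauchy2} from the standard Cauchy/Plemelj formulation (the paper cites \cite[equations (2.5)--(2.6)]{DIZ}). The only difference is that you spell out the mechanism where the paper defers to \cite{DIZ}.
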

\begin{proof}
	This result is a rephrasing of the discussion of Section 2 in \cite{DIZ}. In particular, the fact that the operator $1 - C_{J_{\widehat Y}}$ is invertible is proven in Lemma 2.12 of \cite{DIZ}, and it follows from the fact that the operator $1 - \mathbb K_{\sigma,x,t}^\Ai$ is invertible, since $\det\left(1 - \mathbb K^\Ai_{\sigma,x,t}\right) \in \left(0,1\right]$. As for the equations \eqref{eq:Cauchy1} and \eqref{eq:Cauchy2}, these are a rephrasing of equations (2.5) and (2.6) in \cite{DIZ}.
\end{proof}
\begin{lemma}
\label{lemmaYdifferentiable}
The function $\widehat Y$ defined by \eqref{def:Yhat} is differentiable with respect to $x$ and $t$ for any $z \in \mathbb C\backslash \mathbb R$, and so are its boundary values $\widehat Y_\pm\left(z\right)$, for any $z \in \mathbb R\backslash{\left\{r_1,\ldots,r_k\right\}}$. Moreover $\partial_x \widehat Y$ and $\partial_t \widehat Y$ solve the RH problem detailed below, where $\partial$ denotes the derivative either with respect to $x$ or $t$. 
\end{lemma}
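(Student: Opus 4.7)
The plan is to exploit the integral equation representation from Lemma~\ref{lemma:YCauchy} to propagate differentiability of the jump matrix $J_{\widehat Y}$ to differentiability of $\widehat Y_-$ in $L^2(\mathbb R)$, and then to $\widehat Y(z)$ at any $z\in\mathbb C\setminus\mathbb R$ via the Cauchy transform. The key observation is that the $\sigma$-factors entering $J_{\widehat Y}$ via \eqref{def:fgAiry} become $\sqrt{\sigma(z)}$ after the change of variables $\zeta=\zeta(z)$, since $t^{-2/3}\zeta(z)+x/t=z$; hence the $(x,t)$-dependence of $J_{\widehat Y}-I$ enters only through the Airy factors $\Ai(\zeta(z))$ and $\Ai'(\zeta(z))$, which are smooth in $(x,t)$ for each $z$ and whose $(x,t)$-derivatives still decay rapidly in $z$. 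Using the third part of Assumptions~\ref{assumptions} to control the $\sqrt{\sigma(z)}$ factor, one checks that $(x,t)\mapsto J_{\widehat Y}-I$ is smooth as a map into $L^2(\mathbb R)\cap L^\infty(\mathbb R)$.

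Consequently, $C_{J_{\widehat Y}}$ defined in \eqref{def:Cauchyop} is smooth in $(x,t)$ as a bounded operator on $L^2(\mathbb R)$, and $C_-(J_{\widehat Y}-I)$ is smooth as an $L^2$-valued function. Since $\mathrm{Id}-C_{J_{\widehat Y}}$ is invertible at every $(x,t)$ by Lemma~\ref{lemma:YCauchy} and invertibility is open, its inverse is smooth in $(x,t)$ by the usual Neumann expansion. Equation \eqref{eq:Cauchy3} then gives smoothness of $\widehat Y_- -I$ as an $L^2(\mathbb R)$-valued map, and \eqref{eq:Cauchy1} transfers this to pointwise smoothness of $\widehat Y(z)$ for any fixed $z\in\mathbb C\setminus\mathbb R$. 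Differentiation under the integral sign is justified by the product rule applied to $\widehat Y_-(J_{\widehat Y}-I)$ together with the above $L^2$ estimates and local uniformity in the parameters. At any $z_0\in\mathbb R\setminus\{r_1,\ldots,r_k\}$ one moreover has a neighborhood on which $J_{\widehat Y}-I$ is jointly smooth in $(z,x,t)$, so Plemelj--Sokhotski formulas give smoothness of $\widehat Y_\pm(z)$ for such $z$.

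With differentiability in hand, the RH conditions for $\partial\widehat Y$ are obtained by formally differentiating those for $\widehat Y$. Differentiating $\widehat Y_+=\widehat Y_- J_{\widehat Y}$ yields
\[
(\partial\widehat Y)_+=(\partial\widehat Y)_- J_{\widehat Y}+\widehat Y_-\,\partial J_{\widehat Y},
\]
which, rewritten in the form $(\partial\widehat Y\cdot\widehat Y^{-1})_+-(\partial\widehat Y\cdot\widehat Y^{-1})_-=\widehat Y_-\,\partial J_{\widehat Y}\,J_{\widehat Y}^{-1}\widehat Y_-^{-1}$ (or another such additive reformulation), is the expected jump condition for the derivative. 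The large-$z$ expansion follows by differentiating condition (c) for $\widehat Y$; the exchange of $\partial$ with the asymptotic expansion can be justified from \eqref{eq:Cauchy1} and the $L^1$-decay of $\widehat Y_-(J_{\widehat Y}-I)$, using that $\partial J_{\widehat Y}$ inherits the same rapid decay as $J_{\widehat Y}-I$ from the Airy factors.

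The main obstacle is that $\sigma$ is only piecewise smooth, so $J_{\widehat Y}-I$ is discontinuous at $r_1,\ldots,r_k$ and $\widehat Y_\pm$ are merely $L^2_{\mathrm{loc}}$ there. One has to verify that the operator-theoretic arguments above work with this reduced regularity, which is precisely the reason for setting everything up in the $L^2$-framework of \cite{DIZ} rather than in spaces of continuous functions. A secondary technical point is to ensure that the asymptotic expansion of $\partial\widehat Y$ at $\infty$ does not pick up spurious contributions from the singular points $r_j$, which will follow from the fact that the integrand in \eqref{eq:Cauchy1} is integrable uniformly near each $r_j$ thanks to the $\sqrt{\sigma(z)}$ factors.
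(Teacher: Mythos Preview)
Your proposal is correct and follows essentially the same route as the paper: smoothness of $J_{\widehat Y}$ in $(x,t)$ (with $\partial J_{\widehat Y}\in L^2$) is fed through the integral equations \eqref{eq:Cauchy3} and \eqref{eq:Cauchy1} to obtain differentiability of $\widehat Y_-$ and $\widehat Y$, the jump relation is obtained by the product rule, and the large-$z$ behaviour is read off from the Cauchy representation. One small correction: in your final remark, the local integrability near each $r_j$ is not due to the $\sqrt{\sigma(z)}$ factors (which are bounded and nonvanishing there) but simply to the fact that $J_{\widehat Y}-I$ and $\partial J_{\widehat Y}$ are bounded near $r_j$ while $\widehat Y_--I$ is in $L^2$; the $\sigma$-decay matters only at $z\to -\infty$.
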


\subsubsection*{RH problem for $\partial \widehat Y$}
\begin{enumerate}[label=(\alph*)]
\item $\partial \widehat Y$ is analytic in $\mathbb C\setminus \mathbb R$,
\item $\partial \widehat Y$ has boundary values $\partial \widehat Y_\pm$ on the real line in $L^2\left(\mathbb R\right)$ and which are continuous  except possibly at the points $r_1,\ldots, r_k$, and they are related by
\begin{equation}\label{eq:jumppartialY}
	\partial \widehat Y_+=\partial\left(\widehat Y_-{J_{\widehat Y}}\right),
\end{equation}
\item $ \partial \widehat Y(z) = \dfrac{\partial \left(t^{-2/3}Y_1\right)}{z} + \mathcal O(z^{-2})$ as $z \to \infty$, with $Y_1$ defined as in \eqref{asY}.
\end{enumerate}

\begin{proof}
	We start by observing that $J_{\widehat Y}(z)$ is differentiable with respect to $x$ and $t$, because the Airy function is smooth. Moreover, $\partial J_{\widehat Y}$ is in $L^2(\mathbb R)$ for any $x\in\mathbb R$, $t>0$. Indeed, the form of the jump matrix $J_{Y}$ (see \eqref{jumpY1} and the equation thereafter) and \eqref{def:Yhat} imply that $\partial J_{\widehat Y}$ is in $L^2(\mathbb R)$ provided that 
\[\sigma(r)\partial \left({\rm Ai}^2(\zeta(r))\right),\qquad \sigma(r)\partial \left({\rm Ai}'^2(\zeta(r))\right),\qquad \sigma(r)\partial \left({\rm Ai}(\zeta(r)){\rm Ai}'(\zeta(r))\right)\]
are in $L^2(\mathbb R)$.
The latter follows from the asymptotics of the Airy function and its derivative together with the Airy differential equation ${\rm Ai}''(\zeta)=\zeta {\rm Ai}(\zeta)$, \eqref{zetaparametrization}, and the third part of Assumptions \ref{assumptions}.

Returning to \eqref{eq:Cauchy3}, we obtain that $\widehat Y_-$ is differentiable with respect to $x$ and $t$, and satisfies the equation
	\begin{equation*}
		\partial \widehat Y_- = C_{J_Y}\left(\partial \widehat Y_-\right) + C_-\left(\widehat Y_-\partial {J_{\widehat Y}}\right) \quad \text{or, equivalently, } \quad \partial \widehat Y_- = \left(1 - C_{J_{\widehat Y}}\right)^{-1} C_-\left(\widehat Y_-\partial {J_{\widehat Y}}\right).
	\end{equation*}
	Consequently, also the right hand side of equation \eqref{eq:Cauchy1} is differentiable with respect to $x$ and $t$, and so is $\widehat Y$. In particular, 
	\begin{equation}\label{eq:Cauchy5}
		\partial \widehat Y = C\left(\partial \widehat Y_-\left({J_{\widehat Y}} - I\right) + \widehat Y_-\partial {J_{\widehat Y}}\right).
	\end{equation}
	Taking into account that $C_+ - C_- = I$ and that $\partial\widehat Y_\pm = C_\pm\left(\partial \widehat Y_-\left({J_{\widehat Y}} - I\right) + \widehat Y_-\partial {J_{\widehat Y}}\right)$, we recover the jump condition \eqref{eq:jumppartialY}. As for the asymptotics at infinity of $\partial \widehat Y$, rewriting \eqref{eq:Cauchy5} as
\begin{align*}
		\partial \widehat Y (z) &= -\frac{1}{2 \pi \i } \int_\R \frac{\partial\left(\widehat Y_-(J_{\widehat Y} - I) \right)(\xi)}{z - \xi} \d\xi
		\\
		&=		-\frac{1}{2 \pi \i z} \int_\R \partial\left(\widehat Y_-(J_{\widehat Y} - I) \right)(\xi) \d\xi-\frac{1}{2 \pi \i z^2} \int_\R \partial\left(\widehat Y_-(J_{\widehat Y} - I) \right)(\xi) \frac{\xi\d\xi}{1-\frac{\xi}{z}}
		\\
		&=-\frac{1}{2 \pi \i z} \int_\R \partial\left(\widehat Y_-(J_{\widehat Y} - I) \right)(\xi) \d\xi+\O\left(z^{-2}\right)
\end{align*}
as $z\to\infty$.
We can conclude by  observing that the term in $z^{-1}$ is indeed the derivative of $t^{-2/3}Y_1$, as it is immediately seen from \eqref{eq:Cauchy1} and condition (c) in the RH problem for $\widehat Y$.
\end{proof}

We can now prove the differentiability of $\Psi$, as defined in \eqref{eq:defPsi}, with respect to $x$ and $t$. 

\begin{lemma}
For any $z\in\mathbb C\setminus\mathbb R$,	$\Psi(z)$ is differentiable with respect to $x$ and $t$, and its derivative satisfies the RH problem detailed below, where $\partial$ denotes the derivative either with respect to $x$ or $t$.
\end{lemma}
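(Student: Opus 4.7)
The strategy is to bootstrap the result from Lemma \ref{lemmaYdifferentiable} via the explicit definition \eqref{eq:defPsi}. Write $\Psi(z) = M(x,t)\,\widehat Y(z)\,\Phi_{\Airy}(\zeta(z))$, where $M(x,t) = \bigl(\begin{smallmatrix}1 & \i x^2/(4t) \\ 0 & 1\end{smallmatrix}\bigr) t^{-\sigma_3/6}$ is jointly smooth in $(x,t)$ for $t>0$; by Lemma \ref{lemmaYdifferentiable}, $\widehat Y(z)$ is differentiable in $(x,t)$ for $z\in\mathbb C\setminus\mathbb R$; and $\Phi_{\Airy}(\zeta(z))$ depends on $(x,t)$ only through the affine map $\zeta(z)=t^{2/3}z-t^{-1/3}x$, so is smooth in $(x,t)$ away from the real line by analyticity of $\Phi_{\Airy}$ in each half plane. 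Consequently the product rule and the chain rule yield differentiability of $\Psi$ at any $z\in\mathbb C\setminus\mathbb R$, and of its boundary values $\Psi_\pm(z)$ at any $z\in\mathbb R\setminus\{r_1,\dots,r_k\}$.

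For the jump condition, the crucial observation is that the jump matrix $J_\Psi(z)=\bigl(\begin{smallmatrix}1 & 1-\sigma(z) \\ 0 & 1\end{smallmatrix}\bigr)$ in \eqref{jumpPsi} does not depend on $(x,t)$. Hence differentiating $\Psi_+(z)=\Psi_-(z)J_\Psi(z)$ in either $x$ or $t$ gives $\partial\Psi_+(z)=\bigl(\partial\Psi_-(z)\bigr)J_\Psi(z)$ on $\mathbb R\setminus\{r_1,\dots,r_k\}$; the $L^2_{\mathrm{loc}}$ character of $\partial\Psi_\pm$ is inherited from that of $\partial\widehat Y_\pm$ together with the local smoothness of $M$, $\Phi_{\Airy}$ and $\zeta$. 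Analyticity of $\partial\Psi$ off the real line follows from the analyticity of $\Psi$ there, since differentiation in the external parameters commutes with the Cauchy--Riemann equations in $z$.

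The subtler point is the behavior at infinity of $\partial\Psi$, obtained by formally differentiating the asymptotic expansion \eqref{eq:Psiasymp}. When the exponential factor $\mathrm e^{(-\frac{2}{3}tz^{3/2}+xz^{1/2})\sigma_3}$ is differentiated in $x$ it produces a $z^{1/2}\sigma_3$, while $\partial_t$ produces $-\frac{2}{3}z^{3/2}\sigma_3$, so $\partial\Psi$ grows polynomially in $z$ at infinity. To justify that differentiation of the expansion is legitimate, I would argue as in Lemma \ref{lemmaYdifferentiable}: starting from the Cauchy integral representation \eqref{eq:Cauchy1} and its differentiated form \eqref{eq:Cauchy5}, the coefficient of $z^{-1}$ in $\partial\widehat Y$ is $\partial(t^{-2/3}Y_1)$ and the remainder is $\mathcal O(z^{-2})$; combining with the (classical, fully explicit) asymptotic expansion of $\Phi_{\Airy}(\zeta(z))$ and the smooth dependence of $M(x,t)$, the product rule yields a controlled expansion for $\partial\Psi$ with the correct leading exponential behavior and a term-by-term derivative structure matching formal differentiation of \eqref{eq:Psiasymp}.

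The main obstacle, and the reason this step is written out as a separate lemma, is precisely the validity of this term-by-term differentiation in $z$: because $\sigma$ is not assumed analytic and the jump stays on the real line, one cannot appeal to simple contour-deformation arguments, and must instead route everything through the Cauchy operator and the invertibility of $1-C_{J_{\widehat Y}}$ established in Lemma \ref{lemma:YCauchy}. Once this is done, the statement of the RH problem for $\partial\Psi$ follows formally.
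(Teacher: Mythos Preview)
Your proposal is correct and follows essentially the same approach as the paper: both use the factorization $\Psi(z)=M(x,t)\,\widehat Y(z)\,\Phi_{\Airy}(\zeta(z))$, invoke Lemma~\ref{lemmaYdifferentiable} for the differentiability of $\widehat Y$, rely on the explicit Airy expression for the smoothness of $\Phi_{\Airy}(\zeta(z))$, observe that the jump matrix is independent of $(x,t)$, and justify the asymptotic expansion of $\partial\Psi$ by combining the term-by-term expansions of the factors (in the paper via the explicit relations \eqref{eq:pqalphabeta}). Your additional remarks on why contour deformation is unavailable are apt and match the paper's motivation for this setup.
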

\subsubsection*{RH problem for $\partial\Psi$}
\begin{enumerate}[label=(\alph*)]
\item $\partial \Psi$ is analytic in $\mathbb C\setminus \mathbb R$,
\item $\partial \Psi$ has boundary values $\partial\Psi_\pm$ which are $L^2$ on any real compact, and continuous on the real line except possibly at $r_1,\ldots, r_k$, and they are related by
\begin{equation}\label{jumppartialPsi}\partial\Psi_+\left(z\right)=\left(\partial\Psi_-\left(z\right)\right)\begin{pmatrix}1&1-\s(z)\\0&1\end{pmatrix},
\end{equation}
\item As $z\to\infty$,
\begin{align}
\pa	\Psi\left(z\right) = \Bigg(\left(I + \frac{1}{z} \begin{pmatrix} q & -\i r \\ \i p & -  q \end{pmatrix} + \mathcal O\left(\frac{1}{z^2}\right) \right) z^{\frac{1}{4} \sigma_3} A^{-1} \partial \left(-\frac{2}{3}tz^{3/2} + x z^{1/2}\right)\s_3 {\rm e}^{\left(-\frac{2}{3}tz^{3/2} + x z^{1/2}\right)\sigma_3} \nonumber \\
	+ \left( \frac{1}{z} \partial \begin{pmatrix} q & -\i r \\ \i p & -  q \end{pmatrix} + \mathcal O\left(\frac{1}{z^2} \right)\right) z^{\frac{1}{4} \sigma_3} A^{-1}  {\rm e}^{\left(-\frac{2}{3}tz^{3/2} + x z^{1/2}\right)\sigma_3}  \Bigg) \times\begin{cases}I,&|\arg z|<\pi-\delta,\\
\begin{pmatrix}1&0\\\mp 1&1\end{pmatrix},&\pi-\delta<\pm \arg z<\pi,\end{cases}\label{eq:asymppartialPsiI}
\end{align}
for any $0<\delta<\pi/2$; here the principal branches of $z^{\frac{1}4\sigma_3}$ and $z^{1/2}$ are taken, analytic in $\mathbb C \backslash \left(-\infty,0\right]$ and positive for $z>0$.
\end{enumerate}
\begin{proof}
By definition, see equation \eqref{eq:defPsi} and \eqref{def:Yhat},
	$$\Psi(z) = \begin{pmatrix}1& \frac{\i x^2}{4t}\\0&1
	\end{pmatrix}t^{-\frac{1}{6} \sigma_3}\widehat Y\left(z\right)\Phi_{\Airy}\left(\zeta\left(z\right)\right)$$
and every factor on the right hand side of \eqref{eq:defPsi} is differentiable with respect to $x$ and $t$. 
Indeed, for $\widehat Y(z)$ this follows from Lemma \ref{lemmaYdifferentiable}, and for $\Phi_\Ai(\zeta(z))$ we have an explicit expression in terms of the Airy function and its derivative. Hence, also $\Psi$ is differentiable with respect to $x$ and $t$. The asymptotic expansion \eqref{eq:asymppartialPsiI}, which can be formally obtained by differentiating \eqref{eq:Psiasymp}, is rigorously proven by differentiating the definition of $\Psi$ (equation \eqref{eq:defPsi}) and by using the asymptotic expansion of $\widehat Y(z)$, $\Phi_\Ai(\zeta(z))$ and their derivatives at infinity, together with the explicit formulas \eqref{eq:pqalphabeta}. As for the jump condition, this is the same as the jump condition for $\Psi\left(z\right)$ as $\Psi_-^{-1}\left(z\right)\Psi_+\left(z\right)$ does not depend on $x$ or $t$ (see equation \eqref{proofjump}).
\end{proof}

Define
\begin{equation}\label{eq:defTheta}
	\Theta\left(z;x,t\right) := {\rm e}^{\frac{\pi \i}4 \sigma_3}\begin{pmatrix} 1 & -\i p\\ 0 & 1 \end{pmatrix}\Psi\left(z;x,t\right){\rm e}^{-\frac{\pi \i}4 \sigma_3},
\end{equation}
where $p = p_\s\left(x,t\right)$ is given by \eqref{eq:Psiasymp}.

\begin{proposition}\label{prop:KdVLaxpair}
$\Theta = \Theta(z;x,t)$ defined by \eqref{eq:defTheta} satisfies the Lax pair equations
\begin{align}
\nonumber
	\partial_x \Theta &= B\Theta, \qquad B=B\left(z;x,t\right) := \begin{pmatrix} 0 & -z + 2u \\ -1 & 0 \end{pmatrix}, 
\\ \label{eq:Laxpair}
	\partial_t \Theta &= C\Theta, \qquad C =C\left(z;x,t\right):= \renewcommand*{\arraystretch}{1.4}\begin{pmatrix} -\frac{1}3 \partial_x u & \frac{2}3 z^2 - \frac{2}3 z u - \frac{4}3 u^2 - \frac{1}3 \partial_x^2 u \\ \frac{2}3 z + \frac{2}3 u & \frac{1}3 \partial_x u\end{pmatrix},
\end{align}
with $p, q$ as in \eqref{eq:Psiasymp}, and
\begin{equation}\label{upQ}
u = u(x,t) := \partial_x p(x,t)=-2q(x,t)-p^2(x,t).
\end{equation}
\end{proposition}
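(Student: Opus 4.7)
The strategy is the standard one for extracting an isomonodromic Lax pair from an RH problem, adapted to the non-analyticity of $\sigma$. The jump matrix in \eqref{jumpPsi} does not depend on $x$ or $t$, and the gauge factors defining $\Theta$ in \eqref{eq:defTheta} conjugate this jump matrix in an $(x,t)$-independent way: the factors $e^{\pm i\pi\sigma_3/4}$ are constant, and the $p$-dependent upper-triangular factor commutes with the (also upper-triangular) jump. Hence the jump matrix of $\Theta$ is also $(x,t)$-independent, and $\partial_x\Theta\cdot\Theta^{-1}$, $\partial_t\Theta\cdot\Theta^{-1}$ have no jumps across $\mathbb R\setminus\{r_1,\ldots,r_k\}$. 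Arguments analogous to Remark~\ref{remark:continuityPsirk}, based on the $L^2_{\mathrm{loc}}$ conditions from the RH problems for $\Psi$ and $\partial\Psi$, show that the isolated singularities at each $r_j$ are removable, so both logarithmic derivatives are entire in $z$.

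The large-$z$ behavior is controlled by \eqref{eq:Psiasymp}. Write $\Psi=\tilde\Psi\,\Psi_\infty$ with $\Psi_\infty(z):=z^{\sigma_3/4}A^{-1}e^{\theta\sigma_3}$, $\theta:=-\tfrac{2}{3}tz^{3/2}+xz^{1/2}$, and $\tilde\Psi=I+\Psi_1/z+\Psi_2/z^2+\O(z^{-3})$ with $\Psi_1=\begin{pmatrix}q & -ir\\ ip & -q\end{pmatrix}$. A direct computation using $A^{-1}\sigma_3 A=\begin{pmatrix}0 & i\\ -i & 0\end{pmatrix}$, $\partial_x\theta=z^{1/2}$, $\partial_t\theta=-\tfrac{2}{3}z^{3/2}$ gives
\[
\partial_x\Psi_\infty\cdot\Psi_\infty^{-1}=\begin{pmatrix}0 & iz\\ -i & 0\end{pmatrix},\qquad\partial_t\Psi_\infty\cdot\Psi_\infty^{-1}=\begin{pmatrix}0 & -\tfrac{2i}{3}z^2\\ \tfrac{2i}{3}z & 0\end{pmatrix},
\]
polynomials in $z$ of degrees one and two. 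Combined with the $I+\O(z^{-1})$ correction and Liouville's theorem, $\partial_x\Theta\cdot\Theta^{-1}$ must be a polynomial of degree one in $z$, and $\partial_t\Theta\cdot\Theta^{-1}$ a polynomial of degree two.

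For the $x$-equation I use the decomposition
\[
\partial_x\Psi\cdot\Psi^{-1}=(\partial_x\tilde\Psi)\tilde\Psi^{-1}+\tilde\Psi\,(\partial_x\Psi_\infty\cdot\Psi_\infty^{-1})\,\tilde\Psi^{-1}.
\]
Since the left-hand side is polynomial of degree one in $z$, the $z^{-1}$ coefficient of the right-hand side must vanish; its $(2,1)$ entry is precisely $\partial_x p+2q+p^2=0$, which gives the identity $u:=\partial_x p=-2q-p^2$ claimed in the statement. Reading off the degree-one and degree-zero coefficients yields
\[
\partial_x\Psi\cdot\Psi^{-1}=\begin{pmatrix}p & iz+2iq\\ -i & -p\end{pmatrix}.
\]
Passing from $\Psi$ to $\Theta$ by the gauge \eqref{eq:defTheta}: the upper-triangular conjugation annihilates both diagonal entries (via $p-p=0$) and converts $iz+2iq$ into $i(z+2q+p^2)=i(z-u)$; the $p$-dependence of the gauge contributes $(\partial_xM)M^{-1}=\begin{pmatrix}0 & u\\ 0 & 0\end{pmatrix}$; finally the outer conjugation by $e^{\pm i\pi\sigma_3/4}$ multiplies off-diagonal entries by $\pm i$. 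Assembling the pieces produces exactly the matrix $B$ of \eqref{eq:Laxpair}.

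The $t$-equation is treated by the same method but with heavier bookkeeping: the polynomial degree is two, so one must expand $\tilde\Psi(\partial_t\Psi_\infty\cdot\Psi_\infty^{-1})\tilde\Psi^{-1}$ through order $z^{-1}$, collecting contributions from both $\Psi_1$ and $\Psi_2$. The remaining entries of the vanishing $z^{-1}$ coefficient in the $x$-analysis supply $\partial_x q$ and $\partial_x r$ in terms of the asymptotic data; together with $u=-2q-p^2$ and $\partial_xu=\partial_x^2 p$, these allow one to re-express every entry of the final matrix in terms of $u$, $\partial_xu$, $\partial_x^2 u$ alone. The main obstacle is purely algebraic: verifying the cancellations that make the dependence on $p$, $q$, $r$ and on the entries of $\Psi_2$ collapse into the tidy matrix $C$ of \eqref{eq:Laxpair}. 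These are precisely the cancellations that make the zero-curvature condition $\partial_t B-\partial_x C+[B,C]=0$ equivalent to the KdV equation for $u$, as will be exploited in the next section.
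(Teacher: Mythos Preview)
Your argument is correct and follows essentially the same route as the paper. Both proofs establish that $(\partial\Psi)\Psi^{-1}$ is entire by combining the $(x,t)$-independence of the jump with the $L^2_{\rm loc}$ boundary conditions to remove the singularities at $r_j$, then use the large-$z$ asymptotics and Liouville's theorem to identify it as a polynomial of the correct degree; both derive the constraint $u=\partial_x p=-2q-p^2$ from the vanishing of the $z^{-1}$ coefficient in the $x$-analysis, and both finish by the explicit gauge transformation \eqref{eq:conjugBC}. The paper computes $\widetilde B,\widetilde C$ for $\Psi$ first and only then gauges, whereas you interleave the gauge with the computation, but this is purely organisational. Your treatment of the $t$-equation is left at the same level of detail as the paper's (which also just records the form of $\widetilde C$ and declares the passage to $C$ a straightforward computation); in particular your remark that the remaining entries of the $z^{-1}$ vanishing in the $x$-analysis are what allow the $\Psi_2$-dependence in the $z^0$ coefficient of $\widetilde C$ to be traded for $\partial_x q,\partial_x r$ is exactly how the paper's displayed form of $\widetilde C$ arises.
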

\begin{proof}
Consider the matrix-valued functions \[\widetilde B := \left(\partial_x\Psi\left(z\right)\right) \Psi^{-1}\left(z\right),\qquad \widetilde C := \left(\partial_t\Psi\left(z\right)\right)\Psi^{-1}\left(z\right).\] By \eqref{jumpPsi} and \eqref{jumppartialPsi}, $\widetilde B$ and $\widetilde C$ have no jumps on $\mathbb R\setminus\left\lbrace r_1,\ldots, r_k\right\rbrace$. Hence, $\widetilde B$ and $\widetilde C$ are analytic functions with possibly isolated singularities at $r_1,\ldots,r_k$. The fact that the boundary values of $\Psi$ and $\partial\Psi$ are locally $L^2$ implies that the singularities are removable, and that $\widetilde B$ and $\widetilde C$ can be extended to entire functions.
From \eqref{eq:Psiasymp} and \eqref{eq:asymppartialPsiI}, we infer that $\widetilde B$ and $\widetilde C$ are polynomials in $z$
of degree $1$ and $2$ respectively, which take the following form:
\begin{equation*}
	\widetilde B := \begin{pmatrix} p & \i z + 2 \i q \\ \\ -\i & -p\end{pmatrix}, \quad \quad \widetilde 
	C := \begin{pmatrix} -\frac{2}3 z p + \frac{2}3 \partial_x q & -\frac{2}3 \i z^2 - \frac{4}3 \i z q - \frac{2}3 \i \partial_x r \\ \\ \frac{2}3 \i z - \frac 43 \i q - \frac{2}3 \i p^2 & \frac{2}3 z p - \frac{2}3 \pa_xq
	\end{pmatrix}.
\end{equation*}
Moreover, while using \eqref{eq:Psiasymp} and \eqref{eq:asymppartialPsiI} to compute $\widetilde B$, we know that the $1/z$ term in the large $z$ expansion must vanish, and this gives us the last identity in \eqref{upQ}.
We can then compute
\begin{equation}\label{eq:conjugBC}
	B = R \widetilde B R^{-1} + \left(\partial_x R\right) R^{-1}, \quad C = R \widetilde C R^{-1} + \left(\partial_t R\right) R^{-1}, \quad \quad R :=  {\rm e}^{\frac{\pi\i}4 \sigma_3}\begin{pmatrix} 1 & -\i p\\ 0 & 1 \end{pmatrix},
\end{equation}
{and this yields equations \eqref{eq:Laxpair} by \eqref{upQ}.}
\end{proof}

As a first consequence of the differential equation in $x$, if we denote $\Theta$ as
\begin{equation}\label{def:phiTheta}
	\Theta\left(z;x,t\right) =\sqrt{2\pi}
	\renewcommand*{\arraystretch}{1.2}	
	\begin{pmatrix}
		\partial_x\phi_\s\left(z;x,t\right) & -\partial_x\wt\phi_\s\left(z;x,t\right)\\
		-\phi_\s\left(z;x,t\right) & \wt\phi_\s\left(z;x,t\right)
	\end{pmatrix},
\end{equation}
then both $\phi_\s,\wt\phi_\s$ are solutions of the Schr\"odinger equation \eqref{eq:Schrodinger} with $u$ in place of $u_\s$.
By \eqref{eq:Psiasymp} and \eqref{eq:defTheta}, as $z \to  \infty$ with $|\arg z|<\pi-\delta$, $\phi_\s$ has the following asymptotics:
\begin{equation*}
	\phi_\s\left(z\right) = \frac{1}{2\sqrt{\pi} z^{\frac{1}4}} \left(1 + \mathcal O\left(z^{-1/2}\right) \right){\rm e}^{-\frac{2}3 tz^{3/2} + xz^{1/2}}\sim t^{1/6}{\rm Ai}\left(t^{2/3}z -xt^{-1/3}\right).
\end{equation*}

As a second consequence, since $\partial_x\partial_t\Theta=\partial_t\partial_x\Theta$, we have the compatibility condition
$$
\partial_t B - \partial_x C + [B,C] = 0.
$$
Substituting the expressions \eqref{eq:Laxpair} in this equation, we obtain after a straightforward computation that $u$ solves the KdV equation \eqref{eq:KdV}. 

In conclusion, we have proven that $u$, defined as in \eqref{upQ}, satisfies the KdV equation \eqref{eq:KdV} and found a function $\phi_\s$ solving \eqref{eq:Schrodinger} and \eqref{eq:eveq}, with potential $u$ in place of $u_\s$, with asymptotics \eqref{eqthm:phiAi}. To complete the proof of Theorem \ref{theorem:main}, it is then enough to show that actually $u=u_\s$ defined in \eqref{def:uQ}, which we will do in the next section.

\section{Derivation of the integro-differential Painlev\'e II equation}\label{sec:4}

We first restate Proposition \ref{prop:PsiQ} by expressing the $x$-logarithmic derivative of the Fredholm determinant $Q_\sigma\left(x,t\right)$, defined in \eqref{def:Q}, in terms of $\Theta$ rather than $\Psi$. We have
\begin{equation*}
\partial_x \log Q_\sigma\left(x,t\right) = -\frac{1}{2 \pi t} \int_{\mathbb R} \left(\Theta^{-1}\left(z\right)\left(\pa_z\Theta\left(z\right)\right) \right)_{21} \d\s(z) = \frac{1}{t} \int_{\mathbb R} \left( \phi_x\left(z\right)\phi_z\left(z\right) -\phi\left(z\right)\phi_{xz}\left(z\right) \right) \d \s(z),
\end{equation*}
where the first equality follows from \eqref{eq:PsiQ} along with the definition \eqref{eq:defTheta} of $\Theta$, and the second one from \eqref{def:phiTheta}. Here we denote for brevity $\phi=\phi_\s$ and derivatives by subscripts.

Taking one more $x$-derivative and using the Schr\"odinger equation $\phi_{xx}=(z-2u)\phi$, we obtain
\be
\label{eq:proofQphi}
\partial^2_x \log Q_\sigma\left(x,t\right) = \frac{1}{t} \int_{\mathbb R} \left( \phi_{xx}\left(z\right)\phi_z\left(z\right) -\phi\left(z\right)\phi_{xxz}\left(z\right) \right) \d \s(z)=-\frac 1t \int_{\mathbb R} \phi^2\left(z\right) \d \s(z).
\ee

To complete the identification of $u$ with $u_\s$ we prove the following.

\begin{proposition}
\label{prop:proofu2}
Let $\phi_\s\left(z;x,t\right)$ be defined by \eqref{def:phiTheta}, and let $u\left(x,t\right)$ be defined as in \eqref{upQ}. Then, for every $x \in \R$ and $t > 0$, we have the identity
	\begin{equation}\label{equphi}
		\int_\R \phi^2\left(r;x,t\right)\d\s(r) =  \frac x2 - tu(x,t).
	\end{equation}
\end{proposition}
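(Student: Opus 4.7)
The plan is to derive \eqref{equphi} from a direct residue calculation on the RH problem for $\Psi$. I will study the matrix function $A(z):=(\partial_z \Psi)(z)\Psi^{-1}(z)$, which is analytic in $\C\setminus\R$, from two angles: its jump across $\R$, extracted from the RH problem, and its large-$z$ expansion, obtained from \eqref{eq:Psiasymp}. Matching the two resulting expressions for the $(2,1)$ entry of $A$ via a Cauchy-Stieltjes representation will yield \eqref{equphi}.

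First I compute the jump of $A$ on $\R$. Denoting the jump matrix of $\Psi$ by $J_\Psi$ (see \eqref{jumpPsi}), the relation $\Psi_+=\Psi_-J_\Psi$ together with $\det\Psi\equiv 1$ yields $A_+(r)-A_-(r)=\Psi_-(r)(\partial_r J_\Psi(r))J_\Psi^{-1}(r)\Psi_-^{-1}(r)$. A direct computation gives $(\partial_r J_\Psi)J_\Psi^{-1}=\begin{pmatrix}0 & -\sigma'(r)\\ 0 & 0\end{pmatrix}$, and using $\Psi_{21}=-\i\sqrt{2\pi}\phi_\sigma$ (a consequence of \eqref{def:phiTheta} combined with \eqref{eq:defTheta}), one finds $(A_+-A_-)_{21}(r)=\sigma'(r)\Psi_{21}^2(r)=-2\pi\sigma'(r)\phi_\sigma^2(r)$. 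Allowing for the point masses of $\d\sigma$ at the singular points $r_1,\dots,r_k$, the distributional jump of $A_{21}$ across $\R$ corresponds to the Stieltjes measure $-2\pi\phi_\sigma^2(r)\,\d\sigma(r)$.

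Next I determine the large-$z$ behavior of $A_{21}$. Writing $\Psi(z)=N(z)L(z)$ with $N(z)=I+\Psi_1/z+O(z^{-2})$ and $L(z)=z^{\sigma_3/4}A^{-1}\e^{\theta(z)\sigma_3}$, $\theta(z)=-\tfrac{2}{3}tz^{3/2}+xz^{1/2}$, and using the identity $A^{-1}\sigma_3 A=-\sigma_2$, one computes
\[
(\partial_z L)L^{-1}=\frac{\sigma_3}{4z}+\theta'(z)\begin{pmatrix}0 & \i z^{1/2}\\ -\i z^{-1/2} & 0\end{pmatrix}.
\]
Its $(2,1)$ entry is $\i t-\i x/(2z)$; however, conjugation by $N$ mixes the growing off-diagonal entry $-\i t z+\i x/2$ with the $(2,1)$ entry $\i p$ of $\Psi_1$, producing an essential $-\i t p^2/z$ contribution. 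Collecting all terms up to order $1/z$ and invoking $u=-2q-p^2$ from \eqref{upQ} yields
\[
A_{21}(z) = \i t + \frac{-\i x/2 -\i t(2q+p^2)}{z}+O(z^{-2}) = \i t +\frac{-\i x/2+\i t u(x,t)}{z}+O(z^{-2}).
\]

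Finally, since $A_{21}(z)-\i t$ is analytic off $\R$ and decays at infinity, the Sokhotski-Plemelj formula (interpreted as a Cauchy-Stieltjes integral against $\d\sigma$) gives
\[
A_{21}(z)-\i t = \frac{1}{2\pi\i}\int_\R \frac{-2\pi\phi_\sigma^2(\xi)}{\xi-z}\,\d\sigma(\xi) = \i\int_\R\frac{\phi_\sigma^2(\xi)}{\xi-z}\,\d\sigma(\xi),
\]
whose large-$z$ expansion begins with $-(\i /z)\int_\R\phi_\sigma^2\,\d\sigma$. Matching the $1/z$ coefficients of the two expressions for $A_{21}$ gives $-\i\int_\R\phi_\sigma^2\,\d\sigma=-\i x/2+\i t u$, which simplifies to \eqref{equphi}. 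The main obstacle is the asymptotic computation: one must carefully track the growing off-diagonal entries of $(\partial_z L)L^{-1}$ through the conjugation by $N$ in order to capture the crucial $t p^2/z$ mixing term that combines with $2tq/z$ to produce the $-tu/z$ coefficient (missing this term would give $x/2+2tq$ in place of $x/2-tu$). A secondary subtlety is the distributional handling of the point jumps of $\sigma$ at the $r_j$, for which the Sokhotski-Plemelj formula must be read as a Cauchy-Stieltjes integral.
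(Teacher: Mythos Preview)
Your proposal is correct and follows essentially the same approach as the paper's proof: both study the additive jump of $(\partial_z\Psi)\Psi^{-1}$ across $\R$, identify its $(2,1)$ entry with $-2\pi\phi_\sigma^2\,\d\sigma$, and match this against the $1/z$ coefficient in the large-$z$ expansion (what the paper calls the ``formal residue at infinity''). Your explicit factorization $\Psi=NL$ to compute that coefficient is simply a more detailed version of the paper's one-line computation of $\Res_{z=\infty}\bigl((\partial_z\Psi)\Psi^{-1}\bigr)_{21}$, and your distributional treatment of the point masses at the $r_j$ is the counterpart of the paper's explicit local analysis: the paper writes $\Psi=E\begin{pmatrix}1 & \frac{m_j}{2\pi\i}\log(z-r_j)\\ 0 & 1\end{pmatrix}$ near $r_j$ with $E$ locally analytic, reads off that $(\partial_z\Psi)\Psi^{-1}$ has a simple pole there with residue $-\i m_j\phi_\sigma^2(r_j)$ in the $(2,1)$ entry, and combines these formal residues with the one at infinity via contour deformation. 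If you want your write-up to be fully rigorous, it is worth spelling out this local computation rather than invoking ``distributional jump'' as a black box, and likewise justifying the Cauchy--Stieltjes representation (equivalently, the contour deformation) by checking that the asymptotics of $A_{21}(z)-\i t$ hold uniformly as $z\to\infty$ in all directions, which follows because the sector-dependent constant factors in the asymptotics of $\Psi$ cancel in $(\partial_z\Psi)\Psi^{-1}$.
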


\proof 
It follows from the jump condition of $\Psi$ along the real axis that for all $z\in\R\setminus\{r_1,...,r_k\}$ we have
\begin{equation*}
\Psi_+\left(z\right)\begin{pmatrix} 0 & -\s'\left(z\right) \\ 0 & 0 \end{pmatrix}\Psi_+^{-1}\left(z\right)=\Delta\left[\left(\partial_z\Psi\right)\Psi^{-1}\right]\left(z\right),
\end{equation*}
where we denoted
\[
\Delta\left[\left(\partial_z\Psi\right)\Psi^{-1}\right]\left(z\right):=\left(\left(\partial_z\Psi\right)\Psi^{-1}\right)_+\left(z\right)-\left(\left(\partial_z\Psi\right)\Psi^{-1}\right)_-\left(z\right).
\]
Integrating the (2,1)-entry of this identity, using \eqref{eq:defTheta} and \eqref{def:phiTheta}, we get
\begin{equation*}
-2\pi\int_{\R\setminus\{r_1,\ldots, r_k\}} \sigma'\left(z\right) \phi^2\left(z;x,t\right)\,\d z=\int_{\R\setminus\{r_1,\ldots, r_k\}}\Delta\left[\left(\partial_z\Psi\right)\Psi^{-1}\right]_{2,1}\left(z\right)\,\d z.
\end{equation*}
The integral at the right hand side of this identity can be performed by a contour deformation and this results in
\[
\int_{\R\setminus\{r_1,\ldots, r_k\}}\Delta\left[\left(\partial_z\Psi\right)\Psi^{-1}\right]_{2,1}\left(z\right)\,\d z=2\pi\i\left(\Res_{z=\infty}\left(\left(\partial_z\Psi\right)\Psi^{-1}\right)_{21}+\sum_{j=1}^k\Res_{z=r_j}\left(\left(\partial_z\Psi\right)\Psi^{-1}\right)_{21}\right).
\]
Here we should note that $\left(\partial_z\Psi\right)\Psi^{-1}$ does not necessarily have isolated singularities at the points $r_j$ and at $\infty$, so the above residues should be understood as formal residues, namely the coefficients of the $1/\left(z-r_j\right)$ or $-1/z$ terms in the asymptotic series of $\left(\partial_z\Psi\right)\Psi^{-1}$ around the singularity.
Hence we have established the identity
\be
\label{aaaa}
\i\int_{\R\setminus\{r_1,\ldots,r_k\}} \sigma'\left(z\right) \phi^2\left(z;x,t\right)\,\d z=\Res_{z=\infty}\left(\left(\partial_z\Psi\right)\Psi^{-1}\right)_{21}+\sum_{j=1}^k\Res_{z=r_j}\left(\left(\partial_z\Psi\right)\Psi^{-1}\right)_{21}.
\ee
The residue at $z=\infty$ is computed directly from the expansion \eqref{eq:Psiasymp} as
\be
\label{bbbb}
\Res_{z=\infty}\left(\left(\partial_z\Psi\right)\Psi^{-1}\right)_{21}=\i\left(\frac x2+tp^2\left(x,t\right)+2tq\left(x,t\right)\right)=\i\left(\frac x2-tu\left(x,t\right)\right),
\ee
using \eqref{upQ} in the last step. We can compute the remaining residues by looking at the jump relations for $\Psi$ near $r_j$: $\Psi_+=\Psi_-\begin{pmatrix}1&1-\sigma\\0&1\end{pmatrix}$. Since $\sigma$ has a discontinuity at $r_j$, the boundary values $\Psi_\pm$ cannot be continuous at $r_j$.
Let us define 
\be
\label{eq:defE}
E\left(z\right) := \Psi\left(z\right)\begin{pmatrix}1&-\frac{m_j}{2\pi\i}\log \left(z-r_j\right)\\0&1\end{pmatrix},\qquad m_j=\lim_{\epsilon\to 0_+}\left(\sigma\left(r_j+\epsilon\right)-\sigma\left(r_j-\epsilon\right)\right),
\ee
where we take the principal branch of the logarithm.
We then verify that 
\[E_+\left(z\right)=E_-\left(z\right)\begin{pmatrix}1&1-\widehat\s(z)\\0&1\end{pmatrix},\qquad \widehat\s(z)=\begin{cases}\s(z) &\mbox{for $z>r_j$,}\\
\s(z)+m_j &\mbox{for $z<r_j$.}
\end{cases}\]
In other words, the jump matrix for $E$ is continuous at $r_j$, and the boundary values $E_\pm$ are continuous at $r_j$. Hence, by the definition of $E$, \eqref{eq:defTheta}, and \eqref{def:phiTheta}, we obtain
\be
\label{cccc}
\Res_{z=r_j}\left(\left(\pa_z\Psi\right) \Psi^{-1}\right)_{21}=-\frac{m_j}{2\pi\i}E_{21}^2\left(r_j\right)=-\frac{m_j}{2\pi\i}\Psi_{21}^2\left(r_j\right)=-\i m_j\phi_\sigma^2\left(r_j;x,t\right).
\ee
Substituting \eqref{bbbb} and \eqref{cccc} in \eqref{aaaa}, we obtain \eqref{equphi}.
\qed \\

Comparing \eqref{equphi} with \eqref{eq:proofQphi} we conclude that $u=\pa_x^2\log Q_\s+\frac x{2t}=:u_\s$. In view of this identification, the proof of Theorem \ref{theorem:main} is complete, because the KdV equation and the properties of $\phi_\s$ have already been derived in the previous section, and \eqref{eq:u2} follows from \eqref{equphi}.

\section{Small time asymptotics when $x\leq -Mt^{1/3}$}\label{sec:5}

In this section, we assume that $\sigma$ satisfies Assumptions \ref{assumptions}, and that there are constants $c_1, c_2>0$ such that \eqref{eq:sigmaexpdecay} holds for some $\gamma\in\left(0,1\right]$.

\subsection{Estimates for the jump matrices of $Y$}
We first show that the jump matrices for the RH solution $Y$ are asymptotically close to identity. 
\begin{lemma}\label{lemma:small}
For any $t_0>0$, there exist $M, c>0$ such that we have the estimates
\begin{equation*}
\left\|J_Y-I\right\|_1=\mathcal O\left(\e^{-c\frac{|x|}{t^{1/3}}}\right)
,\quad\left\|J_Y-I\right\|_2=\mathcal O\left(\e^{-c\frac{|x|}{t^{1/3}}}\right)
,\quad
\left\|J_Y-I\right\|_\infty=\mathcal O\left(\e^{-c\frac{|x|}{t^{1/3}}}\right)
,
\end{equation*}
uniformly for $0<t\leq t_0$ and $x<-Mt^{1/3}$, where $\|.\|_1$, $\|.\|_2$, and $\|.\|_\infty$ denote the maximum of the entrywise $L^1\left(\mathbb R\right)$-, $L^2\left(\mathbb R\right)$-, and $L^\infty\left(\mathbb R\right)$-norms.
\end{lemma}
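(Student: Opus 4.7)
The plan is to estimate the entries of $J_Y-I$ directly from \eqref{jumpY1}: each entry is (up to constants) of the form $\sigma(t^{-2/3}\zeta+x/t)\,P(\zeta)$, where $P$ is one of the three quadratic expressions $\mathrm{Ai}^2(\zeta)$, $\mathrm{Ai}'^2(\zeta)$, $\mathrm{Ai}(\zeta)\mathrm{Ai}'(\zeta)$. The classical Airy asymptotics give the uniform bound $|P(\zeta)|\le C(1+|\zeta|)^{1/2}\exp(-\tfrac{4}{3}\zeta_+^{3/2})$ with $\zeta_+=\max(\zeta,0)$. My strategy is to combine this decay at $\zeta\to+\infty$ with the exponential decay of $\sigma(z)$ at $z\to-\infty$ provided by \eqref{eq:sigmaexpdecay}, exploiting the hypothesis $x<-Mt^{1/3}$.

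Passing to the variable $z=t^{-2/3}\zeta+x/t$ (so that $\zeta=t^{2/3}z+t^{-1/3}|x|$), I would split the real line into three regions: (A) $\zeta\ge |x|/t^{1/3}$, i.e.\ $z\ge 0$; (B) $0\le\zeta<|x|/t^{1/3}$, i.e.\ $-|x|/t\le z<0$; and (C) $\zeta<0$, i.e.\ $z<-|x|/t$. In Region~(A), $\sigma\le 1$ while $\zeta\ge M$, so $\zeta^{3/2}\ge M^{1/2}\zeta$ and the Airy decay alone yields the pointwise bound $|\sigma(z(\zeta))P(\zeta)|\le C\exp(-\tfrac{2}{3}M^{1/2}\zeta)\exp(-\tfrac{2}{3}M^{1/2}|x|/t^{1/3})$, which gives both the $L^\infty$ and (upon integrating in $\zeta$) the $L^1$ estimate of the required form. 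In Region~(C), the Airy factor grows at most polynomially while $|z|\ge|x|/t$, so \eqref{eq:sigmaexpdecay} gives $\sigma(z)\le c_1 e^{-c_2|x|/t}$; choosing $t_0\le 1$ yields $|x|/t\ge|x|/t^{1/3}$, and the polynomial growth is absorbed by the Jacobian $d\zeta=t^{2/3}dz$ and the remaining tail of $e^{-c_2|z|/2}$.

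The core of the argument is Region~(B), where neither factor is pointwise small. Writing $\zeta=|x|/t^{1/3}-s$ with $s\in[0,|x|/t^{1/3}]$, so that $z=-s/t^{2/3}$, the relevant product becomes $c_1C(1+|x|/t^{1/3}-s)^{1/2}\exp\bigl(-c_2 s/t^{2/3}-\tfrac{4}{3}(|x|/t^{1/3}-s)^{3/2}\bigr)$. I would handle this by a balance argument, splitting at $s_*=|x|/(2t^{1/3})$. On the left half $s\in[0,s_*]$ the Airy decay dominates and is bounded by $\exp(-\tfrac{4}{3}(|x|/(2t^{1/3}))^{3/2})\le\exp(-cM^{1/2}|x|/t^{1/3})$ since $|x|/t^{1/3}\ge M$, while on the right half $s\in[s_*,|x|/t^{1/3}]$ the $\sigma$ factor dominates and is bounded by $\exp(-c_2|x|/(2t))\le\exp(-c_2|x|/(2t^{1/3}))$. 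This provides the pointwise ($L^\infty$) bound throughout Region~(B), and integrating in $s$ with the Jacobian $d\zeta=t^{2/3}dz$ gives the $L^1$ bound.

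Finally the $L^2$ estimate follows at once from the interpolation inequality $\|\cdot\|_{L^2}\le\|\cdot\|_{L^1}^{1/2}\|\cdot\|_{L^\infty}^{1/2}$. The main obstacle is precisely the Region~(B) balance: one must track the constants carefully to verify that the exponent of the product is uniformly $\le -c|x|/t^{1/3}$ for a fixed $c>0$ independent of $0<t\le t_0$ and $|x|\ge Mt^{1/3}$, once $M$ is chosen large enough (depending on $t_0$ and on the constants $c_1,c_2$). Alternatively one can locate the interior critical point of the exponent in $s$ and bound its value directly, but the split into two subintervals is more transparent and avoids separate case analysis depending on the relative size of $|x|$ and $t$.
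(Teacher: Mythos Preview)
Your proof is correct and follows essentially the same strategy as the paper's: split the real line according to whether the Airy decay or the exponential decay of $\sigma$ dominates, and combine the two. The paper does this with a single cut at $\zeta=|x|/(2t^{1/3})$; your four effective subregions (A), (B)-left, (B)-right, (C) regroup into exactly the paper's two regions, since your split point $s_*=|x|/(2t^{1/3})$ in Region~(B) corresponds to $\zeta=|x|/(2t^{1/3})$. One minor point: the restriction $t_0\le 1$ is unnecessary, since for any $t\le t_0$ one has $|x|/t\ge |x|/(t_0^{2/3}t^{1/3})$, which yields the required bound with a $t_0$-dependent constant $c$---and the statement explicitly allows $M,c$ to depend on $t_0$.
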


\begin{proof}
Let us consider one entry of the jump matrix $J_Y$ for simplicity, the others are similar. For the (1,2)-entry, we need to prove that
\begin{equation*}
\left\|\s\left(z\left(.\right)\right)\left(\Ai'\left(.\right)\right)^2\right\| =\mathcal O\left(\e^{-c\frac{|x|}{t^{1/3}}}\right),
\end{equation*}
where $\|.\|$ denotes the $L^1$-,$L^2$-,  or $L^\infty$-norm, uniformly in the relevant parameters $x,t$.

\begin{enumerate}
\item For $\zeta\geq \frac{|x|}{2t^{1/3}}$, we have by the asymptotics of the Airy function (note that the power in the exponent is not sharp) the uniform (in $\zeta, x, t$) estimate
\[\Ai'\left(\zeta\right)=\mathcal O\left(\e^{-2\zeta}\right)=\mathcal O\left(\e^{-\zeta}\e^{-\frac{|x|}{2t^{1/3}}}\right),\]
and $|\sigma\left(z\left(\zeta\right)\right)|\leq 1$, and this yields the rough estimate
\begin{equation}\label{eq:estregion1}
\left|\left(J_Y\left(\zeta\right)-I\right)_{21}\right|=\mathcal O\left(\e^{-2\zeta}\e^{-\frac{|x|}{t^{1/3}}}\right).
\end{equation}
\item For $\zeta\leq \frac{|x|}{2t^{1/3}}$, we have
\[\Ai'\left(\zeta\right)=\begin{cases}
\mathcal O\left(1+|\zeta|^{1/4}\right),&\zeta<0,\\
\mathcal O\left(1\right),&\zeta\geq 0,\end{cases}\]
and 
\[z\left(\zeta\right)=t^{-2/3}\left(\zeta+\frac{x}{t^{1/3}}\right)\leq -\frac{|x|}{2t},\]
hence by the exponential decay of $\sigma$,
\[\sigma\left(z\left(\zeta\right)\right)=\mathcal O\left(\e^{-c_2|z\left(\zeta\right)|}\right)=\mathcal O\left(\e^{-\frac{c_2}{2}|z\left(\zeta\right)|}\e^{-\frac{c_2}{2}|z\left(\zeta\right)|}\right)=\mathcal O\left(\min\{\e^{2\zeta},1\} \e^{-\frac{c_2|x|}{4t}}\right).\]
It follows that
\begin{equation}\label{eq:estregion2}
\left|\left(J_Y\left(\zeta\right)-I\right)_{21}\right|=\begin{cases}
\mathcal O\left(\left(1+|\zeta^{1/2}|\right)\e^{2\zeta}\e^{-\frac{c_2|x|}{4t}}\right)=\mathcal O\left(\e^{\zeta}\e^{-\frac{c_2|x|}{4t}}\right)
,&\zeta<0,\\
\mathcal O\left(\e^{-\frac{c_2|x|}{4t}}\right)
,&\zeta\geq 0.
\end{cases}
\end{equation}
\end{enumerate}
Provided that $t\leq t_0$, by \eqref{eq:estregion1} and \eqref{eq:estregion2}, we obtain a constant $c>0$ such that
\[\left\|\left(J_Y-I\right)_{21}\right\|=\mathcal O\left(\|h\| \e^{-\frac{2c|x|}{t^{1/3}}}\right),\qquad h\left(\zeta\right)=\begin{cases}\e^{-|\zeta|},&\zeta<0 \mbox{ and }\zeta>\frac{|x|}{2t^{1/3}},\\
1,&0\leq \zeta\leq \frac{|x|}{2t^{1/3}},\end{cases}\]
where $\|.\|$ denotes the $L^1$-,$L^2$-, or $L^\infty$-norm.
If $M>0$ is sufficiently large, noting that $\|h\|=\mathcal O\left(|x|t^{-1/3}\right)$, we obtain the result.
\end{proof}

\subsection{Asymptotics for $Y$, $u_\sigma$, and $Q_\sigma$}\label{section:asY}

By standard RH techniques, {it follows from Lemma \ref{lemma:small}, i.e.\ from the fact that the jump matrices for $Y$ are asymptotically close to the identity matrix,} that $Y$ satisfies a small-norm RH problem.
{Indeed, Lemma \ref{lemma:small} implies that the norm of the operator $C_{J_Y}$  (acting on $L^2\left(\R,\mathbb R^{2\times 2},\d\zeta\right)$) given by \eqref{def:Cauchyop} is small for $xt^{-1/3}\leq -M$ with $M>0$ large enough, since
\[\|C_{J_Y}\|_2\leq \|C_-\|_2\|J_Y-I\|_\infty.\] Hence, by \eqref{eq:Cauchy3}, we have
\[\|Y_--I\|_2\leq 2\|J_Y-I\|_2\]
for $M$ large enough, and the right hand side decays exponentially fast as $xt^{-1/3}\to -\infty$.
Writing
\begin{equation*}
Y\left(\zeta;x,t\right)=I+\frac{Y^{\left(1\right)}\left(x,t\right)}\zeta+\O\left(\zeta^{-2}\right)
\end{equation*}
as $\zeta\to\infty$, we also have, by \eqref{eq:Cauchy1}, that
\[Y^{\left(1\right)}\left(x,t\right)=\int_{\mathbb R}(Y_-(\zeta)-I)(J_Y(\zeta)-I)\d\zeta +\int_{\mathbb R}(J_Y(\zeta)-I)\d\zeta.
\]
To bound the first term on the right, we can use a Cauchy-Schwarz estimate together with the above $L^2$-bounds for $Y_--I$ and $J_Y-I$; to bound the second term at the right, we can use the $L^1$-bound for $J_Y-I$.
In conclusion, we have that
\[Y^{(1)}(x,t)
=\mathcal O\left(\e^{-c\frac{|x|}{t^{1/3}}}\right)\]
as $xt^{-1/3}\to -\infty$.}
By \eqref{eq:pqalphabeta} and \eqref{asY}, we obtain
\be
\label{eq:aspneg}p_\sigma\left(x,t\right)=\frac{x^2}{4t}+\mathcal O\left(t^{-1/3}\e^{-c\frac{|x|}{t^{1/3}}}\right),
\qquad
q_\sigma\left(x,t\right)=-\frac{x^4+8tx}{32t^{2}}+\mathcal O\left(t^{-2/3}\e^{-c\frac{|x|}{t^{1/3}}}\right).
\ee
Taking $0<t\leq t_0$ fixed and letting $x\to -\infty$, we can use these estimates to integrate \eqref{def:uQ}: {by \eqref{upQ}, we have
\[\int_{-\infty}^x \partial_{x'}^2 \log Q_\sigma(x',t)dx'=\int_{-\infty}^{x}\left(\partial_{x'} p_\sigma(x',t)-\frac{x'}{2t}\right)dx'=\left(p_\sigma(x',t)-\frac{(x')^2}{4t}\right)_{x'=-\infty}^{x}.\]
Using \eqref{eq:aspneg} and the fact that $\partial_x\log Q_\sigma(-\infty,t)=0$ (to see this, recall \eqref{eq:derlog} and note that both $\mathbb K_{\sigma,x,t}$ and $\partial_x\mathbb K_{\sigma,x,t}$ have small $L^2$-norm for $x$ large negative by \eqref{eq:sigmaexpdecay}--\eqref{eq:sigmaprimedecay}) we obtain}
the identity
\begin{equation*}
\partial_x\log Q_\sigma\left(x,t\right)=p_\sigma\left(x,t\right)-\frac{x^2}{4t}.\label{Qp}
\end{equation*}
Substituting the asymptotics \eqref{eq:aspneg} at the right hand side, and integrating in $x=x'$ from $-\infty$ to $x$, we get
\begin{equation*}
\log Q_\sigma\left(x,t\right)=\mathcal O\left(\e^{-c|x|t^{-1/3}}\right),\label{Qp2}
\end{equation*}
which is part {\it (i)} of Theorem \ref{thmQ}.
By \eqref{upQ}, we obtain uniformly for $x,t$ in the relevant regions that
\[u\left(x,t\right)=\frac{x}{2t}+\mathcal O\left(t^{-2/3}\e^{-c\frac{|x|}{t^{1/3}}}\right).\]
After renaming $c\mapsto 2c$, this implies part {\it (i)} of Theorem \ref{thminitialvalue}.

\section{Small time asymptotics when $-Mt^{1/3}\leq x\leq Mt^{1/3}$}\label{sec:proofii}

Like in the previous section, we assume that $\sigma$ satisfies Assumptions \ref{assumptions} with $\gamma\in\left(0,1\right]$ and that in addition condition \eqref{eq:sigmaexpdecay} and \eqref{eq:sigmaprimedecay} hold for some $c_1,c_2,c_3,C>0$. Furthermore, we assume that $t>0$ is sufficiently small and that $-Mt^{1/3}\leq x\leq Mt^{1/3}$, where $M>0$ is arbitrarily large, and we recall the RH problem for $\Psi=\Psi_\s$, which depends on the function $\sigma$.

\subsection{Change of variable $z\mapsto w$}\label{par:changeofvariable}
For our purposes it is convenient to introduce a scaled variable
\be\label{def:w}
w=t^{2/3}z=\zeta\left(z\right)+xt^{-1/3}
\ee
and to define
\be
\label{eq:defwtPsi}
\wt\Psi_\s\left(w\right)=\wt\Psi_\s\left(w;x,t\right):=t^{\frac{\s_3}6}\Psi_\s\left(z=wt^{-2/3};x,t\right).
\ee
It follows from the RH characterization of $\Psi_\s$ that $\wt\Psi_\s$ solves the following RH problem.

\subsubsection*{RH problem for $\wt\Psi_\s$}
\begin{enumerate}[label=(\alph*)]
\item $\wt\Psi_\s$ is analytic in $\mathbb C\setminus \mathbb R$.
\item $\wt\Psi_\s$ has boundary values $\wt\Psi_{\s,\pm}$ which are $L^2$ on any compact real set and which are continuous on the real line except at the points $t^{2/3}r_1, \ldots ,t^{2/3}r_k$, and they are related by
\begin{equation}\label{jumpwtPsi}
\wt\Psi_{\s,+}\left(w\right)=\wt\Psi_{\s,-}\left(w\right)\begin{pmatrix}1&1-\sigma\left(wt^{-2/3}\right)\\0&1\end{pmatrix}.
\end{equation}
\item As $w\to\infty$, there exist functions $p= p_{\sigma}\left(x,t\right), q= q_{\sigma}\left(x,t\right)$ and $r= r_{\sigma}\left(x,t\right)$ such that $\wt\Psi_\s$ has the asymptotic behavior
\begin{multline}
\label{eq:wtPsiasymp}
\wt\Psi_\s\left(w\right) = \left( I + \frac{t^{2/3}}{w} \begin{pmatrix} q & -\i t^{1/3} r \\ \i t^{-1/3} p & -  q \end{pmatrix} + \mathcal O\left(\frac{1}{w^2}\right) \right)\\
\times\ w^{\frac{1}{4} \sigma_3} A^{-1}{\rm e}^{\left(-\frac{2}{3}w^{3/2}+xt^{-1/3}w^{1/2}\right)\sigma_3}\times\begin{cases}I,&|\arg z|<\pi-\delta,\\
\begin{pmatrix}1&0\\\mp 1&1\end{pmatrix},&\pi-\delta<\pm \arg z<\pi,\end{cases}
\end{multline}
for any $0<\delta<\pi/2$; here the principal branches of $w^{\frac{1}4\sigma_3}$, $w^{3/2}$, and $w^{1/2}$ are taken, analytic in $\mathbb C \backslash \left(-\infty,0\right]$ and positive for $w>0$.
\end{enumerate}

Note that the functions $p,q,r$ in (c) are the same as the ones that appear in the corresponding property of the RH characterization of $\Psi_\s$.

As $t\to 0$ we have the pointwise limit
\begin{equation*}
\lim_{t\to 0}\sigma\left(wt^{-2/3}\right)= \s_\g\left(w\right)
\end{equation*}
for all $w\not = 0$, where we denote $\s_\g=\g \chi_{\left[0,\infty\right)}$ as in Examples \ref{examples}.
This suggests that $\wt\Psi_\step\left(w\right)$ might be a good approximation to $\wt\Psi_\s$ away from $0$. Motivated by this observation, we will now take a closer look at the RH problem in the case where $\s=\s_\g$.

\subsection{RH problem for $\wt\Psi_{\step}$ and connection with the Painlev\'e II equation}

We have the self-similarity relation
\begin{equation*}
\wt\Psi_\step\left(w;x,t\right)=\wt\Psi_\step\left(w;xt^{-1/3},1\right),
\end{equation*}
as it is straightforward to check that both sides satisfy the same RH conditions. In particular, by comparing the asymptotic behaviors as $w\to\infty$, we obtain the self-similarity properties
\be
\label{eq:pqrstep}
q_\step(x,t)=t^{-2/3}q_\step\left(xt^{-1/3},1\right),\ p_\step(x,t)=t^{-1/3}p_\step\left(xt^{-1/3},1\right),\ r_\step(x,t)=t^{-1}r_\step\left(xt^{-1/3},1\right).
\ee
Moreover, the RH problem for $\wt\Psi_\step$ is equivalent to a RH problems associated
to the Painlev\'e XXXIV equation, which is related to the Painlev\'e II equation. More precisely, the solution $\wt\Psi_\step$ can be constructed in terms of the solution $\Psi_0$ to the RH problem from \cite[Section 3.3]{BCI}: 
indeed we have
\begin{equation}
\label{eq:astildePsi}
\wt\Psi_\step\left(w;x,t\right)=
\begin{cases}
\Psi_0\left(w;\tau=-xt^{-1/3}\right)&\mbox{for $|\arg w|<2\pi/3$,}\\
\Psi_0\left(w;\tau=-xt^{-1/3}\right)\left(\begin{matrix}1&0\\-1&1\end{matrix}\right)&\mbox{for $2\pi/3<\arg w<\pi$,}\\
\Psi_0\left(w;\tau=-xt^{-1/3}\right)\left(\begin{matrix}1&0\\1&1\end{matrix}\right)&\mbox{for $-\pi<\arg w<-2\pi/3$.}
\end{cases}
\end{equation}

We now rely on some results about $\Psi_0$ proved in \cite{BCI}, which imply identities for the functions $p_{\step}, q_{\step}, r_{\step}$. From \cite[equations (3.26-27), (3.38-39), and (3.41)]{BCI}, we have
\begin{equation*}
\pa_\xi p_{\step}(\xi,1)=-2q_{\s_\g}(\xi,1)-p^2_{\s_\g}(\xi,1)=\frac \xi 2 -y_\g^2(-\xi),
\end{equation*}
where $y_\g^2$ (denoted $y_\g$ in \cite{BCI}) is a solution to the Painlev\'e XXXIV equation, and $y_\g$
is the Painlev\'e II solution characterized by \eqref{eq:PII}. In view of \eqref{upQ} and \eqref{eq:pqrstep}, we also have
\be
\label{eq:stepKdVsolution2}
\pa_x p_\step(x,t)=-2q_\step(x,t)-p_\step^2(x,t)=
\frac x{2t} - \frac 1{t^{2/3}} y_\g^2\left(-xt^{-1/3}\right).
\ee
Using this identity and the asymptotics \eqref{eq:aspneg} to integrate $\partial_xp_\step(x,t)-\frac{x}{2t}$ in $x$ between $-\infty$ and $x$, we obtain
\be\label{eq:idPIIp2}
p_\step(x,t)=\frac{x^2}{4t}-t^{-1/3}\int_{-xt^{-1/3}}^{+\infty}y_\gamma^2(v)\d v.
\ee

\subsection{Asymptotics for $\wt\Psi_\s$}

It will turn out that $\wt\Psi_\step$ is a good approximation to $\wt\Psi_\s$ as $t\to 0$ as long as $w$ is not too small. For $w$ small, we need to construct a local parametrix different from $\wt\Psi_\step$.

We define the local parametrix as follows,
\be\label{eq:parametrix2}
P\left(w\right)=\wt\Psi_\step\left(w\right)\begin{pmatrix}1 & a\left(w\right) \\ 0 & 1\end{pmatrix}\qquad \mbox{for $|w|\leq 1$,}
\ee
where 
\be
\label{eq:defa}
a\left(w\right)=\frac 1{2\pi\i}\left(\int_{-\infty}^0\frac{-\s\left(w't^{-2/3}\right)}{w'-w}\d w'+\int_0^{+\infty}\frac{\g-\s\left(w't^{-2/3}\right)}{w'-w}\d w'\right)\qquad \mbox{for $w\notin\mathbb R$}.
\ee
Then we define
\begin{equation*}
S\left(w\right)=\begin{cases}
\wt\Psi_\s\left(w\right)\wt\Psi_\step^{-1}\left(w\right), & \text{if} \; |w|>1 \; \text{and}\; w\in\C\setminus\R, 
\\
\wt\Psi_\s\left(w\right)P^{-1}\left(w\right), & \text{if} \; |w|<1\; \text{and}\; w\in\C\setminus\R.
\end{cases}
\end{equation*}
We will now show that $S$ solves the following RH problem.

\subsubsection*{RH problem for $S$}
\begin{enumerate}[label=(\alph*)]
\item $S$ is analytic in $\mathbb C\setminus \G$, where $\G$ is the oriented contour in the complex plane given by the union of $\R\cap\{|w|>1\}$, oriented from left to right, and the unit circle, oriented counter-clockwise.
\item $S$ has boundary values $S_\pm$ which are continuous on $\G$; according to the chosen orientation of the unit circle, $S_+$ corresponds to the boundary value from the inside, and $S_-$ to that from the outside. Moreover, the boundary values are related by
\be
S_+\left(w\right)=S_-\left(w\right)J_S\left(w\right),
\ee
with
\begin{equation*}
J_S\left(w\right)=
\begin{cases}
\wt\Psi_\step\left(w\right)P^{-1}\left(w\right),&\text{if}\;|w|=1,
\\
\wt\Psi_{\step,-}\left(w\right)\begin{pmatrix} 1 & \g \chi_{\left[0,\infty\right)}\left(w\right)-\s\left(wt^{-2/3}\right) \\ 0 & 1\end{pmatrix} \wt\Psi_{\step,-}^{-1}\left(w\right),&\text{if}\;w<-1\;\text{or}\;w>1.
\end{cases}
\end{equation*}
\item For $|\arg w|<\pi$, we have uniformly $$S\left(w\right) = I+\O\left(w^{-1}\right)\; \text{as} \; w\to\infty.$$
\end{enumerate}

\begin{figure}[htbp]
\centering
\hspace{1cm}
\begin{tikzpicture}
\draw[postaction={decorate,decoration={markings,mark=at position .5 with {\arrow[line width=1pt ]{>}}}}] (-4,0) -- (-1,0) ;
\draw[postaction={decorate,decoration={markings,mark=at position .5 with {\arrow[line width=1pt ]{>}}}}] (1,0) -- (4,0) ;
\draw [postaction={decorate,decoration={markings,mark=at position .2 with {\arrow[line width=1pt ]{>}},mark=at position .7 with {\arrow[line width=1pt ]{>}}}}] (0,0) circle (1cm);
\node at (-2,.2) {$+$};
\node at (-2,-.2) {$-$};
\node at (2,.2) {$+$};
\node at (2,-.2) {$-$};
\node at (0,.8) {$+$};
\node at (0,1.2) {$-$};
\node at (0,-.8) {$+$};
\node at (0,-1.2) {$-$};
\end{tikzpicture}
\caption{The oriented jump contour $\G$.}
\label{figcontour}
\end{figure}
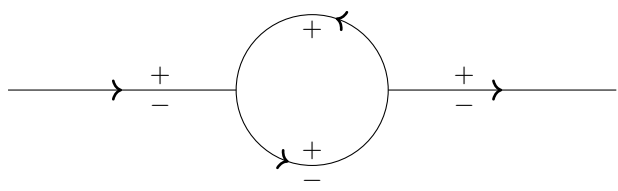

Since $S$ is built out of functions that are {analytic in $w$ away from $\R\cup\{|w|=1\}$, it is clear that $S$ is analytic in this domain. We now show that, in addition,} $S$ extends analytically across $w\in\left(-1,1\right)$. To see this, let us first note the relation
\be
\label{eq:jumpa}
a_+\left(w\right)-a_-\left(w\right)=\g \chi_{\left[0,\infty\right)}\left(w\right)-\s\left(wt^{-2/3}\right)
\ee
for all $w\in\R$; this is a consequence of the Plemelj formula applied to the singular integral \eqref{eq:defa}. Then, for all $w\in\left(-1,1\right)$, we have
\begin{align}
\nonumber
S_+\left(w\right)&=\wt\Psi_{\s,+}\left(w\right)P_{0,+}^{-1}\left(w\right)=\wt\Psi_{\s,+}\left(w\right)\begin{pmatrix}1 & -a_+\left(w\right) \\ 0 & 1\end{pmatrix}\wt\Psi_{\step,+}^{-1}\left(w\right)
\\
\nonumber
&=\wt\Psi_{\s,-}\left(w\right)\begin{pmatrix}1 & 1-\s\left(wt^{-2/3}\right) \\ 0 & 1 \end{pmatrix} \begin{pmatrix}1 & -a_+\left(w\right) \\ 0 & 1\end{pmatrix} \begin{pmatrix}1 & \g \chi_{\left[0,\infty\right)}\left(w\right) \\ 0 & 1 \end{pmatrix} \wt\Psi_{\step,-}^{-1}\left(w\right)
\\
&
=\wt\Psi_{\s,-}\left(w\right)\begin{pmatrix}1 & -a_-\left(w\right) \\ 0 & 1 \end{pmatrix} \wt\Psi_{\step,-}^{-1}\left(w\right)
=\wt\Psi_{\s,-}\left(w\right) P_{0,-}^{-1}\left(w\right)
=S_-\left(w\right),
\end{align}
{which proves that $S$ is meromorphic with possibly isolated singularities at the points $w=t^{2/3}r_j$ and $w=0$. Moreover, near any of the points $w=t^{2/3}r_j$ we have the representation (as in \eqref{eq:defE})
\be
\wt\Psi_\s\left(w\right) = E_j\left(w\right)\begin{pmatrix}1&\frac{m_j}{2\pi\i}\log \left(t^{-2/3}w-r_j\right)\\0&1\end{pmatrix},\qquad m_j=\lim_{\epsilon\to 0_+}\left(\sigma\left(r_j+\epsilon\right)-\sigma\left(r_j-\epsilon\right)\right),
\ee
for a locally holomorphic function $E_j$ in the neighborhood of $w=t^{2/3}r_j$ and therefore
\be
S\left(w\right)=I+\wt\Psi_\s\left(w\right)\begin{pmatrix}0 & -a\left(w\right) \\ 0 & 0\end{pmatrix}\wt\Psi_\step^{-1}\left(w\right)=I+E_j\left(w\right)\begin{pmatrix}0 & -a\left(w\right) \\ 0 & 0\end{pmatrix}\wt\Psi_\step^{-1}\left(w\right)
\ee
which is holomorphic near $w=t^{2/3}r_j$; a similar argument shows that also $w=0$ is a removable singularity of $S$, and this proves (a).
}

To prove (b), note that the continuity of $S_\pm$ on $\G$ follows from the analogous condition for $\wt\Psi_\s$, since the points $t^{2/3}r_1, \ldots ,t^{2/3}r_k$ will be inside the unit disk as soon as $t$ is small enough; the jump relation follows from the computations
\be
S_+\left(w\right)=\wt\Psi_\s\left(w\right) P^{-1}\left(w\right)=\wt\Psi_\s\left(w\right) \wt\Psi_\step^{-1}\left(w\right) \wt\Psi_\step\left(w\right) P^{-1}\left(w\right)=S_-\left(w\right)\wt\Psi_\step\left(w\right) P^{-1}\left(w\right)
\ee
for $|w|=1$, and
\begin{align}
S_+\left(w\right)&=\wt\Psi_{\s,+}\left(w\right)\wt\Psi_{\step,+}^{-1}\left(w\right)=\wt\Psi_{\s,-}\left(w\right)\begin{pmatrix}1 & \g \chi_{\left[0,\infty\right)}\left(w\right)-\s\left(wt^{-2/3}\right) \\ 0 & 1 \end{pmatrix}
\wt\Psi_{\step,-}^{-1}\left(w\right)
\\
&=S_-\left(w\right)\wt\Psi_{\step,-}\left(w\right)
\begin{pmatrix}1 & \g \chi_{\left[0,\infty\right)}\left(w\right)-\s\left(wt^{-2/3}\right) \\ 0 & 1 \end{pmatrix}
\wt\Psi_{\step,-}^{-1}\left(w\right)
\end{align}
for $w<-1$ or $w>1$.
Finally, condition (c) follows from the corresponding condition in the RH problem for $\wt\Psi_{\sigma}$.

We now show that the jump matrix $J_S$ is small in the relevant regime, provided that $\s$ also satisfies the decay conditions \eqref{eq:sigmaexpdecay} and \eqref{eq:sigmaprimedecay}.

\begin{lemma}\label{lemma:RH2jump}
{There exists $\epsilon>0$ such that for any $M>0$ the following estimates hold true uniformly in $0<t<\epsilon, |x|\leq Mt^{1/3}$;}
\begin{equation*}
\|J_S-I\|_1=\O\left(t^{2/3}\right),
\quad
\|J_S-I\|_2=\O\left(t^{2/3}\right),
\quad
\|J_S-I\|_\infty=\O\left(t^{2/3}\right),
\end{equation*}
where $\|.\|_1$, $\|.\|_2$, and $\|.\|_\infty$ denote the maximum of the entrywise $L^1\left(\G\right)$-, $L^2\left(\G\right)$- and $L^\infty\left(\G\right)$-norms, respectively.
\end{lemma}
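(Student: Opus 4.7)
}
The contour $\Gamma$ naturally splits into two pieces: the unit circle $C=\{|w|=1\}$ and the two real half-lines $L=\{w\in\mathbb{R}:|w|>1\}$; I will bound $J_S-I$ on each piece separately in all three norms. On $L$, the jump matrix can be rewritten as
\[
J_S(w)-I=\bigl(\gamma\chi_{[0,\infty)}(w)-\sigma(wt^{-2/3})\bigr)\,\wt\Psi_{\step,-}(w)\,E_{12}\,\wt\Psi_{\step,-}^{-1}(w),\qquad E_{12}=\begin{pmatrix}0&1\\0&0\end{pmatrix}.
\]
The scalar factor is super-exponentially small by \eqref{eq:sigmaexpdecay}, since $|w|>1$ gives $|\gamma\chi-\sigma(wt^{-2/3})|\leq c_1 e^{-c_2|w|/t^{2/3}}$. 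For the conjugation factor I will use the asymptotic expansion \eqref{eq:wtPsiasymp}: for $w>1$ the diagonal exponential $e^{\phi\sigma_3}$ conjugates $E_{12}$ to $e^{2\phi}E_{12}$, and since $\phi=-\tfrac{2}{3}w^{3/2}+xt^{-1/3}w^{1/2}<0$ for $w\geq\tfrac{3}{2}|x|t^{-1/3}$ (which is automatic for $w>\tfrac{3M}{2}$), the exponential is bounded, while the sandwiching by $w^{\sigma_3/4}$ and $A^{-1},A$ produces at most a factor $|w|^{1/2}$; for $w<-1$, $\phi$ is purely imaginary on the lower boundary value (principal branch of $w^{1/2}$, $w^{3/2}$), so $e^{2\phi}$ has modulus $1$, and conjugation by $T_-=\bigl(\begin{smallmatrix}1&0\\1&1\end{smallmatrix}\bigr)$ produces a bounded matrix. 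Hence $|J_S(w)-I|\leq C|w|^{1/2}e^{-c_2|w|/t^{2/3}}$ on $L$, and all three norms of this on $L$ are super-exponentially small, a fortiori $O(t^{2/3})$.

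On $C$, from the definition $P(w)=\wt\Psi_\step(w)\bigl(\begin{smallmatrix}1&a(w)\\0&1\end{smallmatrix}\bigr)$ one has
\[
J_S(w)-I=\wt\Psi_\step(w)\begin{pmatrix}0&-a(w)\\0&0\end{pmatrix}\wt\Psi_\step^{-1}(w),\qquad |w|=1.
\]
The conjugating factor $\wt\Psi_\step(w)$ is, via the identification \eqref{eq:astildePsi}, expressible through $\Psi_0(w;\tau)$ with $\tau=-xt^{-1/3}\in[-M,M]$; since $\Psi_0$ is jointly continuous in $(w,\tau)$ away from $w=0$, it is uniformly bounded on the compact set $\{(w,\tau):|w|=1,\,|\tau|\leq M\}$. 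So the whole task reduces to showing the scalar estimate $|a(w)|=O(t^{2/3})$ uniformly for $|w|=1$.

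For this, change variables $v=w't^{-2/3}$ in \eqref{eq:defa} to get $a(w)=\hat a(wt^{-2/3})$ with
\[
\hat a(z)=\frac{1}{2\pi \i}\int_{\mathbb R}\frac{g(v)}{v-z}\,\d v,\qquad g(v):=\gamma\chi_{(0,\infty)}(v)-\sigma(v),
\]
so that $|g(v)|\leq c_1 e^{-c_2|v|}$. Since on $C$ we have $|z|=t^{-2/3}\to\infty$, it suffices to prove $\hat a(z)=O(|z|^{-1})$ uniformly for $|z|$ large, including up to the real axis. Writing $\tfrac{1}{v-z}=-\tfrac{1}{z}-\tfrac{v}{z(v-z)}$ yields a leading term $-\tfrac{1}{2\pi\i z}\int g(v)\,\d v=O(|z|^{-1})$, and one bounds the remainder by splitting the range of $v$ according to $|v|<|z|/2$, $|z|/2\leq|v|\leq 2|z|$, and $|v|>2|z|$; on the first and third ranges the bound $|v-z|\geq|z|/2$ resp.\ $|v-z|\geq|v|/2$ gives $O(|z|^{-1})$ directly, while on the middle range the exponential decay $|g(v)|\leq c_1 e^{-c_2|z|/4}$ dominates any logarithmic blow-up of $\int (|v-z|)^{-1}\,\d v$ stemming from $\Im z$ being small.

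\textbf{The main obstacle} is precisely this last uniformity as $z$ approaches the real axis, which happens on $C$ when $w$ is near $\pm 1$: a naive bound using $|v-z|\geq|\Im z|$ deteriorates there. The resolution is to use the above three-region decomposition to trade the singularity against the exponential decay of $g$, or equivalently to invoke the Plemelj formula $\hat a_\pm(v)=\pm \tfrac{1}{2}g(v)+\tfrac{1}{2\pi\i}\,\mathrm{p.v.}\int \tfrac{g(u)}{u-v}\,\d u$, which for $|v|$ much larger than the effective support of $g$ gives $\hat a_\pm(v)=O(|v|^{-1})$ by the same splitting argument, and then to extend the bound off the axis by continuity. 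Combining the uniform estimate $|a(w)|=O(t^{2/3})$ on $C$ with the boundedness of $\wt\Psi_\step$ on $|w|=1$ yields $\|J_S-I\|_\infty=O(t^{2/3})$ on $C$; finiteness of the length of $C$ immediately promotes this to the same bound in $L^1(C)$ and $L^2(C)$, completing the lemma.
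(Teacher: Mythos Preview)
Your overall structure is correct and matches the paper: split $\Gamma$ into the unit circle and the real half-lines, use the exponential decay \eqref{eq:sigmaexpdecay} on the half-lines, and reduce the circle estimate to a uniform bound $|a(w)|=O(t^{2/3})$ for $|w|=1$. The treatment of the half-lines and the boundedness of $\wt\Psi_{\sigma_\gamma}$ on the circle via \eqref{eq:astildePsi} are fine.

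The gap is in your bound for $a(w)$ near $w=\pm 1$. Your middle-range estimate gives
\[
|a(w)|\ \lesssim\ t^{2/3}+e^{-c_2 t^{-2/3}/2}\int_{|z|/2\le |v|\le 2|z|}\frac{\d v}{|v-z|},
\]
and for $z=wt^{-2/3}$ on the circle the last integral behaves like $\log(1/|\Im w|)$, which is \emph{unbounded} as $w\to\pm 1$. So ``exponential decay dominates logarithmic blow-up'' fails to give a bound uniform on the whole circle: for each fixed small $t$ your upper bound tends to $+\infty$ at $w=\pm 1$. Passing to the Plemelj boundary values does not save the argument ``by the same splitting'', because the principal-value integral over the middle range cannot be controlled by $|g|\le c_1 e^{-c_2|v|}$ alone: writing $g(v)=g(z_0)+(g(v)-g(z_0))$, the first part vanishes by symmetry but the second requires a bound on $|g'|=|\sigma'|$ near $z_0$. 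This is exactly where the hypothesis \eqref{eq:sigmaprimedecay} enters, and you never invoke it.

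The paper handles this by splitting $a(w)$, for $w$ near $\pm 1$, into a far piece $I_1$ (where $|t^{2/3}z'-w|$ is bounded below) and a near piece $I_2$; in $I_2$ one subtracts the constant $\sigma(\mp t^{-2/3})$ so that the singular kernel is paired with $\sigma(\mp t^{-2/3})-\sigma(z')$, and then uses \eqref{eq:sigmaprimedecay} to bound this difference by $\sup|\sigma'|\cdot|z'+t^{-2/3}|$, cancelling the singularity. Incorporating this step (or any equivalent use of \eqref{eq:sigmaprimedecay}) closes your argument.
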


\begin{proof}
On the unit circle $|w|=1$ we have
\begin{equation*}
J_S\left(w\right)-I=a\left(w\right)\wt\Psi_\step\left(w\right)\begin{pmatrix}0 & 1 \\ 0 & 0 \end{pmatrix}\wt\Psi_\step^{-1}\left(w\right).
\end{equation*}
As $t\to 0$, and uniformly on $|w|=1$, we have $\wt\Psi_\step(w)=\O(1)$, because the variable $w$ and parameter $\tau=-xt^{-1/3}$ of the matrix $\Psi_0$ in \eqref{eq:astildePsi} remain bounded (in particular, $\tau\in(-M,0)$ because of our assumptions).
Hence, on the unit circle $|w|=1$,
\begin{equation*}
J_S\left(w\right)-I=a\left(w\right)\mathcal O\left(1\right).
\end{equation*}
To estimate $a$ we rewrite \eqref{eq:defa} by the change of variable $z'=w't^{-2/3}$ to obtain
\be
\label{eq:estimatea}
|a\left(w\right)|\leq\frac{t^{2/3}}{2\pi}\int_\R\frac{|\step(z')-\s\left(z'\right)|}{|t^{2/3}z'-w|}\d z'
\leq t^{2/3}\frac {c_1}{|\Im w|}\int_\R\e^{-c_2|z'|}\d z'\leq\frac{2c_1}{c_2\delta}t^{2/3}
\ee
assuming $|\Im w|\geq\delta$ and using \eqref{eq:sigmaexpdecay}. This proves that $a(w)=\O\left(t^{2/3}\right)$ uniformly in $|w|=1,|\Im w|\geq\delta>0$. However we can provide a similar estimate when $|w|=1$ and $|\Im w|<\delta$; by symmetry, we can focus only on the connected component containing $w=-1$, and write
\begin{equation*}
a(w)=\frac{t^{2/3}}{2\pi\i}\left(I_1+I_2\right),
\end{equation*}
where
\be
\label{eq:I1I2}
I_1:=\int_{\left(-\infty,-2t^{-2/3}\right)\cup\left(-t^{-2/3}/2,\infty\right)}\frac{\step(z')-\s\left(z'\right)}{t^{2/3}z'-w}\d z',\qquad
I_2:=\int_{-2t^{-2/3}}^{-t^{-2/3}/2}\frac{-\s\left(z'\right)}{t^{2/3}z'-w}\d z'.
\ee
To estimate $I_1$, note that for $z'\in\left(-\infty,-2t^{-2/3}\right)\cup\left(-t^{-2/3}/2,\infty\right)$, we have $|t^{2/3}z'-w|\geq C$ for some $C=C(\delta)>0$, so
\begin{equation*}
|I_1|\leq\frac 1C\int_{\left(-\infty,-2t^{-2/3}\right)\cup\left(-t^{-2/3}/2,\infty\right)}|\step(z')-\s(z')|\d z'\leq
\frac 1C\int_{\R}|\step(z')-\s(z')|\d z'.
\end{equation*}
To estimate $I_2$, observe that for $|w|=1,|\Im w|<\delta$ and for $z'\in[-2t^{-2/3},-t^{-2/3}/2]$, we have $|t^{2/3}z'-w|\geq|t^{2/3}z'+1|$, such that
\begin{equation*}
|I_2|\leq
\left|\int_{-2t^{-2/3}}^{-t^{-2/3}/2}\frac{\s\left(-t^{-2/3}\right)}{t^{2/3}z'-w}\d z'\right|+\int_{-2t^{-2/3}}^{-t^{-2/3}/2}\left|\frac{\s\left(-t^{-2/3}\right)-\s\left(z'\right)}{t^{2/3}z'+1}\right|\d z',
\end{equation*}
and the first term is $t^{-2/3}\s\left(-t^{-2/3}\right)\left|\int_{-2}^{-1/2}\frac{\d w'}{w'-w}\right|=o(1)$ uniformly in the set of values of $w$ considered here (and in the sense of the boundary values at $w=-1$), while the second term is estimated by \eqref{eq:sigmaprimedecay} as
\be
\label{improved}
\int_{-2t^{-2/3}}^{-t^{-2/3}/2}\left|\frac{\s\left(-t^{-2/3}\right)-\s\left(z'\right)}{t^{2/3}z'+1}\right|\d z'
\leq
t^{-2/3}\int_{-2t^{-2/3}}^{-t^{-2/3}/2}\sup_{\xi<-t^{-2/3}/2}\left|\s'(\xi)\right|\d z'=\frac 32c_3
\ee
uniformly in $0<t<\epsilon$, with $\epsilon$ sufficiently small, such that $\epsilon^{-2/3}/2>C$, with $C$ as in \eqref{eq:sigmaprimedecay}.

With similar estimates we take care of the limit $w\to 1$ as well, and therefore we conclude that $a(w)=\O\left(t^{2/3}\right)$ uniformly on the whole unit circle $\{|w|=1\}$.
Summarizing, we have thus shown that
\begin{equation*}
\|J_S-I\|_{L^1\left(\{|w|=1\}\right)}=\O\left(t^{2/3}\right),
\quad
\|J_S-I\|_{L^2\left(\{|w|=1\}\right)}=\O\left(t^{2/3}\right),
\quad
\|J_S-I\|_{L^\infty\left(\{|w|=1\}\right)}=\O\left(t^{2/3}\right),
\end{equation*}
uniformly in $0<t<\epsilon$ (for any $\epsilon$ sufficiently small as detailed above) and in $x <  Mt^{1/3}$, for any constant $M>0$;
here the norms denote the maximum of the entrywise $L^1$-, $L^2$-, and $L^\infty$-norms.

On the other hand, for $w<-1$ or $w>1$,
\begin{equation*}
J_S\left(w\right)-I=\left(\g \chi_{\left[0,\infty\right)}\left(w\right)-\s\left(wt^{-2/3}\right)\right)\wt\Psi_{\step,-}\left(w\right)\begin{pmatrix} 0 & 1 \\ 0 & 0\end{pmatrix} \wt\Psi_{\step,-}^{-1}\left(w\right),
\end{equation*}
and {using \eqref{eq:sigmaexpdecay}} we have
\begin{equation*}
J_S\left(w\right)-I=\O\left(\e^{-c_2t^{-2/3}|w|-\frac 43 w^{3/2}+2xt^{-1/3}w^{1/2}}|w|^{1/4}\right)=
\O\left(\e^{-c_2t^{-2/3}|w|-\frac 43 w^{3/2}+2Mw^{1/2}}|w|^{1/4}\right),
\end{equation*}
showing that the contribution from the rest of the contour $\G$ is  $o\left(t^{2/3}\right)$ in the relevant regime.
\end{proof}

\begin{remark}\label{remsigmaKPZantisymmetry}
If $\sigma$ has the property $\s(-z)=1-\s(z)$ and if $\sigma'(z)$ satisfies $\left|\sigma'\left(z\right)\right|\leq c_3|z|^{-3}$ for $|z|>C$ instead of just \eqref{eq:sigmaprimedecay}, the error terms of the lemma can be strengthened to $\O(t^{4/3})$. This is clear in \eqref{eq:estimatea}, and for the limits as $w\to\pm 1$ one can estimate $I_1$ in \eqref{eq:I1I2} as
\begin{equation*}
I_1=\int_{-t^{-2/3}/2}^{t^{-2/3}/2}\frac{\s_1(z')-\s(z')}{t^{2/3}z'-w}\d z'+\int_{\left(-\infty,-2t^{-2/3}\right)\cup\left(t^{2/3}/2,+\infty\right)}\frac{\s_1(z')-\s(z')}{t^{2/3}z'-w}\d z'=\O(t^{2/3}),
\end{equation*}
by the aforementioned antisymmetry for the first term and by the exponential decay \eqref{eq:sigmaexpdecay} of $\s$ for the second term, and for $I_2$ one can proceed as in \eqref{improved}, and the stronger decay $\left|\sigma'\left(z\right)\right|\leq c_3|z|^{-3}$ for $|z|>C$ implies that $|I_2|\leq \frac 32 c_3t^{2/3}$.
\end{remark}

Therefore, the RH problem for $S$ has jumps that are small in the appropriate norms, and in a similar manner as for $Y$ in Section \ref{section:asY}, we can use this to prove that
\[
\|S_--I\|_2\leq \|(1-C_{J_S})^{-1}\|_2\|C_-\|_2\|J_S-I\|_\infty. 
\]
Moreover, \[S(w)=I+S^{(1)}(x,t)w^{-1}+\mathcal O(w^{-2}),\]
we have
\[
S^{(1)}(x,t)=\mathcal O(t^{2/3}),\qquad t\to 0_+,
\]
uniformly for $|x|\leq Mt^{1/3}$.

\subsection{Asymptotics for $u_\sigma$ and $Q_\sigma$}
The asymptotics {for $S^{(1)}(x,t)$} imply in particular, since $\wt\Psi_\s\left(w\right)=S\left(w\right)\wt\Psi_\step\left(w\right)$ for $|w|>1$ and by \eqref{eq:wtPsiasymp} that
\begin{align*}
&q_\s\left(x,t\right)=q_\step\left(x,t\right) +\O\left(1\right),\\
&p_\s\left(x,t\right)=p_\step\left(x,t\right)+\O\left(t^{1/3}\right)=\frac{x^2}{4t}-t^{-1/3}\int_{-xt^{-1/3}}^{+\infty}y_\gamma^2\left(v\right)\d v+\O\left(t^{1/3}\right),
\end{align*}
where we used \eqref{eq:idPIIp2} on the last line.
Integrating \eqref{Qp} in $x=x'$ between $-\delta$ and $x$ and using \eqref{Qp2}, we obtain
\begin{align*}
\log Q_\sigma\left(x,t\right)&=\log Q_\sigma\left(-\delta,t\right)-t^{-1/3}\int_{-\delta}^{x}\int_{-x't^{-1/3}}^{+\infty}y_\gamma^2\left(v\right)\d v\d x'+\mathcal O\left(t\right)\\
&=\log Q_\sigma\left(-\delta,t\right)-\int_{-xt^{-1/3}}^{\delta t^{-1/3}}\int_{-x't^{-1/3}}^{+\infty}y_\gamma^2\left(v\right)\d v\d x'+\mathcal O\left(t^{1/3}\right)\\
&=\log F_{\rm TW}\left(-xt^{-1/3};\gamma\right)+\mathcal O\left(t^{1/3}\right).
\end{align*}
This proves part {\it(ii)} of Theorem \ref{thmQ}.
Using \eqref{upQ}, \eqref{eq:pqrstep}, \eqref{eq:stepKdVsolution2}, and the fact that $y_\gamma\left(-xt^{-1/3}\right)$ is bounded for $|x|t^{-1/3}\leq M$, we also obtain
\begin{equation*}
u_\s\left(x,t\right)=\partial_x p_\s\left(x,t\right)=-p^2_\s\left(x,t\right)-2q_\s\left(x,t\right)=\frac x{2t}-\frac 1{t^{2/3}}y_\g^2\left(-xt^{-1/3}\right)+\O\left(1\right),
\end{equation*}
which proves part {\it (ii)} of Theorem \ref{thminitialvalue}.

\section{Small time asymptotics when $ Mt^{1/3}\leq x\leq K$}
\label{section:RH3}

In this section, we assume once more that $\sigma$ satisfies Assumptions \ref{assumptions}, and that there are constants $c_1, c_2,c_3,C>0$ such that \eqref{eq:sigmaexpdecay} and \eqref{eq:sigmaprimedecay} hold. In contrast with the discussion until this point, we now restrict to the case $\g=1$; let us note that $\s_{\rm KPZ}\left(r\right)=\left(1+\exp\left(-r\right)\right)^{-1}$ satisfies this assumption.
We denote $\s_1=\chi_{\left(0,+\infty\right)}$, in agreement with the notation $\step$ used before. For a sufficiently large positive constant $M$, we assume that $x\geq Mt^{1/3}$, and we also assume that $x$ remains bounded, say $x\leq K$ for an arbitrarily large $K$.
Our objective in this section is again to derive small $t$ asymptotics for $\Psi_\s$, this time uniform in {$Mt^{1/3}\leq x\leq K$}.
It turns out that it will still be convenient to work in the variable $w=t^{2/3}z$ introduced in Section \ref{par:changeofvariable}. However, we need to construct a local parametrix near $w=0$, different from that employed in the analysis of the previous section.

\subsection{Model RH problem needed for the local parametrix}

For the construction of a local parametrix near $w=0$, we will rely on the following model RH problem, depending on a parameter $\xi$.
 
As a side note, we mention that this RH problem is obtained by the formal substitution $\left(x,t\right)\mapsto\left(\xi,0\right)$ in the RH problem for $\Psi_\s$.

\subsubsection*{RH problem for $\Phi_\s$}
\begin{enumerate}[label=(\alph*)]
\item $\Phi_\s$ is analytic in $\mathbb C\setminus \mathbb R$.
\item $\Phi_\s$ has boundary values $\Phi_{\s,\pm}$ which are $L^2$ on any compact real set, and which are continuous on the real line except at the points $r_1, \ldots ,r_k$, and they are related by
\begin{equation}\label{jumpPhi}\Phi_{\s,+}\left(z\right)=\Phi_{\s,-}\left(z\right)\begin{pmatrix}1&1-\s\left(z\right)\\0&1\end{pmatrix}.\end{equation}
\item As $z\to\infty$ there exist functions $p_{\sigma}^0=p_{\sigma}^0\left(\xi\right), q_{\sigma}^0= q_{\sigma}^0\left(\xi\right)$ and $r_{\sigma}^0= r_{\sigma}^0\left(\xi\right)$ such that $\Phi_\s$ has the asymptotic behavior
\begin{multline}
\label{eq:Phiasymp}
\Phi_\s\left(z\right) = \left( I + \frac{1}{z} \begin{pmatrix} q_{\sigma}^0 & -\i r_{\sigma}^0 \\ \i  p_{\sigma}^0 & -q_{\sigma}^0 \end{pmatrix} + \mathcal O\left(\frac{1}{z^2}\right) \right) z^{\frac{1}{4} \sigma_3} A^{-1} {\rm e}^{\xi z^{1/2}\sigma_3}\\
\times\begin{cases}I,&|\arg z|<\pi-\delta,\\
\begin{pmatrix}1&0\\\mp 1&1\end{pmatrix},&\pi-\delta<\pm \arg z<\pi,\end{cases}
\end{multline}
for any $0<\delta<\pi/2$; here the principal branches of $z^{\frac{1}4\sigma_3}$ and $z^{1/2}$ are taken, analytic in $\mathbb C \setminus \left(-\infty,0\right]$ and positive for $z>0$. 
\end{enumerate}

The proof that this RH problem admits a solution $\Phi_\s$ for all $\xi>0$ is based on a \emph{vanishing lemma}.
It contains rather classical but technical arguments, and for that reason we defer it to Appendix \ref{appvanishinglemma}.
For now, we continue assuming that there exists a solution $\Phi_\s$, and we note that uniqueness of the solution then follows from standard arguments, completely similar to those used before for the uniqueness of $\Psi_\s$. 
Let us also mention a fact which will be useful below and which follows from standard RH theory: for any point $\xi_0>0$ there exists an open disk in the complex plane centered at $\xi_0$ such $\Phi_\s\left(z;\xi\right)$ exists for all $\xi$ in this disk and the asymptotic relation \eqref{eq:Phiasymp} is uniform for $\xi$ in this disk.

Let us note that when $\s=\s_1=\chi_{\left(0,+\infty\right)}$, the RH problem for $\Phi_\sigma$ is (a variant of) the Bessel model RH problem, and then we can construct the matrix $\Phi_{\s_1}$ explicitly in terms of Bessel functions (see \eqref{eq:Phisigma1vsPhiBessel} below): let
\be\label{eq:besselparametrix}
\Phi_{\Bessel} \left(\eta\right):=
\renewcommand*{\arraystretch}{1.25}
\begin{pmatrix}
1 & \frac{3\i}8 \\ 0 & 1\end{pmatrix}
\begin{pmatrix}
\sqrt{\pi\eta}\,{\rm I}_1\left(\sqrt{\eta}\right) & -\i \sqrt{\frac \eta\pi}\,{\rm K}_1\left(\sqrt{\eta}\right)
\\
-\i\sqrt{\pi}\,{\rm I}_0\left(\sqrt{\eta}\right) & \frac 1{\sqrt{\pi}}\,{\rm K}_0\left(\sqrt{\eta}\right)
\end{pmatrix},\quad\eta\in\C\setminus(-\infty,0],
\ee
with principal branches of the square roots, then it is well known that $\Phi_\Bessel$ satisfies the following RH problem (see e.g.\ \cite{DLMF}).

\subsubsection*{RH problem for $\Phi_\Bessel$}
\begin{enumerate}[label=(\alph*)]
\item $\Phi_\Bessel$ is analytic in $\mathbb C\setminus \left(-\infty,0\right]$.
\item $\Phi_\Bessel$ has continuous boundary values on $\left(-\infty,0\right)$, and 
\be
\Phi_{\Bessel,+}=\Phi_{\Bessel,-}\begin{pmatrix}1&1\\0&1
\end{pmatrix}.
\ee
\item As $\eta\to\infty$, $\Phi_\Bessel$ has the asymptotic behavior
\begin{multline}
\label{asPhiBessel}
\Phi_\Bessel\left(\eta\right)=\left(I+\frac{1}{\eta}
\renewcommand*{\arraystretch}{1.3}
\begin{pmatrix}-\frac 9{128} & -\frac {39\i}{512}\\ -\frac {\i}8 & \frac 9{128} \end{pmatrix}
+\O\left(\frac 1{\eta^2}\right)\right)\\
\times\ \eta^{\s_3/4}A^{-1}\e^{\eta^{1/2}\s_3}\times\begin{cases}I,&|\arg z|<\pi-\delta,\\
\begin{pmatrix}1&0\\\mp 1&1\end{pmatrix},&\pi-\delta<\pm \arg z<\pi,\end{cases}
\end{multline}
for any $0<\delta<\pi/2$; here the principal branches of $\eta^{\frac{1}4\sigma_3}$ and $\eta^{1/2}$ are taken, analytic in $\mathbb C \setminus \left(-\infty,0\right]$ and positive for $\eta>0$.
\item As $\eta\to 0$, we have $\Phi_\Bessel\left(\eta\right)=\mathcal O\left(\log\eta\right)$.
\end{enumerate}

It is then straightforward to verify that for all $\xi>0$ we have
\be
\label{eq:Phisigma1vsPhiBessel}
\Phi_{\s_1}\left(z;\xi\right)=\xi^{-\frac 12 \s_3}\Phi_\Bessel\left(\xi^2 z\right).
\ee

\subsection{Small $\xi$ asymptotics for the model RH problem}

For later use, we need to understand the behavior of $\Phi_\s(z;\xi)$ as $\xi\to 0$. To this end we use an argument parallel to that employed in Section \ref{sec:proofii}. More precisely, consider the variable $\xi^2 z=\eta$ and set 
\be
\label{eq:Xi}
\Xi_\s\left(\eta\right)=\xi^{\frac 12\s_3}\Phi_\s\left(\xi^{-2}\eta\right),
\ee
which satisfies the following RH conditions, all of which are consequences of the corresponding conditions for $\Phi_\s$.

\subsubsection*{RH problem for $\Xi_\s$}

\begin{enumerate}[label=(\alph*)]
\item $\Xi_\s$ is analytic in $\mathbb C\setminus \mathbb R$.
\item $\Xi_\s$ has boundary values $\Xi_{\s,\pm}$ on $\mathbb R$ which are $L^2$ on any compact real set, and which are continuous on the real line except at the points $\xi^2r_1, \ldots ,\xi^2r_k$, and they are related by
\begin{equation}\label{jumpXi}
\Xi_{\s,+}\left(\eta\right)=\Xi_{\s,-}\left(z\right)\begin{pmatrix}1&1-\s\left(\xi^{-2}\eta\right)\\0&1\end{pmatrix}.\end{equation}
\item As $z\to\infty$ there exist functions $p_{\sigma}^0=p_{\sigma}^0\left(\xi\right), q_{\sigma}^0=q_{\sigma}^0\left(\xi\right)$ and $r_\s^0= r_{\sigma}^0\left(\xi\right)$ such that $\Phi_\s$ has the asymptotic behavior
\begin{multline*}
\label{eq:Xiasymp}
\Xi_\s\left(\eta\right) = \left( I + \frac{1}{\eta} 
\renewcommand*{\arraystretch}{1.2}
\begin{pmatrix} \xi^2q_\s^0 & -\i\xi^3 r_\s^0 \\ \i\xi p_\s^0 & -\xi^2q_\s^0 \end{pmatrix} + \mathcal O\left(\frac{1}{\eta^2}\right) \right)
\\
\times\ \eta^{\frac{1}{4} \sigma_3} A^{-1} {\rm e}^{\eta^{1/2}\sigma_3}\times\begin{cases}I,&|\arg\eta|<\pi-\delta,\\
\begin{pmatrix}1&0\\\mp 1&1\end{pmatrix},&\pi-\delta<\pm \arg\eta<\pi,\end{cases}
\end{multline*}
for any $0<\delta<\pi/2$; here the principal branches of $\eta^{\frac{1}4\sigma_3}$ and $\eta^{1/2}$ are taken, analytic in $\mathbb C \setminus \left(-\infty,0\right]$ and positive for $\eta>0$. 
\end{enumerate}

The fact that $\s\left(\xi^{-2}\eta\right)\to\s_1\left(\eta\right)= \chi_{\left(0,+\infty\right)}\left(\eta\right)$ as $\xi\to 0_+$, uniformly away from $\xi=0$, suggests, as in Section \ref{sec:proofii}, to approximate $\Xi_\s\left(\eta\right)$ by
\begin{equation*} 
\Xi_{\s_1}\left(\eta\right)=\Phi_\Bessel\left(\eta\right)
\end{equation*}
for, say, $|\eta|>1$; $\Phi_\Bessel$ is the Bessel parametrix in \eqref{eq:besselparametrix}.
For $|\eta|<1$ we consider, in analogy with \eqref{eq:parametrix2}, the following local parametrix,
\begin{equation*}
L\left(\eta\right)=\Phi_\Bessel\left(\eta\right)\begin{pmatrix}
1 & b\left(\eta\right) \\ 0 & 1
\end{pmatrix}
\end{equation*}
with
\be
\label{eq:defb}
b\left(\eta\right)=\frac 1{2\pi\i}\left(\int_{-\infty}^0\frac{-\s\left(\eta'\xi^{-2}\right)}{\eta'-\eta}\d\eta'+\int_0^{+\infty}\frac{1-\s\left(\eta'\xi^{-2}\right)}{\eta'-\eta}\d\eta'\right).
\ee

Then we define
\be
\label{eq:defW}
W\left(\eta\right)=\begin{cases}
\Xi_\s\left(\eta\right)\Phi_\Bessel^{-1}\left(\eta\right), \; \text{if} \; |\eta|>1 \; \text{and}\; \eta\in\C\setminus\R, 
\\
\Xi_\s\left(\eta\right)L^{-1}\left(\eta\right), \; \text{if} \; |\eta|<1.
\end{cases}
\ee
Then $W$ solves the following RH problem. The proof is parallel to the proof of the RH conditions for $S$ in Section \ref{sec:proofii}, and we omit it.

\subsubsection*{RH problem for $W$}
\begin{enumerate}[label=(\alph*)]
\item $W$ is analytic in $\mathbb C\setminus \G$, where $\G$ is the oriented contour in the complex plane given by the union of $\R\cap\{|\eta|>1\}$ and $\{|\eta|=1\}$, the latter oriented counter-clockwise. See Figure \ref{figcontour}.
\item $W$ has boundary values $W_\pm$ which are continuous on $\G$; according to the chosen orientation of $\{|\eta|=1\}$, $W_+$ corresponds to the boundary value in $|\eta|<1$, and $W_-$ to that in $|\eta|>1$. Moreover, the boundary values are related by
\begin{equation*}
W_+\left(\eta\right)=W_-\left(\eta\right)J_W\left(\eta\right),
\end{equation*}
with
\begin{equation*}
J_W\left(\eta\right)=
\begin{cases}
\Phi_\Bessel\left(\eta\right)L^{-1}\left(\eta\right),&\text{if}\;|\eta|=1,
\\
\Phi_{\Bessel,-}\left(\eta\right)\begin{pmatrix} 1 & \chi_{\left[0,\infty\right)}\left(\eta\right)-\s\left(\eta\xi^{-2}\right) \\ 0 & 1\end{pmatrix} \Phi_{\Bessel,-}^{-1}\left(\eta\right),&\text{if}\;\eta<-1\;\text{or}\;\eta>1.
\end{cases}
\end{equation*}
\item As $\eta\to\infty$,  $W$ has the asymptotic behavior, uniformly in $\eta$,
\begin{equation}
W\left(\eta\right) = I + \O\left(\eta^{-1}\right).
\end{equation}
\end{enumerate}

We now show that the jump matrix $J_W$ is small in the relevant regime, provided that $\s$ satisfies, in addition to Assumptions \ref{assumptions}, the decay conditions conditions \eqref{eq:sigmaexpdecay} and \eqref{eq:sigmaprimedecay}. The following lemma is the analogue of Lemma \ref{lemma:RH2jump}.

\begin{lemma}
There exists $\epsilon>0$ such that the following estimates hold true uniformly in $0<\xi<\epsilon$;
\begin{equation*}
\|J_W-I\|_1=\O\left(\xi^2\right),
\quad
\|J_W-I\|_2=\O\left(\xi^2\right),
\quad
\|J_W-I\|_\infty=\O\left(\xi^2\right),
\end{equation*}
where $\|.\|_1$, $\|.\|_2$, and $\|.\|_\infty$ denote the maximum of the entrywise $L^1\left(\G\right)$-, $L^2\left(\G\right)$-, and $L^\infty\left(\G\right)$-norms, respectively.
\end{lemma}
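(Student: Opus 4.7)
The proof follows the template of Lemma \ref{lemma:RH2jump} very closely, with $a(w)$ replaced by $b(\eta)$, $\wt\Psi_\step$ replaced by $\Phi_\Bessel$, and the scaling $t^{-2/3}$ replaced by $\xi^{-2}$. The plan is to split the contour $\G$ into the unit circle and the line segments $\R\cap\{|\eta|>1\}$, and to bound $J_W-I$ on each piece separately.

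On the line segments, I note that
\begin{equation*}
J_W(\eta)-I=\left(\chi_{[0,\infty)}(\eta)-\s(\eta\xi^{-2})\right)\Phi_{\Bessel,-}(\eta)\begin{pmatrix}0&1\\0&0\end{pmatrix}\Phi_{\Bessel,-}^{-1}(\eta).
\end{equation*}
For $\eta>1$, condition \eqref{eq:sigmaexpdecay} yields $|1-\s(\eta\xi^{-2})|\leq c_1 \e^{-c_2\eta\xi^{-2}}$, while \eqref{asPhiBessel} implies that the conjugated nilpotent matrix grows at most like $|\eta|^{1/2}\e^{2\eta^{1/2}}$; hence the integrand is bounded by $c\,\e^{2\eta^{1/2}-c_2\eta\xi^{-2}}$, which is super-exponentially small as $\xi\to 0$. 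For $\eta<-1$, the factor $e^{\eta^{1/2}\sigma_3}$ is purely oscillatory, so $\Phi_{\Bessel,-}(\eta)$ grows only polynomially, and $|\s(\eta\xi^{-2})|\leq c_1\e^{-c_2|\eta|\xi^{-2}}$ gives the same super-exponential smallness. The contribution to the three norms is therefore $o(\xi^N)$ for any $N>0$.

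The main work is on $|\eta|=1$, where $J_W(\eta)-I=-b(\eta)\,\Phi_\Bessel(\eta)\begin{pmatrix}0&1\\0&0\end{pmatrix}\Phi_\Bessel^{-1}(\eta)$ and the conjugation factor is uniformly bounded. So it suffices to show $b(\eta)=\O(\xi^2)$ uniformly on $|\eta|=1$. The change of variables $z'=\eta'\xi^{-2}$ gives
\begin{equation*}
b(\eta)=\frac{\xi^2}{2\pi\i}\int_\R \frac{\chi_{(0,\infty)}(z')-\s(z')}{\xi^2 z'-\eta}\,\d z'.
\end{equation*}
When $|\Im\eta|\geq\delta$, the bound $|\xi^2 z'-\eta|\geq\delta$ combined with \eqref{eq:sigmaexpdecay} (which ensures $\chi_{(0,\infty)}-\s\in L^1(\R)$) immediately yields $b(\eta)=\O(\xi^2/\delta)$.

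The delicate case is $|\eta|=1$ with $|\Im\eta|<\delta$, i.e.\ $\eta$ near $\pm 1$. By the symmetric roles of $\pm 1$, consider $\eta$ near $+1$; the ``near-singularity'' of the kernel sits at $z'=\eta/\xi^2\approx\xi^{-2}$, which is a very large positive number as $\xi\to 0$. Following the splitting of the earlier lemma, I write $b(\eta)=\frac{\xi^2}{2\pi\i}(I_1+I_2)$ with $I_1$ the integral over $(-\infty,\xi^{-2}/2)\cup(2\xi^{-2},\infty)$ and $I_2$ the integral over $(\xi^{-2}/2,2\xi^{-2})$. For $I_1$ one uses that $|\xi^2 z'-\eta|$ is bounded below by a positive constant together with $\chi_{(0,\infty)}-\s\in L^1$. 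For $I_2$, the key point is that on the integration interval $z'\geq \xi^{-2}/2$ one has $|1-\s(z')|\leq c_1\e^{-c_2\xi^{-2}/2}$ by \eqref{eq:sigmaexpdecay}; pulling this prefactor out and using that $\int_{1/2}^{2}\d u/|u-\eta|$ has at worst logarithmic dependence on $|\Im\eta|$ (in the sense of boundary values as $\eta\to 1$), the product is $o(1)$. Combining both cases gives $b(\eta)=\O(\xi^2)$ uniformly on $|\eta|=1$, which concludes the estimate of $\|J_W-I\|_\infty$; the $L^1$ and $L^2$ estimates follow since $\G\cap\{|\eta|=1\}$ has finite length and the contribution from $|\eta|>1$ was already shown to be super-exponentially small.

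The main obstacle, as in the earlier lemma, is the uniform control of $b(\eta)$ at the two points $\eta=\pm 1$ on the unit circle where the Cauchy integrand becomes nearly singular; the crucial observation making the estimate work is that in the original variable $z'$, this near-singularity is located at $z'\approx\pm\xi^{-2}$, precisely the regime where the deviation $\chi_{(0,\infty)}-\s$ is exponentially small.
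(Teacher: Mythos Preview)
Your overall strategy mirrors the paper's proof, and the estimates on $\{|\eta|>1\}$ and on the part of the unit circle with $|\Im\eta|\geq\delta$ are fine. The gap is in your treatment of $I_2$ for $\eta$ near $\pm 1$.

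When you ``pull out the prefactor'' $c_1\e^{-c_2\xi^{-2}/2}$ from $I_2$, what remains is
\[
\int_{\xi^{-2}/2}^{2\xi^{-2}}\frac{\d z'}{|\xi^2 z'-\eta|}=\xi^{-2}\int_{1/2}^{2}\frac{\d u}{|u-\eta|},
\]
and this last integral genuinely diverges as $\eta\to 1$ along the circle (it behaves like $\log(1/|\Im\eta|)$). Your bound therefore does not control $|b(\eta)|$ uniformly up to $\eta=\pm 1$, so the $L^\infty$ estimate on the circle is not established. The parenthetical ``in the sense of boundary values'' does not rescue the argument: boundary values of $b$ exist, but your inequality does not bound them.

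The paper closes this gap exactly as in the argument after \eqref{eq:estimatea}: one writes the numerator as a constant plus a difference,
\[
1-\s(z')=\bigl(1-\s(\xi^{-2})\bigr)+\bigl(\s(\xi^{-2})-\s(z')\bigr).
\]
For the constant part one keeps the Cauchy kernel intact (no absolute value), so that $\int_{1/2}^{2}\frac{\d u}{u-\eta}=\log\frac{2-\eta}{1/2-\eta}$ has \emph{bounded} boundary values at $\eta=1$, and the exponentially small prefactor $1-\s(\xi^{-2})$ kills the $\xi^{-2}$. For the difference part one uses the mean value theorem together with the derivative decay \eqref{eq:sigmaprimedecay} to get a pointwise bound on the integrand that is uniform in $\eta$. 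Your argument never invokes \eqref{eq:sigmaprimedecay}, which is precisely the ingredient the paper singles out for this step. Without it (or an equivalent regularity input), the uniform $L^\infty$ bound near $\eta=\pm 1$ does not follow from the exponential smallness of the numerator alone.
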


\begin{proof}
On the unit circle $|\eta|=1$ we have
\begin{equation*}
J_W\left(\eta\right)-I=b\left(\eta\right)\Phi_\Bessel\left(\eta\right)
\begin{pmatrix}0 & 1 \\ 0 & 0 \end{pmatrix}\Phi_\Bessel^{-1}\left(\eta\right).
\end{equation*}
Since $\Phi_\Bessel\left(\eta\right)=\O\left(1\right)$ on the unit circle $|\eta|=1$, we have
\begin{equation*}
J_W\left(\eta\right)-I=b\left(\eta\right)\mathcal O\left(1\right).
\end{equation*}
Then we rewrite $b$ by the change of variable $z'=\eta'\xi^{-2}$ in \eqref{eq:defb}, to obtain
\begin{equation*}
|b\left(\eta\right)|\leq\frac{\xi^2}{2\pi}\int_\R\frac{|\step(z')-\s\left(z'\right)|}{|\xi^2z'-\eta|}\d z'
\leq \xi^2\frac {c_1}{|\Im \eta|}\int_\R\e^{-c_2|z'|}\d z'=\frac{2c_1}{c_2\delta}\xi^2
\end{equation*}
assuming $|\Im\eta|\geq\delta$ and using \eqref{eq:sigmaexpdecay}. This proves that $b(\eta)=\O\left(\xi^2\right)$ uniformly in $|\eta|=1,|\Im\eta|\geq\delta>0$. 
However, we can take the limit $\eta\to\pm 1$, repeating for $b(\eta)$ the argument explained after \eqref{eq:estimatea} for the analogous singular integral $a(w)$; as above, we use at this point the condition \eqref{eq:sigmaprimedecay}. We conclude therefore that $b\left(\eta\right)=\O\left(\xi^2\right)$ uniformly on the whole unit circle $\{|\eta|=1\}$.
This implies
\begin{equation*}
\|J_W-I\|_{L^1\left(\{|w|=1\}\right)}=\O\left(\xi^2\right),
\quad
\|J_W-I\|_{L^2\left(\{|w|=1\}\right)}=\O\left(\xi^2\right),
\quad
\|J_W-I\|_{L^\infty\left(\{|w|=1\}\right)}=\O\left(\xi^2\right),
\end{equation*}
where the norms are {the maxima} over the matrix entries of the $L^1$-, $L^2$-, and $L^\infty$-norms.

On the other hand, for $\eta<-1$ and for $\eta>1$,
\begin{equation*}
J_W\left(\eta\right)-I=\left(\chi_{\left[0,\infty\right)}\left(\eta\right)-\s\left(\eta \xi^{-2}\right)\right)\Phi_{\Bessel,-}\left(\eta\right)\begin{pmatrix} 0 & 1 \\ 0 & 0\end{pmatrix}\Phi_{\Bessel,-}^{-1}\left(\eta\right)
\end{equation*}
and using \eqref{asPhiBessel} and \eqref{eq:sigmaexpdecay} we have
\begin{equation*}
J_W-I=\O\left(\e^{-c_2\xi^{-2}|\eta|+\eta^{1/2}}|\eta|^{1/4}\right)
\end{equation*}
showing that the contribution from the rest of the contour $\G$ is $o(\xi^2)$ as $\xi\to 0_+$.
\end{proof}

Similarly as in the previous sections for $Y$ and $S$, writing $W(\eta)=I+W^{(1)}(\xi)\eta^{-1}+\mathcal O(\eta^{-2})$ as $\eta\to\infty$, we have as a consequence of the previous lemma that
\[W^{(1)}(\xi)=\mathcal O(\xi^2),\qquad \xi\to\infty.\]

Moreover, we can explicitly compute the sub-leading asymptotic term in the asymptotic expansion for $W^{(1)}$ as $\xi\to 0$. 
For $|\eta|=1$ we have $J_W\left(\eta\right)=I+\xi^2 J_W^{\left(1\right)}\left(\eta\right)+\O\left(\xi^4\right)$, where
\begin{equation*}
J_W^{\left(1\right)}\left(\eta\right)=\frac {I_\s}{2\pi\i\eta}\Phi_\Bessel\left(\eta\right)\begin{pmatrix}
 0 & 1 \\ 0 & 0
\end{pmatrix}\Phi_\Bessel^{-1}\left(\eta\right)
,\qquad I_\s:=\int_\R\left(\s_1\left(r\right)-\s\left(r\right)\right)\d r.
\end{equation*}
On the other parts of the jump contour we have
\[J_W(\eta)=I+\mathcal O(\xi^4).\]
Using similar arguments as the ones for $Y$ in Section \ref{section:asY}, we can show that
\be
W(\eta)=I+W_1(\eta)\xi^2+\O(\xi^4/(|\eta|+1)),\label{eq:asW}
\ee
uniformly in $\eta\in\mathbb C\setminus\Gamma$ as $\xi\to\infty$, with (substitute the above in the jump relation $W_+=W_-J_W$)
\begin{equation*}
W_{1,+}=W_{1,-}+J_W^{\left(1\right)},
\end{equation*}
which implies the identity
\begin{equation*}
W_1\left(\eta\right)=\frac 1{2\pi\i}\oint_{|\eta'|=1}\frac{J_W^{\left(1\right)}\left(\eta'\right)}{\eta'-\eta}\d\eta'.
\end{equation*}
In particular, when $|\eta|>1$, a residue computation ($J_W^{\left(1\right)}\left(\eta\right)$ is a meromorphic function of $\eta\in\C$ with a simple pole at $\eta=0$ only) gives
\be\label{eq:asW1}
W_1\left(\eta\right)=-\frac {I_\s}{2\pi\i\eta}\Phi_\Bessel\left(0\right)\begin{pmatrix}
 0 & 1 \\ 0 & 0
\end{pmatrix}\Phi_\Bessel^{-1}\left(0\right)=\frac {I_\s}\eta
\renewcommand*{\arraystretch}{1.25}
\begin{pmatrix} -\frac 3{16}&\frac{9\i}{128} \\ \frac {\i}2 & \frac 3{16}\end{pmatrix}.
\ee
Summarizing,
\be
\label{lastasymptotics}
W\left(\eta\right)=I+\frac {I_\s\xi^2}\eta
\renewcommand*{\arraystretch}{1.25}
\begin{pmatrix} -\frac 3{16}&\frac{9\i}{128} \\ \frac {\i}2 & \frac 3{16}\end{pmatrix}+\O(\xi^4/\eta)
\ee
uniformly in in $|\eta|>1$, $\xi\in\left(0,\epsilon\right]$ for all $\epsilon>0$.
Using $\Xi_\s\left(\eta\right)=W\left(\eta\right)\Phi_\Bessel\left(\eta\right)$ for $|\eta|>1$, we obtain from the $\eta\to\infty$ expansion that
\be\label{eq:asp0q0}
q_\s^0\left(\xi\right)=-\frac 9{128\xi^2}-\frac{3I_\sigma}{16}+\O\left(\xi^2\right),\qquad
p_\s^0\left(\xi\right)=-\frac 1{8\xi}+\frac{I_\sigma \xi}{2}+\O\left(\xi^3\right),\qquad \xi\to 0,
\ee
such that
\be\label{eq:asvrefined}
v_\sigma(\xi):=-p_\s^0(\xi)^2-2q_\s^0(\xi)=\frac{1}{8\xi^2}+\frac{I_\sigma}{2}+\mathcal O(\xi^2),\qquad \xi\to 0,
\ee
which proves \eqref{as:v} in Theorem \ref{thmv}.

\subsection{Asymptotics for $\widetilde{\Psi}_\sigma$}

Let us now come back to the main goal of this section, which is to establish asymptotics for the matrix $\wt\Psi_\s(w)$, introduced in \eqref{eq:defwtPsi}, in the regime $Mt^{1/3}\leq x\leq K$. To this end we shall now define suitable global and local parametrices for $\wt\Psi_\s(w)$, which are different from those considered in the previous section.

We define the global parametrix $U(w)$ by
\begin{equation*}
U(w)=w^{\sigma_3/4}A^{-1}\e^{\left(-\frac{2}{3}w^{3/2}+xt^{-1/3}w^{1/2}\right)\sigma_3}.
\end{equation*}
This function has the jump matrix $\begin{pmatrix}1&1\\0&1\end{pmatrix}$ on $(-\infty,0)$, so it satisfies approximately the same RH conditions as $\widetilde{\Psi}_\sigma$ for $w$ not too close to $0$, and it will turn out to be a good approximation for $\widetilde{\Psi}_\sigma$ for $w$ away from $0$.

For $w$ close to $0$, we need to construct a local parametrix $V$, which satisfies the same jump conditions as $\widetilde\Psi_\sigma$ and which matches with the global parametrix $U$ on a circle of, say, radius $1$ around the origin.
To this end, we define $V\left(w\right)$ for $|w|<1$ as
\begin{equation}
\label{def:P3}
V\left(w\right)=t^{\frac 16 \s_3}\Phi_\s\left(t^{-2/3}w;x-\frac 23 wt^{1/3}\right).
\end{equation}
We note here that for $\Phi_\sigma(\eta;\xi)$ to be defined, we must have that $\xi=x-\frac 23 wt^{1/3}$ lies sufficiently close to the real axis $\xi>0$; to this end we recall that we are assuming $t\to 0$ and $Mt^{1/3}\leq x\leq K$.
We now define
\begin{equation}
R\left(w\right)=\begin{cases}
(xt^{-1/3})^{\sigma_3/2}\widetilde\Psi_\sigma\left(w\right)U^{-1}\left(w\right)(xt^{-1/3})^{-\sigma_3/2}&\mbox{for $|w|>1$},\\
(xt^{-1/3})^{\sigma_3/2}\widetilde\Psi_\sigma\left(w\right)V^{-1}\left(w\right)(xt^{-1/3})^{-\sigma_3/2}&\mbox{for $|w|<1$.}
\end{cases}
\end{equation}
We will now show that $R$ solves the following RH problem.

\subsubsection*{RH problem for $R$}
\begin{enumerate}[label=(\alph*)]
\item $R$ is analytic in $\mathbb C\setminus \G$, where $\G$ is the oriented contour in the complex plane given by the union of $\R\cap\{|w|>1\}$, oriented from left to right, and the unit circle, oriented counter-clockwise. See Figure \ref{figcontour}.
\item $R$ has boundary values $R_\pm$ which are continuous on $\G$; according to the chosen orientation of $\{|w|=1\}$, $R_+$ corresponds to the boundary value in $|w|<1$, and $R_-$ to that in $|w|>1$. Moreover, the boundary values are related by
\be\label{jumpR}
R_+\left(w\right)=R_-\left(w\right)J_R\left(w\right),
\ee
with
\be
J_R\left(w\right)=
\begin{cases}
(xt^{-1/3})^{\sigma_3/2}U\left(w\right)V^{-1}\left(w\right)(xt^{-1/3})^{-\sigma_3/2},&\text{if}\;|w|=1,
\\
(xt^{-1/3})^{\sigma_3/2}U_-\left(w\right)\begin{pmatrix} 1 & \chi_{\left[0,\infty\right)}\left(w\right)-\s\left(wt^{-2/3}\right) \\ 0 & 1\end{pmatrix}\\
\qquad\qquad\qquad\times\ U_-^{-1}\left(w\right)(xt^{-1/3})^{-\sigma_3/2},&\text{if}\;w<-1\;\text{or}\;w>1.
\end{cases}
\ee
\item For $|\arg w|<\pi$, we have uniformly $$R\left(w\right) =I+ \mathcal O\left(w^{-1}\right)\; \text{as} \; w\to\infty.$$
\end{enumerate}

Condition (a) above follows from the fact that, by definition, $R$ is an analytic function of $w$ away from $\R\cup\{|w|=1\}$; it is also easily checked that $V$ and {$\widetilde\Psi_\s$} have the same jump on $\R$, and so $R$ extends analytically across $w\in\left(-1,1\right)$ (with an argument similar to the one used for $S$ to show that the points $w=t^{2/3}r_j$ and $w=0$ are removable singularities). 
To prove (b), note that the continuity of $R_\pm$ on $\G$ follows from the analogous condition for $\wt\Psi_\s$, since the points $t^{2/3}r_1, \ldots ,t^{2/3}r_k$ will be inside the disk $|w|<1$ as soon as $t$ is small enough; the jump relation follows from the computations
\begin{align*}
R_+\left(w\right)&=(xt^{-1/3})^{\sigma_3/2}\wt\Psi_\s\left(w\right) V^{-1}\left(w\right)(xt^{-1/3})^{-\sigma_3/2}\\&=(xt^{-1/3})^{\sigma_3/2}\wt\Psi_\s\left(w\right) U^{-1}\left(w\right) U\left(w\right) V^{-1}\left(w\right)(xt^{-1/3})^{-\sigma_3/2}\\&=R_-\left(w\right)(xt^{-1/3})^{\sigma_3/2}U\left(w\right) V^{-1}\left(w\right)(xt^{-1/3})^{-\sigma_3/2}
\end{align*}
for $|w|=1$, and
\begin{align*}
R_+\left(w\right)&=(xt^{-1/3})^{\sigma_3/2}\wt\Psi_{\s,+}\left(w\right)U_+^{-1}\left(w\right)(xt^{-1/3})^{-\sigma_3/2}\\&=(xt^{-1/3})^{\sigma_3/2}\wt\Psi_{\s,-}\left(w\right)\begin{pmatrix}1 & \chi_{\left[0,\infty\right)}\left(w\right)-\s\left(wt^{-2/3}\right) \\ 0 & 1 \end{pmatrix}
U_-^{-1}\left(w\right)(xt^{-1/3})^{-\sigma_3/2}
\\
&=R_-\left(w\right)(xt^{-1/3})^{\sigma_3/2}U_-\left(w\right)
\begin{pmatrix}1 & \chi_{\left[0,\infty\right)}\left(w\right)-\s\left(wt^{-2/3}\right) \\ 0 & 1 \end{pmatrix}
U_-^{-1}\left(w\right)(xt^{-1/3})^{-\sigma_3/2}
\end{align*}
for $w<-1$ and $w>1$.

Finally, condition (c) follows from the analogous condition in the RH problem for {$\wt\Psi_\s$}.

\begin{lemma}\label{lemma:JR}
As $t\to 0$ we have, uniformly for $Mt^{1/3}\leq x\leq K$ with $M>0$ large enough, for any $K>0$,
\begin{equation*}
\|J_R-I\|_1=\O\left(x^{-2}t^{2/3}\right)
,\quad
\|J_R-I\|_2=\O\left(x^{-2}t^{2/3}\right)
,\quad
\|J_R-I\|_\infty=\O\left(x^{-2}t^{2/3}\right),
\end{equation*}
where $\|.\|_1$, $\|.\|_2$, and $\|.\|_\infty$ denote the maximum of the entrywise $L^1\left(\G\right)$-, $L^2\left(\G\right)$-, and $L^\infty\left(\G\right)$-norms.
\end{lemma}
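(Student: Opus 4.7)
The plan is to split the jump contour $\Gamma$ into its two components, the unit circle $\{|w|=1\}$ and the real rays $|w|>1$, and estimate $J_R-I$ on each separately.

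On the outer rays $|w|>1$, the jump $J_R(w)-I$ equals
\begin{equation*}
\left(\chi_{[0,+\infty)}(w)-\sigma(wt^{-2/3})\right)(xt^{-1/3})^{\sigma_3/2}U_-(w)\begin{pmatrix}0 & 1\\ 0 & 0\end{pmatrix}U_-^{-1}(w)(xt^{-1/3})^{-\sigma_3/2}.
\end{equation*}
I would use the exponential decay \eqref{eq:sigmaexpdecay} to bound $|\chi_{[0,+\infty)}(w)-\sigma(wt^{-2/3})|\leq c_1\e^{-c_2|w|t^{-2/3}}$, which beats the at-most polynomial growth of the remaining factors (for $w>1$, the phase $-\frac{2}{3}w^{3/2}+xt^{-1/3}w^{1/2}$ appearing in $U_-$ is eventually negative and only improves the bound). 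This yields $L^1$, $L^2$, $L^\infty$ contributions that are super-polynomially small in $t^{-2/3}$, well within the claimed bound.

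The harder part is the circle $|w|=1$. Setting $\xi:=x-\frac{2}{3}wt^{1/3}$ and $\eta:=\xi^2 t^{-2/3}w$, one has $|\xi-x|\leq \frac{2}{3}t^{1/3}$, hence $\xi\approx x$ for $M$ large enough, and $|\eta|\geq c\,x^2 t^{-2/3}\geq cM^2$ is large. By \eqref{eq:Xi} and the fact that $\Xi_\sigma=W\Phi_{\Bessel}$ outside the unit disk, together with the definition \eqref{def:P3} of $V$, I would write $V(w)=t^{\sigma_3/6}\xi^{-\sigma_3/2}W(\eta)\Phi_{\Bessel}(\eta)$. The Bessel asymptotic \eqref{asPhiBessel}, together with the key algebraic identity $\xi t^{-1/3}w^{1/2}=xt^{-1/3}w^{1/2}-\frac{2}{3}w^{3/2}$ that matches the exponentials in $\Phi_{\Bessel}(\eta)$ with those in $U(w)$, then gives
\begin{equation*}
V(w)U^{-1}(w)=t^{\sigma_3/6}\xi^{-\sigma_3/2}H(\eta)\xi^{\sigma_3/2}t^{-\sigma_3/6},\qquad H(\eta):=W(\eta)\bigl(I+B/\eta+\O(\eta^{-2})\bigr),
\end{equation*}
where $B$ denotes the constant matrix appearing in the sub-leading term of \eqref{asPhiBessel}. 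The asymptotic \eqref{lastasymptotics} yields $W(\eta)-I=\O(\xi^2/\eta)=\O(t^{2/3})$ while $B/\eta=\O(t^{2/3}/x^2)$; since $x\leq K$, the bound $t^{2/3}\leq (K^2/x^2)\,t^{2/3}$ shows that both contributions, and hence $H(\eta)-I$, are $\O(t^{2/3}/x^2)$ entry-wise, uniformly on the circle.

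Finally I would track the effect of the two conjugations on $J_R-I$. The outer conjugation by $(xt^{-1/3})^{\sigma_3/2}$ and the inner one by $t^{\sigma_3/6}\xi^{-\sigma_3/2}$ compose into $(x/\xi)^{\sigma_3/2}$, which differs from $I$ by only $\O(t^{1/3}/x)=\O(M^{-1})$ since $\xi=x\bigl(1+\O(t^{1/3}/x)\bigr)$. A brief entry-by-entry accounting then shows that the net scaling factors on the diagonal, $(1,2)$, and $(2,1)$ entries are bounded, so that $J_R(w)-I$ inherits the bound $\O(t^{2/3}/x^2)$ of $H(\eta)-I$ on $|w|=1$. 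Since the unit circle has finite length, this produces the claimed $L^1$, $L^2$, $L^\infty$ bounds there, and combining with the outer estimate finishes the proof. The main technical obstacle I expect is precisely this entry-by-entry bookkeeping through two nearly-cancelling conjugations, where the dominant contribution to $H-I$ swaps from $W-I$ (when $x$ is bounded away from zero) to $B/\eta$ (when $x$ is close to $Mt^{1/3}$), and both must be combined uniformly in $x$ to yield the single rate $\O(x^{-2}t^{2/3})$.
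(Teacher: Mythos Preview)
Your approach on the outer rays $|w|>1$ is the same as the paper's and is fine. On the unit circle, however, you attempt a single unified argument via the factorization $\Phi_\sigma=\xi^{-\sigma_3/2}W\Phi_{\Bessel}$, whereas the paper splits into two sub-regimes: for $x$ bounded away from $0$ (say $\delta\leq x\leq K$) it uses the large-$z$ asymptotics \eqref{eq:Phiasymp} of $\Phi_\sigma$ directly (this is where the uniformity of \eqref{eq:Phiasymp} on compact $\xi$-sets, noted just after the statement of the RH problem for $\Phi_\sigma$, is invoked), while for $Mt^{1/3}\leq x\leq\delta$ it uses exactly your $W$--$\Phi_{\Bessel}$ route.

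The gap in your version is the appeal to \eqref{lastasymptotics} for the bound $W(\eta)-I=\O(\xi^2/\eta)$. That estimate is derived from the small-norm lemma for $J_W$, which is stated and proved only for $0<\xi<\epsilon$ with $\epsilon$ small; you cannot invoke it when $\xi\approx x$ ranges up to $K$. For $\xi$ in a compact set bounded away from $0$ one can still get $W(\eta)-I=\O(1/\eta)$, but this comes from the uniform large-$\eta$ expansion of $\Xi_\sigma$ (equivalently, from \eqref{eq:Phiasymp} uniformly in $\xi$), not from \eqref{lastasymptotics}. Once that is in place your unified computation goes through: $W(\eta)-I=\O(1/\eta)=\O(t^{2/3}/\xi^2)=\O(t^{2/3}/x^2)$, the Bessel correction $B/\eta$ contributes the same order, and the conjugation by $(x/\xi)^{\sigma_3/2}=I+\O(t^{1/3}/x)$ is harmless. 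So your route is essentially correct and arguably more streamlined than the paper's case split, but as written it leans on \eqref{lastasymptotics} outside its domain of validity; you need to supply the separate input that the large-$z$ asymptotics of $\Phi_\sigma$ hold uniformly on compact $\xi$-sets away from $0$, which is precisely what the paper isolates as its first case.
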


\begin{proof}
Let us first look at the jump matrix for $w<-1$ and $w>1$. 
We have
\begin{align*}
J_R(w)-I&=\left(\chi_{\left[0,\infty\right)}\left(w\right)-\s\left(wt^{-2/3}\right)\right)(xt^{-1/3})^{\sigma_3/2}U_-\left(w\right)\begin{pmatrix} 0 & 1 \\ 0 & 0\end{pmatrix} U_-^{-1}\left(w\right)(xt^{-1/3})^{-\sigma_3/2}
\\
&=\e^{-c_2t^{-2/3}|w|-\frac 43 w_-^{3/2}+2xt^{-1/3}w_-^{1/2}}\O\left(xt^{-1/3}w^{1/4}\right)
=\O\left(\e^{-\left(c_2-2Kt^{1/3}\right)t^{-2/3}|w|}xt^{-1/3}w^{1/4}\right).
\end{align*}
This implies that the contributions to the $L^1$-, $L^2$-, and $L^\infty$-norms from this part of $\G$ are of exponentially small order as $t\to 0$.

Next, we look at the jump for $R$ on the circle $|w|=1$. Here, we need to distinguish two cases: first, $\delta\leq x\leq K$ for some arbitrarily small $\delta>0$, and secondly $Mt^{1/3}\leq x\leq \delta$.
In the first case, $\xi=x-\frac{2}{3}wt^{1/3}$ tends to $x\geq \delta$ as $t\to 0_+$, such that we can use condition (c), see \eqref{eq:Phiasymp}, in the RH problem for $\Phi_\sigma$ (which holds uniformly for $\xi$ in a small neighborhood of $x$, by standard RH theory)
to obtain
\begin{align}
J_R(w)&=(xt^{-1/3})^{\sigma_3/2}U(w)V^{-1}(w) (xt^{-1/3})^{-\sigma_3/2}\nonumber\\
&=(xt^{-1/3})^{\sigma_3/2}w^{\sigma_3/4} A^{-1}\e^{\left(-\frac{2}{3}w^{3/2}+xt^{-1/3}w^{1/2}\right)\sigma_3}\Phi_\sigma^{-1}(t^{-2/3}w;\xi)t^{-\frac 16 \s_3}(xt^{-1/3})^{-\sigma_3/2}\nonumber\\
&=x^{\sigma_3/2}\left(I-\frac{t^{2/3}}{w}\begin{pmatrix}q_\s^0(\xi)&-\i r_\s^0(\xi)\\ \i p_\s^0(\xi)&-q_\s^0(\xi)
\end{pmatrix}
+\mathcal O\left(t^{4/3}\right)\right)x^{-\sigma_3/2}\label{eq:jumpRcorrection1}
\end{align}
as $t\to 0$. The corresponding result for the $L^1$-, $L^2$-,
and $L^\infty$- norms now follows easily.

In the case $Mt^{1/3}\leq x\leq \delta$, $\xi=x-\frac{2}{3}wt^{1/3}$ tends to $0$ as $t\to 0$ and we are not allowed to use \eqref{eq:Phiasymp} because these asymptotics are not valid uniformly as $\xi\to 0$.
Instead, we now use the small $\xi$ asymptotics for $\Phi_\sigma(z;\xi)$.
By \eqref{eq:Xi} and \eqref{eq:defW},
we have
\[\Phi_\sigma(t^{-2/3}w;\xi)=\xi^{-\sigma_3/2}W(\xi^2t^{-2/3}w)\Phi_{\rm Be}(\xi^2 t^{-2/3}w),\qquad |\xi^2 t^{-2/3}w|>1,\]
and moreover, from \eqref{lastasymptotics} we know that
\[W(\eta)=I+\frac {\xi^2I_\s}\eta
\renewcommand*{\arraystretch}{1.25}
\begin{pmatrix} -\frac 3{16}&\frac{9\i}{128} \\ \frac {\i}2 & \frac 3{16}\end{pmatrix}+\mathcal O(\xi^4/\eta),\] as $\xi\to 0$, uniformly for $\eta$ sufficiently large.
We now observe that for $|w|=1$, $x>Mt^{1/3}$, and $t$ sufficiently small, $|\xi^2t^{-2/3} w|>M/2$, hence we can use the asymptotics \eqref{asPhiBessel} 
and obtain
\begin{multline*}\Phi_\sigma(t^{-2/3}w;\xi)=\xi^{-\sigma_3/2}\left(I+\frac {t^{2/3}I_\s}w
\renewcommand*{\arraystretch}{1.25}
\begin{pmatrix} -\frac 3{16}&\frac{9\i}{128} \\ \frac {\i}2 & \frac 3{16}\end{pmatrix}+\mathcal O(t^{4/3})\right)\\
\times \left(I+\frac{\xi^{-2}t^{2/3}}{w}\renewcommand*{\arraystretch}{1.25}\begin{pmatrix}-\frac 9{128} & -\frac {39\i}{512}\\ -\frac {\i}8 & \frac 9{128} \end{pmatrix}+\mathcal O(\xi^{-4}t^{4/3})\right)\xi^{\sigma_3/2}t^{-\sigma_3/6}w^{\sigma_3/4}A^{-1}\e^{\xi t^{-1/3}w^{1/2}\sigma_3},\end{multline*}
as $t\to 0$, uniformly in $Mt^{1/3}\leq x\leq \delta$.
We obtain
\begin{align}
J_R(w)&=(xt^{-1/3})^{\sigma_3/2}U(w)V^{-1}(w) (xt^{-1/3})^{-\sigma_3/2}\nonumber\\
&=(xt^{-1/3})^{\sigma_3/2}w^{\sigma_3/4} A^{-1}\e^{\left(-\frac{2}{3}w^{3/2}+xt^{-1/3}w^{1/2}\right)\sigma_3}\Phi_\sigma^{-1}(t^{-2/3}w;\xi)x^{-\sigma_3/2}\nonumber\\
&=I-\frac{\xi^{-2}t^{2/3}}{w}\begin{pmatrix}-\frac 9{128} & -\frac {39\i}{512}\frac{x}{\xi}\\ -\frac {\i}8 \frac{\xi}{x}& \frac 9{128} \end{pmatrix}-\frac {t^{2/3}I_\s}w
\renewcommand*{\arraystretch}{1.25}
\begin{pmatrix} -\frac 3{16}&\frac{9\i}{128}\frac{x}{\xi} \\ \frac {\i}2 \frac{\xi}{x}& \frac 3{16}\end{pmatrix}+\mathcal O(x^{-4}t^{4/3}).\label{eq:jumpRcorrection}
\end{align}
We have proven that the contribution of the circle $|w|=1$ to the norm of $J_R-I$ is also of order $\mathcal O(x^{-2}t^{2/3})$, and this completes the proof.
\end{proof}

Similarly as before for $Y$ and for $S$, we obtain the following result for $R$ (which can be proven using the arguments from Section \ref{section:asY}):
we have $R(w)=I+R^{(1)}(x,t)w^{-1}+\mathcal O(w^{-2})$ as $w\to\infty$, with
\[R^{(1)}=\frac{1}{2\pi\i}\int_{\Gamma}(J_R(w)-I)\,\d w.\]
A closer inspection of the proof of 
Lemma \ref{lemma:JR} shows that the contribution to this integral of $\mathbb R\setminus[-1,1]$ is exponentially small in $t$ as $t\to 0$, such that the main contribution comes from the circle $|w|=1$.
By \eqref{eq:jumpRcorrection1}, we obtain 
\begin{align*}R^{(1)}(x,t)&=\frac{1}{2\pi\i}x^{\sigma_3/2}\left(t^{2/3}\begin{pmatrix}q_\s^0(\xi)&-\i r_\s^0(\xi)\\ \i p_\s^0(\xi)&-q_\s^0(\xi)
\end{pmatrix}\int_{|w|=1}\frac{\d w}{w}+\mathcal O(t^{4/3})\right)x^{-\sigma_3/2}\\
&=t^{2/3}x^{\sigma_3/2}\left(\begin{pmatrix}q_\s^0(\xi)&-\i r_\s^0(\xi)\\ \i p_\s^0(\xi)&-q_\s^0(\xi)
\end{pmatrix}+\mathcal O(t^{4/3})\right)x^{-\sigma_3/2},\end{align*}
as $t\to 0$
with $\delta\leq x\leq K$, while by \eqref{eq:jumpRcorrection}, we have 
\begin{align*}R^{(1)}(x,t)&=\frac{1}{2\pi\i}\xi^{-2}t^{2/3}\begin{pmatrix}-\frac 9{128} & -\frac {39\i}{512}\frac{x}{\xi}\\ -\frac {\i}8 \frac{\xi}{x}& \frac 9{128} \end{pmatrix}\int_{|w|=1}\frac{\d w}{w}+
\frac{t^{2/3}I_\s}{2\pi\i}
\renewcommand*{\arraystretch}{1.25}
\begin{pmatrix} -\frac 3{16}&\frac{9\i}{128}\frac{x}{\xi} \\ \frac {\i}2\frac{\xi}{x} & \frac 3{16}\end{pmatrix}
\int_{|w|=1}\frac{\d w}{w}+\mathcal O(x^{-4}t^{4/3})\\
&={\xi^{-2}t^{2/3}}\begin{pmatrix}-\frac 9{128} & -\frac {39\i}{512}\frac{x}{\xi}\\ -\frac {\i}8\frac{\xi}{x} & \frac 9{128} \end{pmatrix}+{t^{2/3}I_\s}
\renewcommand*{\arraystretch}{1.25}
\begin{pmatrix} -\frac 3{16}&\frac{9\i}{128}\frac{x}{\xi} \\ \frac {\i}2 \frac{\xi}{x}& \frac 3{16}\end{pmatrix}+\mathcal O(x^{-4}t^{4/3})\end{align*}
for $Mt^{1/3}\leq x\leq \delta$.   
Using the asymptotics for $p_\s^0, q_\s^0$ given in \eqref{eq:asp0q0}, we can write this at once as
\begin{equation*}
R^{(1)}(x,t)={t^{2/3}}x^{\sigma_3/2}\left(\renewcommand*{\arraystretch}{1.2}\begin{pmatrix}q_\s^0(\xi)&\star\\ \i p_\s^0(\xi)&-q_\s^0(\xi)
\end{pmatrix}+\mathcal O(x^{-4}t^{4/3})\right)x^{-\sigma_3/2},
\end{equation*}
where $\star$ denotes an unimportant entry.

\subsection{Asymptotics for $u_\sigma$ and $Q_\sigma$}
Plugging the above large $w$ and small $t$ asymptotics for $R$ into the identity
\[\wt\Psi_\s\left(w\right)=(xt^{-1/3})^{-\sigma_3/2}R\left(w\right)U\left(w\right)(xt^{-1/3})^{\sigma_3/2},\] valid for $|w|>1$, using \eqref{eq:wtPsiasymp}, and comparing term by term in the large $w$ expansions we obtain
\begin{equation*}
q_\s\left(x,t\right)=q_\s^0\left(\xi\right)+\O\left(x^{-4}t^{2/3}\right),\qquad p_\s\left(x,t\right)=p_\s^0\left(\xi\right)\frac{x}{\xi}+\O\left(x^{-3}t^{2/3}\right),
\end{equation*}
implying 
\be
\label{eq:uv}
u_\s\left(x,t\right)=-p^2_\s\left(x,t\right)-2q_\s\left(x,t\right)=
-p_\s^0\left(x\right)^2-2q_\s^0\left(x\right)+\O\left(x^{-3}t^{1/3}\right)=v_\s\left(x\right)\left(1+\O\left(x^{-1}t^{1/3}\right)\right),
\ee
uniformly in $Mt^{1/3}\leq x\leq K$ as $t\to 0$. 
Here we denoted
\begin{equation*}
v_\s(x)=-p_\s^0\left(x\right)^2-2q_\s^0\left(x\right),
\end{equation*}
as in \eqref{eq:asvrefined}; the properties of $v_\s(x)$ will be further studied in Section \ref{sec:v}.

To obtain asymptotics for $Q_\sigma(x,t)$, we integrate the identity \eqref{Qp} in $x$ between $rt^{1/3}$ and $x$, with $r>M$. This gives
\[\log Q_\s(x,t)=\log Q_\sigma(rt^{1/3},t)+\int_{rt^{1/3}}^x \left(p_\s(\xi,t)-\frac{\xi^2}{4t}\right)\d\xi.\]
Now we use part {\it (ii)} of Theorem \ref{thmQ} and the above asymptotics for $p_\s$ to obtain
\[\log Q_\s(x,t)=\log F_{\rm TW}(-r;1)+\int_{rt^{1/3}}^x \left(p_\s^0(\xi)-\frac{\xi^2}{4t}+\mathcal O(\xi^{-2}t^{1/3})\right)\d\xi+\mathcal O(t^{1/3}),\]
and after a straightforward computation, it follows that we have uniformly for $r$ and $xt^{-1/3}$ sufficiently large,
\begin{multline*}\log Q_\s(x,t)=\log F_{\rm TW}(-r;1)+\frac{r^3}{12}+\frac{1}{8}\log r \\-\frac{x^3}{12t}-\frac{1}{8}\log (xt^{-1/3}) +\int_{rt^{1/3}}^x \left(p_\s^0(\xi)+\frac{1}{8\xi}\right)\d\xi+\mathcal O(x^{-1}t^{1/3})+\mathcal O(r^{-1}).\end{multline*}
Now, we can rely on the known large gap asymptotics for the Tracy-Widom distribution: from \cite{DIK, BBdF}, we know that
\[\lim_{r\to +\infty}\left(\log F_{\rm TW}(-r;1)+\frac{r^3}{12}+\frac{1}{8}\log r\right)=\frac{\log 2}{24} +\log \zeta'(-1),\]
where $\zeta'$ is the derivative of Riemann's zeta function.
Hence, if we let $r\to\infty$ slow enough such that $rt^{1/3}\to 0$, we obtain
\[\log Q_\s(x,t)=-\frac{x^3}{12t}-\frac{1}{8}\log (xt^{-1/3}) +\frac{\log 2}{24} +\log \zeta'(-1)
+\int_{0}^x \left(p_\s^0(\xi)+\frac{1}{8\xi}\right)\d\xi+\mathcal O(x^{-1}t^{1/3}).\]
For this, we also used the asymptotics \eqref{eq:asp0q0} for $p_\s^0$. The integral in this expression is uniformly bounded for $x\leq K$, for any $K>0$. Using the identity $\partial_\xi p_\s^0(\xi)=v_\s(\xi)$, which we will prove below (see \eqref{eq:idpq0}),
we can also write this as
\[\log Q_\s(x,t)=-\frac{x^3}{12t}-\frac{1}{8}\log (xt^{-1/3}) +\frac{\log 2}{24} +\log \zeta'(-1)
+\int_{0}^x (x-\xi) \left(v_\s(\xi)-\frac{1}{8\xi^2}\right)\d\xi+\mathcal O(x^{-1}t^{1/3}),\]
as $t\to 0$ with $Mt^{1/3}\leq x\leq K$.

\subsection{Proof of Theorem \ref{thmv}}\label{sec:v}

We first note that \eqref{as:v} has already been proven above, see \eqref{eq:asvrefined}, as a consequence of the small $\xi$ asymptotics for the model parametrix $\Phi_\s$.

The proof of the remaining properties of $v_\s$ is parallel to the proof of \eqref{eq:u2}--\eqref{eqthm:phiAi} in Theorem \ref{theorem:main}.
By operator theoretic methods analogous to those used in Section \ref{sec:3}, one can prove that $\Phi_\s$ is differentiable in $\xi$ and that the asymptotics of $\pa_\xi\Phi_\s(z)$ for large $z$ can be obtained by formally differentiating \eqref{eq:Phiasymp}. It is then easy to show that $\wt B_0\left(z;\xi\right):=\left(\pa_x\Phi_\s\left(z;\xi\right)\right)\Phi^{-1}_\s\left(z;\xi\right)$ is a linear function of $z$;
\begin{equation*}
\wt B_0\left(z;\xi\right)=\begin{pmatrix}
p_\s^0\left(\xi\right) & \i z + 2 \i q_\s^0\left(\xi\right) \\ -\i & -p_\s^0\left(\xi\right)
\end{pmatrix},
\end{equation*}
where $p_\s^0,q_\s^0,r_\s^0$ are as in \eqref{eq:Phiasymp}.
It also follows that $\Res_{z=0}\wt B_0\left(z;\xi\right)=\Res_{z=0}\left(\pa_x\Phi_\s\left(z;\xi\right)\right)\Phi^{-1}_\s\left(z;\xi\right)=0$; the (2,1)-entry of this identity implies
\be
\label{eq:idpq0}
v_\s\left(\xi\right)=\pa_\xi p_\s^0\left(\xi\right)=-p_\s^0\left(\xi\right)^2-2q_\s^0\left(\xi\right),
\ee
in complete analogy with \eqref{upQ}.

Setting
\be
\label{Theta0}
\Theta_0\left(z;\xi\right)=\e^{\i\pi\s_3/4}\begin{pmatrix}
1 & -\i p_\s^0\left(\xi\right) \\ 0 & 1
\end{pmatrix}\Phi_\s\left(z;\xi\right)\e^{-\i\pi\s_3/4}
\ee
we have, exactly as in Proposition \ref{prop:KdVLaxpair},
\begin{equation*}
\pa_\xi\Theta_0\left(z;\xi\right)=B_0\left(z;\xi\right)\Theta_0\left(z;\xi\right),
\end{equation*}
with
\begin{equation*}
B_0\left(z;\xi\right)=\begin{pmatrix}
0 & -z+\pa_\xi p_\s^0\left(\xi\right)-p_\s^0\left(\xi\right)^2-2q_\s^0\left(\xi\right) \\ 1 & 0
\end{pmatrix}=\begin{pmatrix}
0 & -z+2v_\s\left(\xi\right) \\ 1 & 0
\end{pmatrix}.
\end{equation*}
Therefore we can write
\be
\label{Theta00}
\Theta_0\left(z;\xi\right)=\sqrt{2\pi}
\renewcommand*{\arraystretch}{1.2}
\begin{pmatrix}
\pa_\xi\psi\left(z;\xi\right) & -\pa_\xi\wt\psi\left(z;\xi\right)\\ -\psi\left(z;\xi\right) & \wt\psi\left(z;\xi\right)
\end{pmatrix}
\ee
where $\psi=\psi_\s$ and $\wt\psi=\wt\psi_\s$ solve the Schr\"odinger equation
\begin{equation*}
\pa_\xi^2\begin{pmatrix}\psi\left(z;\xi\right) \\ \wt\psi\left(z;\xi\right)\end{pmatrix}=\left(z-2v\left(\xi\right)\right)\begin{pmatrix}\psi\left(z;\xi\right) \\ \wt \psi\left(z;\xi\right)\end{pmatrix},
\end{equation*}
proving in particular \eqref{eq:psiSchrodinger}. The asymptotic relation \eqref{eq:psibessel} follows from
\begin{equation*}
-\sqrt{2\pi}\psi\left(z;\xi\right)=\left(\Theta_0\right)_{2,1}\left(z;\xi\right)\sim -\frac{\e^{\xi z^{1/2}}}{\sqrt 2 z^{1/4}}\sim-\sqrt{\pi\xi}\,{\rm I}_0\left(\xi\sqrt z\right),\quad |\arg z|<\pi-\delta,
\end{equation*}
where we used first the asymptotics for $\Phi_\s$ in \eqref{eq:Phiasymp} and \eqref{Theta0}, and then the asymptotics ${\rm I}_0\left(\zeta\right)\sim\frac{\e^{\zeta}}{\left(2\pi\zeta\right)^{1/2}}$ of the Bessel function for $\zeta\to\infty$ within $|\arg\zeta|<\frac{\pi-\delta}2$, for all $\delta>0$.

Finally, by a residue computation similar to that in the proof of Proposition \ref{prop:proofu2}, we have
\begin{equation*}
\frac 1{2\pi\i}\int_\R \Theta_0\left(z;\xi\right)\begin{pmatrix}
 0 & -\i \\ 0 & 0
\end{pmatrix}\Theta_0^{-1}\left(z;\xi\right)\d \s(r)=\Res_{z=\infty}\pa_z\Theta_0\Theta_0^{-1}.
\end{equation*}
Using \eqref{Theta00} and computing the residue at infinity by \eqref{eq:Phiasymp}, the $(2,1)$-entry of this identity gives
\begin{equation*}
\int_\R \psi^2\left(r;\xi\right)\d\s(r)=\frac \xi 2,
\end{equation*}
which proves \eqref{eq:psiintegral}. Moreover, differentiating it twice and using the Schr\"odinger equation, we get
\begin{align*}
0&=\frac 12\int_\R \pa_\xi^2\left(\psi^2\left(r;\xi\right)\right)\d\s(r)=\int_\R \left[\left(\pa_\xi\psi\left(r;\xi\right)\right)^2+\psi\left(r;\xi\right)\pa_\xi^2\psi\left(r;\xi\right)\right]\d\s(r)
\\
&=\int_\R \left[\left(\pa_\xi\psi\left(r;\xi\right)\right)^2+\left(r-2v\left(\xi\right)\right)\psi^2\left(r;\xi\right)\right]\d\s(r)=\int_\R \left[\left(\pa_\xi\psi\left(r;\xi\right)\right)^2+r\psi^2\left(r;\xi\right)\right]\d\s(r)-\xi v\left(\xi\right),
\end{align*}
which proves \eqref{eq:v1}.

\appendix

\section{Solvability of the model RH problem for $\Phi_\sigma$}\label{appvanishinglemma}

We now establish the existence of the solution to the RH problem for $\Phi_\s$; to this end we rely on the following \emph{vanishing lemma}, which shows that there are no nontrivial solutions to the \emph{homogeneous} version of the RH problem for $\Phi_\s$.

\begin{lemma}[Vanishing lemma]\label{lemmavanishing}
Let $H$ be a $2\times 2$ matrix function satisfying the following conditions.
\begin{enumerate}[label=(\alph*)]
\item $H$ is analytic in $\C\setminus\R$.
\item $H$ has boundary values $H_{\pm}$ on $\mathbb R$ which are $L^2$ on any compact real set, and which are continuous on the real line except at the points $r_1, \ldots ,r_k$, and they are related by
\be
\label{jumpH}
H_{+}\left(z\right)=H_{-}\left(z\right)\begin{pmatrix}1&1-\s\left(z\right)\\0&1\end{pmatrix}.
\ee
\item For $|\arg z|<\pi$, we have uniformly
\be
\label{normalizationH}
H\left(z\right) = \mathcal O\left(|z|^{-3/4}\right){\rm e}^{ x z^{1/2}\sigma_3} \; \text{as} \; z\to\infty.
\ee
\end{enumerate}
Then $H\left(z\right)=0$ identically in $z$.
\end{lemma}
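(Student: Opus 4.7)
The plan is to implement a Deift--Zhou style vanishing-lemma argument, by comparing $H$ with the explicit solution of the analogous RH problem in the reference case $\s = \s_1 := \chi_{(0,+\infty)}$. That explicit solution is given by $\Phi_{\s_1}(z;x) = x^{-\s_3/2}\Phi_\Bessel(x^2 z)$, cf.\ \eqref{eq:Phisigma1vsPhiBessel}. First I would introduce the auxiliary matrix
\[
M(z) := H(z)\,\Phi_{\s_1}(z;x)^{-1}, \qquad z \in \C\setminus\R,
\]
which is analytic in $\C\setminus\R$. A direct computation of its boundary values from the jump relations of $H$ and $\Phi_{\s_1}$ gives $M_+(z) = M_-(z) J_M(z)$ with
\[
J_M(z) = \Phi_{\s_1,-}(z) \begin{pmatrix} 1 & \s_1(z) - \s(z) \\ 0 & 1 \end{pmatrix} \Phi_{\s_1,-}(z)^{-1}.
\]
The crucial gain from this undressing is that the exponential factors $e^{x z^{1/2}\s_3}$ prescribed in the asymptotics of $H$ and of $\Phi_{\s_1}^{-1}$ cancel against one another, yielding the decay $M(z) = O(|z|^{-1/2})$ as $z \to \infty$ in any closed subsector of $|\arg z|<\pi$.

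Next I would form the Hermitian sesquilinear combination
\[
N(z) := M(z)\,\overline{M(\bar z)}^T, \qquad z \in \C\setminus\R,
\]
which is analytic off $\R$, satisfies $N(z) = O(|z|^{-1})$ at infinity, and whose boundary values on $\R$ are related by $N_+(x) + N_-(x) = M_-(x)\bigl(J_M(x) + J_M(x)^*\bigr)\overline{M_-(x)}^T$. The aim is then to combine two facts: first, a Cauchy contour integration on a large semicircle, together with a careful identification of the leading residue at infinity, should yield $\int_\R \tr\bigl(N_+(x) + N_-(x)\bigr)\,\d x = 0$; and second, the integrand is pointwise Hermitian positive-semidefinite---a fact that ultimately traces back (via the rank-one structure $J_M - I = (\s_1 - \s)\,\Phi_{\s_1,-} E_{12}\Phi_{\s_1,-}^{-1}$, where $E_{12}=\bigl(\begin{smallmatrix}0 & 1 \\ 0 & 0\end{smallmatrix}\bigr)$, together with the identity $M_-(x)\Phi_{\s_1,-}(x) e_1 = h_{1,-}(x)$, with $h_{1,-}$ the first column of $H_-$) to the positivity $J + J^* = \bigl(\begin{smallmatrix}2 & 1-\s \\ 1-\s & 2\end{smallmatrix}\bigr) \succeq I$ of the original jump matrix. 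These two ingredients would force $M_-(x) = 0$ for a.e.\ $x\in\R$, hence $M\equiv 0$, and finally $H \equiv 0$.

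The main obstacle is that both the pointwise positivity of $N_+ + N_-$ and the vanishing of its trace integral are subtle and require careful justification. The undressing matrix $\Phi_{\s_1}$ is not unitary and its columns have exponential growth and widely different scales along $\R$, so $J_M + J_M^*$ does not automatically inherit positivity from $J + J^*$; one must leverage the rank-one structure of $J_M - I$ and the interplay between $M_-$ and $\Phi_{\s_1,-}$ to transfer the positivity back to the scalar inequality for $J + J^*$. The vanishing of $\int_\R \tr(N_+ + N_-)\,\d x$ at infinity is similarly delicate because $N = O(|z|^{-1})$ is only borderline integrable on a large semicircle; one must extract the precise coefficient of $1/z$ in the asymptotic expansion of $\tr N$ at infinity, or use a symmetry of the RH data under $z\mapsto\bar z$, to verify that the contribution from the closing arc cancels. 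The behavior at the discontinuity points $r_j$ of $\s$, where the boundary values of $H$ have at worst logarithmic singularities, is controlled by the local $L^2$ hypothesis and contributes only removable singularities to $M$ and $N$.
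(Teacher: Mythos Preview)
Your plan takes a genuinely different route from the paper, and the detour through the Bessel parametrix creates exactly the two obstacles you identify (borderline decay and non-obvious positivity) without offering a clear path through them. The paper bypasses both issues by working directly with $H$: it sets
\[
M(z)=H(z)\begin{pmatrix}0&-1\\1&0\end{pmatrix}H^\dagger(\bar z),
\]
and the antisymmetric middle factor is the whole trick. Since $e^{xz^{1/2}\s_3}\bigl(\begin{smallmatrix}0&-1\\1&0\end{smallmatrix}\bigr)e^{xz^{1/2}\s_3}=\bigl(\begin{smallmatrix}0&-1\\1&0\end{smallmatrix}\bigr)$, the exponential factors cancel and one gets $M(z)=O(|z|^{-3/2})$ --- genuinely integrable at infinity, so $\int_\R M_+(z)\,\d z=0$ by Cauchy with no residue subtlety. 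Substituting the jump \eqref{jumpH} and adding the adjoint collapses the identity to
\[
\int_\R (1-\s(z))\begin{pmatrix}|H_{11,-}|^2&H_{11,-}\overline{H_{21,-}}\\\overline{H_{11,-}}H_{21,-}&|H_{21,-}|^2\end{pmatrix}\d z=0,
\]
so $(1-\s)H_{j1,-}\equiv 0$ on $\R$; the jump then vanishes, $H$ is entire, and Liouville finishes the argument column by column.

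In your scheme the undressing $M=H\Phi_{\s_1}^{-1}$ costs a factor $|z|^{1/4}$ from $z^{-\s_3/4}$, so $N=MM^*(\bar\cdot)$ is only $O(|z|^{-1})$; and since $\Phi_{\s_1,-}$ is not unitary, $J_M+J_M^*=\Phi_{\s_1,-}J\Phi_{\s_1,-}^{-1}+(\Phi_{\s_1,-}^{-1})^*J^*\Phi_{\s_1,-}^*$ does not inherit positivity from $J+J^*$ --- your suggested rank-one rewriting still leaves a factor $(\Phi_{\s_1,-}^*\Phi_{\s_1,-})^{-1}$ that spoils the reduction to $H_-JH_-^*$. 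Neither obstacle is insurmountable in principle, but both are artifacts of the undressing; the paper's direct construction avoids them entirely and is considerably shorter.
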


\begin{proof}
The proof follows standard arguments, see for example \cite{DKMVZ2}. Set
\begin{equation*}
M\left(z\right)=H\left(z\right)\begin{pmatrix}
0 & -1 \\ 1 & 0
\end{pmatrix}H^\dagger\left(\overline z\right),
\end{equation*}
where $H^\dagger$ denotes the conjugate transpose of $H$.
Then $M\left( z\right)$ is an analytic function of $ z\in\C\setminus\R$, and as $ z\to\infty$ we have, from \eqref{normalizationH}, $M\left( z\right)=\O\left( |z|^{-3/2}\right)$.
Therefore by Cauchy's theorem we have
\begin{equation*}
\int_\R M_+\left( z\right)\d z=\int_\R H_+\left( z\right)\begin{pmatrix}
0 & -1 \\ 1 & 0
\end{pmatrix}H_-^\dagger\left( z\right)\d z=0
\end{equation*}
implying
\be
\label{vanishingidentity1}
\int_{-\infty}^{+\infty} H_-\left( z\right)\begin{pmatrix}
1 & 1-\s\left(z\right) \\ 0 & 1
\end{pmatrix}\begin{pmatrix}
0 & -1 \\ 1 & 0
\end{pmatrix} H_-^\dagger\left( z\right)\d z
=\int_{-\infty}^{+\infty} H_-\left( z\right)\begin{pmatrix}
1-\s\left(z\right) & -1 \\ 1 & 0
\end{pmatrix}H_-^\dagger\left( z\right)\d z=0.
\ee
Adding \eqref{vanishingidentity1} and its adjoint we obtain
\begin{multline*}
\int_{-\infty}^{+\infty} H_-\left( z\right)\begin{pmatrix}
1-\s\left(z\right) & 0 \\ 0 & 0
\end{pmatrix}H_-^\dagger\left( z\right)\d z
\\
=\int_{-\infty}^{+\infty}\left(1-\s\left(z\right)\right)\begin{pmatrix}
|H_{11,-}\left(z\right)|^2 & H_{11,-}\left(z\right)\overline{H}_{21,-}\left(z\right) \\ \overline{H}_{11,-}\left(z\right)H_{21,-}\left(z\right) & |H_{21,-}\left(z\right)|^2
\end{pmatrix}\d z=0,
\end{multline*}
implying $\left(1-\s\left(z\right)\right)H_{11,-}\left(z\right)=0$ and $\left(1-\s\left(z\right)\right)H_{21,-}\left(z\right)=0$ identically for $z$ on the real line; it follows that
\begin{equation*}
H_+\left(z\right)-H_-\left(z\right)=H_-\left(z\right)\begin{pmatrix} 0 & 1-\s\left(z\right) \\ 0 & 0\end{pmatrix}=\begin{pmatrix} 0 & \left(1-\s\left(z\right)\right)H_{11,-}\left(z\right) \\ 0 & \left(1-\s\left(z\right)\right)H_{21,-}\left(z\right)\end{pmatrix}=0,\qquad z\in\R,
\end{equation*}
and so, by the Schwartz reflection principle, $H\left(z\right)$ is analytic throughout the whole complex plane (possible poles at $r_1,\dots,r_k$ are ruled out by the $L^2$ condition on the boundary values of $H$). Therefore $H_{11}\left(z\right)$ and $H_{21}\left(z\right)$ vanish identically, because they are entire and they vanish along some nonzero interval of $\R$. It follows from this together with the jump condition \eqref{jumpH} that $H_{12}\left(z\right)$ and $H_{22}\left(z\right)$ are entire functions, which by \eqref{normalizationH} behave as $\O\left(z^{-3/4}\e^{-xz^{1/2}}\right)$ when $z\to\infty$ uniformly in all directions of the complex plane; because of our choice of branch for the square root, one concludes from the Liouville theorem that also $H_{12}\left(z\right)$ and $H_{22}\left(z\right)$ vanish identically.
\end{proof}

Finally we follow the standard argument to infer that the solution to the RH problem for $\Phi_\s$ exists; indeed, it is known that $\Phi_\s$ exists if and only if a certain Fredholm operator of index zero is invertible, and it is easy to see that the kernel of this operator is nontrivial if and only there exists a matrix $H$ with the properties listed in Lemma \ref{lemmavanishing}. We refer to \cite[App. A]{KamvissisMcLaughlinMiller2003} for a detailed explanation of this argument; in this respect we point out that the RH problem for $\Phi_\s$ can be transformed by elementary, though lengthy, transformations into an \emph{umbilical RH problem} in the sense of \cite{KamvissisMcLaughlinMiller2003}.


\paragraph{Acknowledgements.}
T.C. and G.R. were supported by the Fonds de la Recherche Scientifique-FNRS under EOS project O013018F.
M.C. was supported by the European Union Horizon 2020 research and innovation program under the Marie Sk\l odowska-Curie RISE 2017 grant agreement no. 778010 IPaDEGAN. 
The authors are grateful to Alexander Its for useful discussions.


\begin{thebibliography}{99}
\bibitem{AblowitzSegur}
Ablowitz, M. J., \& Segur, H.
``Exact linearization of a Painlevé transcendent''.
\emph{Phys. Rev. Lett.} 38, no. 20, 1103--1106, (1977). 
\bibitem{AdlerShiotavanMoerbeke}
Adler, M., Shiota, T., \& van Moerbeke, P. 
``Random matrices, Virasoro algebras, and noncommutative KP''.
\emph{Duke Math. J.} 94, no. 2, 379--431 (1998).
\bibitem{AmirCorwinQuastel} Amir, G., Corwin, I., \& Quastel, J. ``Probability distribution of the free energy of the continuum directed random polymer in 1+ 1 dimensions''. \emph{Comm. Pure Appl. Math.}, 64(4), 466--537 (2011).
\bibitem{BBdF} Baik, J., Buckingham, R., \& DiFranco, J. ``Asymptotics of Tracy-Widom distributions and the total integral of a Painlevé II function''. \emph{Comm. Math. Phys.}, 280(2), 463--497 (2008).
\bibitem{BaikDeiftJohansson}
Baik, J., Deift, P., \& Johansson, K.
``On the distribution of the length of the longest increasing subsequence of random permutations''.
\emph{J. Amer. Math. Soc.} 12, no. 4, 1119--1178, (1999).
\bibitem{BaikLiuSilva} Baik, J., Liu, Z., \& Silva, G. L. F. ``Limiting one-point distribution of periodic TASEP''. arXiv:2008.07024 (2020).
\bibitem{BC} Bertola, M., \& Cafasso, M. ``The transition between the gap probabilities from the Pearcey to the Airy process--a Riemann–Hilbert approach''. \emph{Int. Math. Res. Not. IMRN}, 2012(7), 1519-1568 (2012).
\bibitem{BeteaBouttier} Betea, D., \& Bouttier, J. ``The periodic Schur process and free fermions at finite temperature''. \emph{Math. Phys. Anal. Geom.}, 22(1), 3 (2019).
\bibitem{BCI} Bogatskiy, A., Claeys, T. \& Its, A. ``Hankel determinant and orthogonal polynomials for a {G}aussian weight with a discontinuity at the edge". \emph{Comm. Math. Phys.}, 347(1), 127--162 (2016).
\bibitem{BorodinSixVertex} Borodin, A. ``Stochastic higher spin six vertex model and Macdonald measures''. \emph{J. Math. Phys.}, 59(2), 023301 (2018).
\bibitem{BorodinGorin}Borodin, A., \& Gorin, V. ``Moments Match between the KPZ Equation and the Airy Point Process''. \emph{SIGMA Symmetry Integrability Geom. Methods Appl.}, 12, (2016).
\bibitem{Bothner} Bothner, T. ``On the origins of Riemann-Hilbert problems in mathematics''. To appear in \emph{Nonlinearity}. arXiv:2003.14374 (2020).
\bibitem{BothnerCafassoTarricone} Bothner, T., Cafasso, M. \& Tarricone, S. ``Momenta spacing distributions in anharmonic oscillators and the higher order finite temperature Airy kernel''. arXiv:2101.03557 (2021).
\bibitem{CafassoClaeys}
Cafasso, M., \& Claeys, T.
``A Riemann-Hilbert approach to the lower tail of the KPZ equation''.
To appear in \emph{Comm. Pure Appl. Math.}
arXiv:1910.02493 (2019).
\bibitem{CLDR} Calabrese, P., Le Doussal, P., \& Rosso, A. ``Free-energy distribution of the directed polymer at high temperature''. {EPL (Europhysics Letters)}, 90(2), 20002 (2010).
\bibitem{CharlierClaeys} Charlier, C., \& Claeys, T. ``Large gap asymptotics for Airy kernel determinants with discontinuities.'' \emph{Commun. Math. Phys.} https://doi.org/10.1007/s00220-019-03538-w (2019).
\bibitem{CharlierDoeraene} Charlier, C., \& Doeraene, A., ``The generating function for the Bessel point process and a system of coupled Painlevé V equations.'' \emph{Random Matrices Theory Appl.} 8.03, 1950008 (2019).
\bibitem{ClaeysDoeraene} Claeys, T., \& Doeraene, A. ``The generating function for the Airy point process and a system of coupled Painlevé II equations''. \emph{Stud. Appl. Math.}, 140(4), 403--437 (2018).
\bibitem{Corwin} Corwin, I. ``The Kardar–Parisi–Zhang equation and universality class''. \emph{Random Matrices Theory Appl.}, 1(01), 1130001, (2012).
\bibitem{CorwinGhosal}Corwin, I., \& Ghosal, P. ``Lower tail of the KPZ equation''. \emph{Duke Math. J.} 169, no. 7 (2020).
\bibitem{Deift} Deift, P. ``Orthogonal polynomials and random matrices: a Riemann-Hilbert approach''. Courant Lecture Notes in Mathematics, vol. 3, New York University, Courant Institute of Mathematical Sciences, New York; American Mathematical Society,
Providence, RI (1999).
\bibitem{DIK} Deift, P., Its, A., \& Krasovsky, I. ``Asymptotics of the Airy-kernel determinant''. \emph{Comm. Math. Phys.}, 278(3), 643--678 (2008). 
\bibitem{DIZ} Deift, P., Its, A. R., \& Zhou, X. ``A Riemann-Hilbert approach to asymptotic problems arising in the theory of random matrix models, and also in the theory of integrable statistical mechanics''. \emph{ Ann. of Math. (2)}, 146(1), 149--235 (1997).
\bibitem{DKMVZ2}  P.  Deift,  T.  Kriecherbauer,  K.  T-R  McLaughlin,  S.  Venakides,  X.  Zhou.
Uniform asymptotics for polynomials orthogonal with respect to varying exponential weights and applications to universality questions in random matrix theory.
\emph{Comm. Pure Appl. Math.} 52, no. 11, 1335--1425 (1999).
\bibitem{DZ} Deift, P., \& Zhou, X. ``A steepest descent method for oscillatory Riemann-Hilbert problems''. \emph{Bull. Amer. Math. Soc. (N.S.)}, 26(1), 119--123 (1992).
\bibitem{DLMF}``NIST Digital Library of Mathematical Functions''. http://dlmf.nist.gov/, Release 1.0.27 of 2020-06-15. F. W. J. Olver, A. B. Olde Daalhuis, D. W. Lozier, B. I. Schneider, R. F. Boisvert, C. W. Clark, B. R. Miller, B. V. Saunders, H. S. Cohl, and M. A. McClain, eds.
\bibitem{Dotsenko} Dotsenko, V. ``Bethe ansatz derivation of the Tracy-Widom distribution for one-dimensional directed polymers". {EPL (Europhysics Letters)}, 90, 20003 (2010).
\bibitem{DubrovinMinakov} Dubrovin, B., \& Minakov, A. ``On a class of compact perturbations of the special pole-free joint solution of KdV and $P_2^I$''. arXiv:1901.07470 (2019). 
\bibitem{Forrester}
Forrester, P. J. ``Log-gases and random matrices''.
London Mathematical Society Monographs Series, 34.
\emph{Princeton University Press, Princeton, NJ} (2010).
\bibitem{HastingsMcLeod}
Hastings, S. P., \& McLeod, J. B. ``A boundary value problem associated with the second Painlevé transcendent and the Korteweg-de Vries equation''. \emph{Arch. Rational Mech. Anal. }73 , no. 1, 31--51, (1980). 
\bibitem{ImamuraSasamoto} Imamura, T., \& Sasamoto, T. ``Determinantal structures in the O’Connell–Yor directed random polymer model''. \emph{J. Stat. Phys.}, 163, 675--713 (2016).
\bibitem{IIKS} Its, A. R., Izergin, A. G., Korepin, V. E., \& Slavnov, N. A. ``Differential equations for quantum correlation functions''. \emph{Internat. J. Modern Phys. B}, 4(05), 1003--1037 (1990).
\bibitem{ItsSukhanov}
Its, A. R., \& Sukhanov, V. ``Large time asymptotics for the cylindrical Korteweg-de Vries equation. I.''
\emph{Nonlinearity}, 33, no.10 (2020).
\bibitem{Johansson} Johansson, K. ``From {G}umbel to {T}racy-{W}idom". \emph{Probab. Theory Related Fields}, 138(1-2), 75--112 (2007).
\bibitem{KamvissisMcLaughlinMiller2003}Kamvissis, S., McLaughlin, K. D. T.-R., Miller, P. D.
``Semiclassical soliton ensembles for the focusing nonlinear Schr\"{o}dinger equation".
\emph{Annals of Mathematics Studies}, 154. Princeton University Press, Princeton, NJ (2003).
\bibitem{Krajenbrink}
Krajenbrink, A. ``From Painlev\'e to Zakharov-Shabat and beyond: Fredholm determinants and integro-differential hierarchies''. arXiv:2008.01509 (2020).
\bibitem{KrajenbrinkLeDoussal}
Krajenbrink, A., \& Le Doussal, P. 
``Simple derivation of the $(-\lambda H)^{5/2}$ tail for the 1D KPZ equation''.  \emph{J. Stat. Mech. Theory Exp.}, (2018):  063210. 
\bibitem{KPZKPDoussal}Le Doussal, P. ``Large deviations for the Kardar–Parisi–Zhang equation from the Kadomtsev–Petviashvili equation''. \emph{J. Stat. Mech. Theory Exp.}, 2020.4 (2020): 043201.
\bibitem{L1} Lenard, A. ``Correlation functions and the uniqueness of the state in classical statistical mechanics''. {\em Comm. Math. Phys.}, 30, 35--44
(1973).
\bibitem{L2} Lenard, A. ``States of classical statistical mechanical systems of infinitely many particles. I''. {\em Arch. Rational Mech. Anal.}, 59, no. 3, 219--239
(1975).
\bibitem{L3} Lenard, A. ``States of classical statistical mechanical systems of infinitely many particles. II. Characterization of correlation
measures''. {\em Arch. Rational Mech. Anal.}, 59, no. 3, 241--256 (1975).
\bibitem{LNR} Liechty, K., Nguyen, G.B., \& Remenik, D. ``Airy process with wanderers, KPZ fluctuations, and a deformation of the Tracy–Widom GOE distribution''. arXiv:2009.07781 (2020).
\bibitem{LiechtyWang}Liechty, K., \& Wang, D. ``Asymptotics of free fermions in a quadratic well at finite temperature and the Moshe-Neuberger-Shapiro random matrix model''. \emph{Ann. Inst. Henri Poincar\'e Probab. Statist.}, 56, no.2 (2020).
\bibitem{LyuChen} Lyu, S., \& Chen, Y. ``Gaussian unitary ensemble with two jump discontinuities, PDEs and the coupled Painlev\'e II and IV systems''. arXiv:2006.07596 (2020).
\bibitem{MNS} Moshe, M., Neuberger, H., \& Shapiro, B. ``Generalized ensemble of random matrices''. \emph{Phys. Rev. Lett.}, 73(11), 1497--1500 (1994). 
\bibitem{PoppeSattinger} P\"oppe, C., \& Sattinger, D. H. ``Fredholm Determinants and the $\tau$ Function for the Kadomtsev-Petviashvili Hierarchy". \emph{Publ. Res. Inst. Math. Sci.}, 24.4, 505--538 (1988).
\bibitem{QuastelRemenik} Quastel, J, \& Remenik, D. ``KP governs random growth off a one dimensional substrate''. arXiv:1908.10353 (2019).
\bibitem{SasamotoSpohn} Sasamoto, T., \& Spohn, H. ``Exact height distributions for the KPZ equation with narrow wedge initial condition''. \emph{Nuclear Phys. B}, 834(3), 523--542 (2010).
\bibitem{SasorovMeersonProlhac} Sasorov, P., Meerson, B., \& Prolhac, S. ``Large deviations of surface height in the $1+1$-dimensional Kardar--Parisi--Zhang equation: exact long-time results for $\lambda H < 0$''. \emph{J. Stat. Mech. Theory Exp.}, 2017(6), 063203 (2017).
\bibitem{Soshnikov} Soshnikov, A. ``Determinantal random point fields''. \emph{Russian Math. Surveys}, 55(5), 923 (2000). 
\bibitem{TW} Tracy, C. A., \& Widom, H. ``Level-spacing distributions and the Airy kernel''. \emph{Comm. Math. Phys.}, 159(1), 151--174 (1994).
\bibitem{Tsai} Tsai, L. C. ``Exact lower tail large deviations of the KPZ equation''. arXiv:1809.03410 (2018). 
\bibitem{WuXu} Wu, X.-B., \& Xu, S.-X., ``Gaussian unitary ensemble with jump discontinuities and the coupled Painlev\'e II and IV systems''. arXiv:2002.11240 (2020).
\bibitem{XuDai} Xu, S.-X., \& Dai, D. ``Tracy-Widom distributions in critical unitary random matrix ensembles and the coupled Painlev\'e II system''. \emph{Comm. Math. Phys.}, 365(2), 515--567 (2019)
\end{thebibliography}
\end{document}